\newtheorem{theorem}{Theorem}
\newtheorem{assumption}{Assumption}
\newtheorem{condition}{Condition}
\newtheorem{definition}{Definition}
\newtheorem{lemma}{Lemma}
\newtheorem{step}{Step}
\newenvironment{proof}[1][Proof]{\textbf{#1.} }{\ \rule{0.5em}{0.5em}}
\begin{document}
\doublespacing
\title{\textbf{Testing Forecast Accuracy of Expectiles and Quantiles with the Extremal Consistent Loss Functions}\thanks{We thank seminar participants in 2016 macroeconometric modelling workshop (Academia Sinica), the 1st International Conference on Econometrics and Statistics (HKUST), 2017 European meeting of the Econometric society (University of Lisbon), 10th Annual Meeting of Taiwan Econometric Society, 2018 IAAE Annual Conference (UQAM), CRETA workshop (National Taiwan University) and National Chengchi University for helpful comments. 
 }
}

\author{Yu-Min Yen\thanks{Associate professor, Department of International Business, National Chengchi University, 64, Section 2, Zhi-nan Road, Wenshan, Taipei 116, Taiwan. Email: \texttt{yyu\_min@nccu.edu.tw}} and Tso-Jung Yen\thanks{Assistant research fellow, Institute of Statistical Science, Academia Sinica. Address: 128 Academia Road, Section 2, Nankang, Taipei 11529, Taiwan. E-mail: \texttt{tjyen@stat.sinica.edu.tw}.} 
}
\maketitle
\begin{abstract}Forecast evaluations aim to choose an accurate forecast for making decisions by using loss functions. However, different loss functions often generate different ranking results for forecasts, which complicates the task of comparisons. In this paper, we develop statistical tests for comparing performances of forecasting expectiles and quantiles of a random variable under consistent loss functions. The test statistics are constructed with the extremal consistent loss functions of \citet{EGJK_2016}. The null hypothesis of the tests is that a benchmark forecast at least performs equally well as a competing one under all extremal consistent loss functions. It can be shown that if such a null holds, the benchmark will also perform at least equally well as the competitor under all consistent loss functions. Thus under the null, when different consistent loss functions are used, the result that the competitor does not outperform the benchmark will not be altered. We establish asymptotic properties of the proposed test statistics and propose to use the re-centered bootstrap to construct their empirical distributions. Through simulations, we show the proposed test statistics perform reasonably well. We then apply the proposed method on (1) re-examining abilities of some often-used predictors on forecasting risk premium of the S\&P500 index; (2) comparing performances of experts' forecasts on annual growth of U.S. real gross domestic product; (3) evaluating performances of estimated daily value at risk of the S\&P500 index.  
\\\\
\textbf{KEYWORDS: Consistent loss function, Expectile, Extremal consistent loss function, Quantile}\\
\textbf{JEL codes: C12, C53, E17.}\\
\textbf{AMS 2010 Classifications: 62G10, 62M20, 62P20.}
\clearpage
\end{abstract}
\section{Introduction}

When evaluating performances of a benchmark and a competing forecasts for a target functional of a random variable (e.g., conditional expectation), typically we can compare expected values of a loss function  (e.g., the squared error loss) evaluated with the two forecasts and the random variable. We say that the competitor outperforms the benchmark under a loss function if the expected value of the loss function for the former is lower than that for the latter. There are many loss functions can be chosen for comparing forecast performances. Such choices may reflect forecast users' concerns on cost of wrong forecasts in the future \citep{Granger_1969, GN_1986}. 
For example, when controlling downside risk of purchasing an asset, one may focus on negative forecast errors\footnote{We follow the convention to define a forecast error as realization of the random variable minus the forecast.} of the asset's conditional expected return rather than their positive counterparts. In this situation, it would be suitable to choose a loss function that penalizes more on the negative forecast errors.

An important guideline for choosing a loss function for evaluating forecasts is that the loss function should be consistent \citep{Gneiting_2011,Patton_2015}. If the target functional
can be obtained by minimizing expectation of a certain loss function, then we say the loss function is a consistent loss function for the target functional. If a target functional is the only one minimizer of the expectation of a consistent loss function, then this target functional is called an elicitable target functional and the loss function is called strictly consistent (for the elicitable target functional). 

The criterion of consistency reduces the set of loss functions for comparing forecast performances. However, for an elicitable target functional, there may still exist infinitely many corresponding consistent loss functions. \citet{Patton_2015} shows that using different consistent loss functions may yield different ranking results for two forecasts, unless (1) they are issued by using correctly specified models, and (2) the information used for generating one forecast is a subset of that used for generating the other. However, conditions (1), (2) or both often do not hold in practice. If either condition (1) or (2) is violated, 
or estimated forecast models have estimation errors, 
then using different consistent loss functions may yield different ranking results, which complicates the task of evaluating forecast performances. 



In this paper we develop statistical tests for comparing performances of forecasting
expectiles and quantiles of a random variable under consistent loss
functions. The proposed tests can alleviate the aforementioned
difficulty when different consistent loss functions are used on
evaluating forecast performances. The test statistics are constructed
by using the extremal consistent loss functions of \citet{EGJK_2016}. The null hypothesis of the tests is that a benchmark forecast at least performs equally well as a competing one under all extremal consistent loss functions. It can be shown that if such a null holds, the benchmark will also at least performs equally well as the competitor under all consistent loss functions, regardless whether the aforementioned
conditions (1) or (2) holds or not. Thus under the null hypothesis, using different
consistent loss functions will not alter the result that the competitor does not outperform the benchmark. On contrary, if this null hypothesis is rejected, we may see that the competitor outperforms the benchmark under certain consistent loss functions. 

The proposed tests may be suitable as a first-step check when the consistent loss function used to generate the competing forecast is unknown, such as that from a survey. In this situation, sometimes it is hard to fairly judge whether one forecast outperforms the other under a chosen consistent loss function. 
With the proposed test, the forecasts will have a fair chance to demonstrate their ability regardless which consistent loss function is used, since the proposed test verifies whether one forecast outperforms the other over all possible consistent loss functions. 

\citet{EGJK_2016} use the extremal consistent loss functions to graphically compare performances of two forecasts for the expectiles and quantiles. They term such a graph as a Murphy diagram. While the Murphy diagram is a useful tool, it only provides graphical evidence of the performance differences but gives no formal statistical justification. Our proposed tests can be viewed as formal statistical tests for testing such performance differences uniformly. In addition, our proposed tests are not like traditional forecast accuracy tests, such as the Diebold-Marino test \citep{DM_1995}, which use only one consistent loss function at a time. Rather our proposed tests seek to detect the performance differences between two forecasts over infinitely many possible consistent loss functions, which may be particularly important when the loss function used to generate the competing forecast is unknown.

We establish theoretical properties of the proposed test statistics under some mild conditions. \citet{West_1996} shows that if a loss function has some regular properties, it can be consistently estimated and the estimate is asymptotically normally distributed. However, the extremal consistent loss functions do not possess all the regular properties mentioned in \citet{West_1996}. In addition, the proposed test statistics have a form of Kolmogorov-Smirnov type. Thus analyzing theoretical properties of our proposed test statistics relies on using non-traditional techniques. 
We show that the test statistics have a non-degenerate asymptotic distribution related to a mean zero Gaussian process. To efficiently conduct the tests, we propose to use the re-centered bootstrap to construct empirical distributions of the test statistics. We then show validity of the bootstrap scheme by proving empirical distributions of the re-centered bootstrap test statistics converge to distributions of the re-centered sample test statistics.

We next conduct intensive simulations to understand how the proposed test statistics perform with finite samples. In the first simulation, 
we design a situation in which two forecasts for a conditional expectation perform equally well under the square error loss but differently under the exponential Bregman loss. In this situation, if we use the Diebold Marino test statistic with the squared error loss, we have a low probability to reject the null and it is unlikely to identify which forecast performs better than the other under the exponential Bregman loss. However, our proposed test statistic has a high probability to correctly detect such performance differences in this case. We further show that the proposed test statistics with the re-centered bootstrap work well in more realistic situations. 

We apply the proposed tests on three empirical studies. We first re-examine abilities of some often-used predictors on forecasting risk premium of the S\&P500 index. 
We find that evidence for these predictors outperforming historical average of excess returns is weak. 
We also compare performances of experts' forecasts on annual growth of U.S. real gross domestic product (RGDP) and find that the mean forecast of experts performs better than or at least equally well as an individual forecast. Finally, we evaluate different models' performances of forecasting daily value at risk (VaR) of the S\&P500 index and find that the CAViaR type models \citep{EM_2004} performs better than or at least equally well as the other two simple methods. 
All these empirical results are robust to choices of different consistent loss functions.  

Loss functions can be functions of forecast errors and other parameters. 
Such loss functions, together with some mild restrictions, are called the generalized loss functions \citep{Granger_1969,Granger_1999} and some relevant important results were derived, see \citet{EKT_2004}, \citet{DS_2015} and \citet{JCS_2016}. 
The class of the generalized loss functions nests some (but not all) consistent loss functions of forecasting the expectiles and quantiles as special cases, for example, the squared error loss and lin-lin (tick) loss. But some loss functions belonging to the class are not consistent loss functions for the expectiles and quantiles forecasts, for example, linex loss function of \citet{Varian_1975}
and double exponential loss function of \citet{Granger_1999}. Thus our proposed tests may be a complementary to forecast accuracy tests based on such a class of loss functions. 


Recently \citet{EK_2017} also propose tests to compare forecasts on the expectiles and quantiles based on the extremal consistent loss functions of \citet{EGJK_2016}. Our proposed method has several differences from theirs. First, empirical p-values of their test statistics are constructed by sign randomization and consequently have different theoretical and empirical properties than those of ours. 
More importantly, they test hypotheses of conditional performances of the forecasts, but our hypotheses focus on the unconditional performances.

The rest of the paper is organized as follows. In Section 2 we review
concepts of consistent loss functions and the extremal consistent
loss functions of \citet{EGJK_2016}. In Section 3 we introduce the proposed tests and establish their theoretical properties, and illustrate how to use the re-centered bootstrap to construct their empirical distributions for statistical inferences. In Section 4 we conduct simulation studies for examining performances of the test statistics in various situations. In Section 5 we use the proposed tests on the three empirical applications. 
Section 6 is for conclusions.

\section{Consistent loss functions for point forecasts}

Let $L\left(x,y\right)$ denote a loss function for evaluating a forecast for a target functional of a random variable. Following convention, we let the first argument of $L(x,y)$ be the forecast and the second argument be the random variable. For all pairs $\left(x,y\right)$, assume $L\left(x,y\right)\geq0$ and if $x=y$, $L\left(x,y\right)=0$. 
Let $\mathcal{F}$ denote a class of probability functions on a closed subset $D\subset\mathbb{R}$ and $F$ be an element in $\mathcal{F}$. Let $\lambda:\mathcal{F}\mapsto \mathbb{R}$ denote a statistical functional which maps $F\in \mathcal{F}$ to $\mathbb{R}$. The loss function $L(x,y)$ is consistent for a statistical functional $\lambda(F)$ if $ E_{F}\left[L\left(\lambda\left(F\right),Y\right)\right]\leq E_{F}\left[L\left(x,Y\right)\right] $
for all $F\in\mathcal{F}$, $x\in\mathbb{R}$ and a random variable $Y\in D$ and $Y\sim F$. 
The loss function $L(x,y)$ is \textit{strictly} consistent for the functional $\lambda(F)$ if 
\begin{equation}
\lambda\left(F\right)=\arg\min_{x}E_{F}\left[L\left(x,Y\right)\right]
\label{consistent}
\end{equation}
and $E_{F}\left[L\left(\lambda\left(F\right),Y\right)\right]=E_{F}\left[L\left(x,Y\right)\right]$ implies $x=\lambda\left(F\right)$. If $L(x,y)$ is a strictly consistent loss function and $\lambda(F)$ satisfies (\ref{consistent}), then $\lambda(F)$ is called elicitable. 

\subsection{Consistent loss functions for expectiles and quantiles}
The functionals $\lambda(F)$ we are interested in this paper are conditional expectiles and conditional quantiles.\footnote{We use the term ``conditional'' here since in forecast, the amount of information we can use is only up to current period and is not unlimited. Thus $F$ is a distribution conditioning on a limited amount of information and $\lambda(F)$ is a conditional statistical functional.}
The expectile of a random variable $Y\sim F$ at level $\alpha\in\left(0,1\right)$, called the $\alpha-$expectile of $Y$, can be obtained by solving $t$ in the following equation
\[
\frac{E_{F}\left[\left(t-Y\right)_{+}\right]}{E_{F}\left[\left(Y-t\right)_{+}\right]} = \frac{\alpha}{1-\alpha}.
\]
When $\alpha=0.5$, it is easy to see that $t$ is expectation of $Y$ under the distribution function $F$, $E_{F}\left[Y\right]$. \citet{Savage_1971} shows that a consistent loss function for an expectation of a random variable, denoted by $L^{E}\left(x,y\right)$, can be expressed as the following Bregman type function
\begin{equation}
L^{E}\left(x,y\right)=\phi\left(y\right)-\phi\left(x\right)-\phi^{\prime}\left(x\right)\left(y-x\right),\label{scoring_expectation}
\end{equation}
where $\phi(.)$ is a convex function and $\phi^{\prime}(.)$ is its subgradient. The consistent loss function $L^{E}(x,y)$ in (\ref{scoring_expectation}) nests some frequently used loss functions as special cases. With different specifications of $\phi(.)$ in (\ref{scoring_expectation}), we list examples of $L^{E}(x,y)$ in Table \ref{table1}, which include the squared error loss and the QLIKE loss \citep{Patton_2011}. Another interesting case in Table \ref{table1} is when $\phi\left(x\right)=x\log x+\left(1-x\right)\log\left(1-x\right)$
for $x\in\left[0,1\right]$, 
and this kind of consistent loss function 
is associated with the negative log likelihood for the logistic regression estimation.

For the $\alpha-$expectile of a random variable, \citet{Gneiting_2011} shows that the corresponding consistent loss function, denoted by $L_{\alpha}^{E}\left(x,y\right)$, can be expressed as
\begin{eqnarray}
L_{\alpha}^{E}\left(x,y\right)&=&\left|1\left\{y<x\right\}-\alpha\right|\times L^{E}(x,y)\nonumber\\
&=& \left|1\left\{y<x\right\}-\alpha\right|\times\left[\phi\left(y\right)-\phi\left(x\right)-\phi^{\prime}\left(x\right)\left(y-x\right)\right].
\label{scoring_expectile}
\end{eqnarray}
Combining with different forms of $L^{E}$ in Table \ref{table1}, we can obtain various loss functions for the $\alpha-$expectile forecasts. For example, if we set $\phi\left(t\right)=t^{2}$, $L_{\alpha}^{E}\left(x,y\right)$
becomes 
the asymmetric squared error loss for estimating the $\alpha-$expectile regression of \citet{NP_1987}. The $\alpha-$expectile regression can be applied to forecast the expectile-based Value at Risk (EVaR), which measures the relative cost of the expected margin shortfall. \citet{KYH_2009} show that the EVaR is a useful alternative risk measurement for extreme loss to the quantile based VaR.


The $\alpha-$quantile of a random variable $Y\sim F$, denoted by $q\left(\alpha\right)$, is defined as
\begin{equation}
q\left(\alpha\right):=\inf\left\{\tau:P\left(Y\leq \tau\right)\geq\alpha\right\},
\label{quantile}
\end{equation}
where $P(.)$ is the probability of $Y$. If the distribution function $F(y)$ is strictly monotonically
increasing and continuous, then $q\left(\alpha\right)=F^{-1}\left(\alpha\right)$. Quantile forecasts are important in risk managements. For example, the value at risk (VaR) are often constructed by using conditional quantile forecasts of an asset's return. 

Let $L^{Q}\left(x,y\right)=\zeta\left(x\right)-\zeta\left(y\right)$, where $\zeta(.)$ is a nondecreasing function. \citet{Thomson_1979} and \citet{Saerens_2000} show that a consistent loss function for the $\alpha-$quantile of a random variable, denoted by $L^{Q}_{\alpha}(x,y)$, can be expressed as 
\begin{eqnarray}
L^{Q}_{\alpha}(x,y)&=& (1\{y<x\}-\alpha)\times L^{Q}\left(x,y\right) \nonumber\\
&=& (1\{y<x\}-\alpha)\times\left[\zeta(x)-\zeta(y)\right].
\label{scoring_quantile}
\end{eqnarray}
The right hand side of (\ref{scoring_quantile}) is the generalized piecewise linear (GPL) function of order $\alpha$. Several examples of $L^{Q}(x,y)$ are listed in Table \ref{table2}. When $\zeta(t)=t$, $L^{Q}_{\alpha}(x,y)=(1\{y<x\}-\alpha)\left(x-y\right)$ is the lin-lin or asymmetric piecewise linear loss function, which can be used to estimate the $\alpha-$quantile regression \citep{KB_1978}. Another interesting case of $L^{Q}_{\alpha}(x,y)$ is the scaled lin-lin loss 
by setting $\zeta(t)=t/\alpha$ \citep{HE_2014}. When $Y$ is a continuous random variable, \citet{HE_2014} show that under distribution $F$, the expected scaled lin-lin loss with $x=q(\alpha)$ is   
\begin{equation}
E_{F}\left[\left(1\left\{ Y<q\left(\alpha\right)\right\} -\alpha\right)\left(\frac{q\left(\alpha\right)}{\alpha}-\frac{Y}{\alpha}\right)\right] =  E_{F}\left[Y\right]-\frac{1}{\alpha}E_{F}\left[1\left\{ Y<q\left(\alpha\right)\right\} Y\right].
\label{scaled_LinLin}
\end{equation}
The second term of right hand side of (\ref{scaled_LinLin}) is the expected shortfall of $Y$. Thus equation (\ref{scaled_LinLin}) provides a way to estimate the expected shortfall by subtracting the minimized expected scaled lin-lin loss from the expectation of $Y$. 
\subsection{Extremal consistent loss functions}
In this subsection we introduce the extremal consistent loss functions of \citet{EGJK_2016} for the $\alpha-$expectile and $\alpha-$quantile of a random variable. Let $\mathcal{L}_{\alpha}^{E}$ denote the class of consistent loss functions for the $\alpha-$expectile which admits the form of (\ref{scoring_expectile}). \citet{EGJK_2016} show that every consistent loss function $L_{\alpha}^{E}(x,y)\in\mathcal{L}_{\alpha}^{E}$
can be represented as 
\begin{equation}
L_{\alpha}^{E}\left(x,y\right)=\int_{-\infty}^{\infty}L_{\theta,\alpha}^{E}\left(x,y\right)dH\left(\theta\right),\label{scoring_expectile1}
\end{equation}
where $L_{\theta,\alpha}^{E}\left(x,y\right)$ is the extremal consistent loss function for the $\alpha-$expectile, which is given by 
\begin{equation}
L_{\theta,\alpha}^{E}\left(x,y\right) =  \left|1\left\{y<x\right\}-\alpha\right|\left[ \left(y-\theta\right)_{+}-\left(x-\theta\right)_{+}-1\left\{\theta<x\right\}\left(y-x\right)\right]. 
\end{equation}
It can be shown that $0\leq L_{\theta,\alpha}^{E}(x,y)\leq \max(\alpha,1-\alpha)\times|y-x|$. It is also easy to see that $L_{\theta,\alpha}^{E}(x,y)\in\mathcal{L}_{\alpha}^{E}$
if we set $\phi\left(t\right)=\left(t-\theta\right)_{+}$ in (\ref{scoring_expectile}). 
The representation of (\ref{scoring_expectile1}) states that every
consistent loss function for the $\alpha-$expectile is
a weighted sum of the extremal consistent loss function $L^{E}_{\theta,\alpha}(x,y)$. The representation of (\ref{scoring_expectile1})
is a Choquet-type mixture representation in functional analysis \citep{EGJK_2016}, in which $H(.)$ is a unique non-negative mixing measure which satisfies $dH\left(\theta\right)=d\phi^{\prime}\left(\theta\right)$
for $\theta\in\Theta\subseteq\mathbb{R}$, where $\phi^{\prime}(.)$ is the left-hand
derivative of the convex function $\phi(.)$ in (\ref{scoring_expectile}) and $\Theta$ is a bounded subset of $\mathbb{R}$. Also $\left(1-\alpha\right)[H\left(x\right)-H\left(y\right)]=\partial L_{\alpha}^{E}\left(x,y\right)/\partial y$ for $x>y$, where $\partial L_{\alpha}^{E}\left(x,y\right)/\partial y$ denotes the left-hand derivative with respect to $y$.

For the $\alpha-$quantile, let $\mathcal{L}_{\alpha}^{Q}$ denote the class of consistent loss functions for the $\alpha-$quantile which admits the form of (\ref{scoring_quantile}). Like the case of $L^{E}_{\alpha}$, \citet{EGJK_2016} show that every consistent loss function $L_{\alpha}^{Q}(x,y)\in\mathcal{L}_{\alpha}^{Q}$ also has a Choquet-type mixture representation 
\begin{equation}
L_{\alpha}^{Q}\left(x,y\right)=\int_{-\infty}^{\infty}L_{\theta,\alpha}^{Q}\left(x,y\right)dG\left(\theta\right),\label{scoring_quantile1}
\end{equation} 
where $L^{Q}_{\theta,\alpha}(x,y)$ is the extremal consistent loss function for the $\alpha-$quantile, which is given by
\begin{equation}
	L_{\theta,\alpha}^{Q}\left(x,y\right) =  \left(1\left\{ y<x\right\} -\alpha\right)\left(1\left\{ \theta<x\right\} -1\left\{ \theta<y\right\} \right).
\end{equation}  
It can be shown that $0\leq L_{\theta,\alpha}^{Q}(x,y)\leq \max(\alpha,1-\alpha)$. It also easy to see that  $L_{\theta,\alpha}^{Q}(x,y)\in\mathcal{L}_{\alpha}^{E}$ since it is the consistent loss function when $\zeta\left(t\right)=1\{\theta<t\}$ in (\ref{scoring_quantile}). 
In (\ref{scoring_quantile1}), $G(.)$ is a unique non-negative mixing measure which satisfies $dG\left(\theta\right)=d\zeta\left(\theta\right)$
for $\theta\in\Theta\subseteq\mathbb{R}$, where $\zeta(.)$ is the nondecreasing function in (\ref{scoring_quantile}) and $\Theta$ is a bounded subset of $\mathbb{R}$. Also $\left(1-\alpha\right)[G\left(x\right)-G\left(y\right)]= L_{\alpha}^{Q}\left(x,y\right)$ for $x>y$.
\subsection{Accuracy of the representations}
The representations (\ref{scoring_expectile1}) and (\ref{scoring_quantile1}) can be used to numerically approximate the consistent loss functions for the $\alpha-$expectile and $\alpha-$quantile forecasts. An accurate approximation from the representation is crucial for constructing the proposed test statistic. In this subsection we compare numerical values of several consistent loss functions with those obtained from using the representations of (\ref{scoring_expectile1}) and (\ref{scoring_quantile1}). For the $\alpha-$expectile, 
we choose the exponential (non-homogeneous) Bregman loss 
and the homogeneous Bregman loss 
for the comparisons. 
For the former, $dH\left(\theta\right)=\exp\left(a\theta\right)d\theta$
and for the latter, $
dH\left(\theta\right)=\left(b\left(b-1\right)\left|\theta\right|^{b-2}+b\delta\left(\theta\right)\left|x\right|^{b-1}\right)d\theta$, where $\delta\left(\theta\right)$ is the Dirac function. 
For the $\alpha-$quantile, 
we choose the lin-lin loss and the homogeneous (power) loss with order $c=2$ for the comparisons. 
For the former, $dG(\theta)=1$ and for the latter, $dG(\theta)=2\theta$. 

Let $N(\mu,\sigma^{2})$ denote the normal distribution with mean $\mu$ and variance $\sigma^{2}$ and $\chi(\kappa)$ denote the chi-square distribution with degree of freedom $\kappa$. 
For the $\alpha-$expectile, the simulated data for each comparison are 1000 pairs of $X\sim N(0,1)$ and 
$Y\sim N(0,1)$. For the $\alpha-$quantile, in the case of the lin-lin loss, the simulated data for each comparison are 1000 pairs of $X\sim N(0,1)$ and $Y\sim N(0,1)$. In the case of the homogeneous loss with order $c=2$, the data for each comparison are 1000 pairs of $X\sim \chi^{2}\left(1\right)$ and $Y\sim \chi^{2}\left(1\right)$. 

With pairs $(X,Y)$, we numerically evaluate integrals of (\ref{scoring_expectile1}) and (\ref{scoring_quantile1}) with the Trapezoid method. We then compare the numerical integrals with the corresponding consistent loss functions directly calculated with pairs $(X,Y)$. In Figure \ref{figure1}, left panel shows comparison results for the exponential Bregman loss with $\alpha=0.5$, $a=-1$, 0.3 and 1. Right panel shows those for the homogeneous Bregman loss with $\alpha=0.5$, $b=1.5$, 2 and 3. 
In Figure \ref{figure2}, left panel shows the comparison results for the lin-lin loss and right panel shows those for the homogeneous loss with $\alpha = 0.01$, 0.05 and 0.5. The solid line in each plot is a 45 degree line. From each figure, it can be seen that all pairs of value of the consistent loss function and that obtained from using the representation of (\ref{scoring_expectile1}) (or (\ref{scoring_quantile1})) almost lie on the 45 degree line, which suggests that the two are virtually identical and the representation of (\ref{scoring_expectile1}) (or (\ref{scoring_quantile1})) works well on approximating the corresponding consistent loss function.
\section{Forecast accuracy tests with the extremal consistent loss functions}

In this section we introduce the proposed tests and test statistics for comparing forecast
accuracy of the $\alpha-$expectile or $\alpha-$quantile under all consistent loss functions. 
Let $X_{1}$ be a benchmark and $X_{2}$ be a competing forecasts for the $\alpha-$expectile or the $\alpha-$quantile of a random variable $Y$. For forecasting the $\alpha-$expetile, under a consistent loss function $L_{\alpha}^{E}\in\mathcal{L}_{\alpha}^{E}$, we say that $X_{1}$ at least performs equally well as $X_{2}$ if 
\begin{equation}
E\left[L_{\alpha}^{E}\left(X_{1},Y\right)\right]\leq E\left[L_{\alpha}^{E}\left(X_{2},Y\right)\right].\label{outperformance}
\end{equation}
With the representation of (\ref{scoring_expectile1}), (\ref{outperformance}) can be expressed as 
\begin{equation}
\int_{-\infty}^{\infty}E\left[L_{\theta,\alpha}^{E}\left(X_{1},Y\right)\right]dH\left(\theta\right)\leq\int_{-\infty}^{\infty}E\left[L_{\theta,\alpha}^{E}\left(X_{2},Y\right)\right]dH\left(\theta\right).\label{outperformance1}
\end{equation}
Since for every $H\left(.\right)$, $dH\left(\theta\right)=d\phi^{\prime}\left(\theta\right)$ is
nonnegative for all $\theta\in\Theta$ and the functional form of the extremal consistent loss $L_{\theta,\alpha}^{E}(x,y)$ is independent of $H\left(.\right)$, a sufficient condition for $X_{1}$
at least performing equally well as $X_{2}$ as the $\alpha-$expectile
forecast under all $L_{\alpha}^{E}\in\mathcal{L}_{\alpha}^{E}$ is that $E\left[L_{\theta,\alpha}^{E}\left(X_{1},Y\right)\right]\leq E\left[L_{\theta,\alpha}^{E}\left(X_{2},Y\right)\right]$
holds for all $\theta$. 
Thus given $\alpha$, to see whether such a sufficient condition holds, 
we may test the following null hypothesis 
\begin{equation}
H_{0}:E\left[L_{\theta,\alpha}^{E}\left(X_{1},Y\right)\right]\leq E\left[L_{\theta,\alpha}^{E}\left(X_{2},Y\right)\right]\text{ for all }\theta.\label{null}
\end{equation}
If the null of (\ref{null}) is rejected, it indicates that for forecasting the $\alpha-$expectile, there is evidence that $X_{2}$ is not outperformed by $X_{1}$ under all $L_{\alpha}^{E}\in\mathcal{L}_{\alpha}^{E}$, or $X_{2}$ may outperform $X_{1}$ at
least when a certain $L_{\alpha}^{E}\in\mathcal{L}_{\alpha}^{E}$ is used in the forecast
evaluation.\footnote{To see this, let $\Theta_{H_{1}}^{E}=\left\{ \theta:E\left[L_{\theta,\alpha}^{E}\left(X_{1},Y\right)\right]-E\left[L_{\theta,\alpha}^{E}\left(X_{1},Y\right)\right]>0\right\} $.
If $\Theta_{H_{1}}^{E}\neq\emptyset$, the null of (\ref{null}) is
violated. In this case, $X_{2}$ outperforms $X_{1}$ under the extremal
consistent loss $L_{\theta^{*},\alpha}^{E}\left(x,y\right)$
where $\theta^{*}\in\Theta_{H_{1}}^{E}$. Note that $L_{\theta^{*},\alpha}^{E}\left(x,y\right)$
itself is also a consistent loss function for forecasting the $\alpha-$expectile.
The same argument can be applied to the case of evaluating the $\alpha-$quantile
forecasts. } On contrary, if the null is not rejected, there is evidence that for forecasting
the $\alpha-$expectile, $X_{1}$ performs equally well as or better
than $X_{2}$ under all $L_{\alpha}^{E}\in\mathcal{L}_{\alpha}^{E}$. 

Similarly, for comparing forecasts for the $\alpha-$quantile under all consistent loss functions, by
using the representation of (\ref{scoring_quantile1}) and the arguments that $dG(\theta)=d\zeta(\theta)$
is nonnegative for all $\theta\in\Theta$ and the functional form of the extremal consistent loss $L_{\theta,\alpha}^{Q}(x,y)$ is independent of $G\left(.\right)$, we may formulate the following null
hypothesis 
\begin{equation}
H_{0}:E\left[L_{\theta,\alpha}^{Q}\left(X_{1},Y\right)\right]\leq E\left[L_{\theta,\alpha}^{Q}\left(X_{2},Y\right)\right]\text{ for all }\theta.\label{null2}
\end{equation}
If the null of (\ref{null2}) is rejected, there is evidence that
$X_{2}$ may outperform $X_{1}$ for forecasting the $\alpha-$quantile,
at least when a certain $L_{\alpha}^{Q}\in\mathcal{L}_{\alpha}^{Q}$ is used in the forecast
evaluation. If the null is not rejected, there is evidence that for
forecasting the $\alpha-$quantile, $X_{1}$ at least can perform
no worse than $X_{2}$ over a class of consistent loss functions belonging
to $\mathcal{L}_{\alpha}^{Q}$. 

\subsection{The test statistics}

In the following we introduce procedures for testing the nulls of (\ref{null})
and (\ref{null2}). 
We consider $h$-period ahead out-of sample (OoS) forecasts of the $\alpha-$expectile or $\alpha-$quantile of a random variable $Y_{t+h}$ at each period $t$. Suppose total length of samples available for the forecast evaluation is $T$. Let $T_{R}$ denote the length of samples used to generate
the forecasts (such as length of samples used in estimating a
model). Let $T_{P}$ denote the number of generated forecasts and
so $T_{P}=T-h-T_{R}+1$. Let $f_{1,t+h|t}$ and $f_{2,t+h|t}$ denote the benchmark and competing forecasts for the $\alpha-$expectile or the $\alpha-$quantile of $Y_{t+h}$ at period $t$, $t=T_{R},\ldots,T-h$. To ease the notations, we let
$X_{1t}:=f_{1,t+h|t}$ and $X_{2t}:=f_{2,t+h|t}$. Let $
D_{\alpha}^{i}\left(\theta\right) =  E\left[L_{\theta,\alpha}^{i}\left(X_{1t},Y_{t+h}\right)\right]-E\left[L_{\theta,\alpha}^{i}\left(X_{2t},Y_{t+h}\right)\right]$, where $i\in \{E,Q\}$. The null hypotheses of (\ref{null}) or (\ref{null2}) is equivalent to 
\begin{equation}
H_{0}:D_{\alpha}^{i}\left(\theta\right)\leq0\text{ for all }\theta,\label{null1}
\end{equation}
if we replace $\left(X_{1},X_{2},Y\right)$ with $\left(X_{1t},X_{2t},Y_{t+h}\right)$.
Let $
\hat{d}_{t}^{i}\left(\theta\right)=L_{\theta,\alpha}^{i}\left(X_{1t},Y_{t+h}\right)-L_{\theta,\alpha}^{i}\left(X_{2t},Y_{t+h}\right)$. 
We can calculate a sample analogue of
$D_{\alpha}^{i}\left(\theta\right)$ as 
\begin{equation}
\hat{D}_{T_{P},\alpha}^{i}\left(\theta\right)=\frac{1}{T_{P}}\sum_{t=T_{R}}^{T-h}\hat{d}_{t}^{i}\left(\theta\right).\label{D_hat}
\end{equation}
If with some assumptions, $\sup_{\theta\in\Theta}\left|\hat{D}_{T_{P},\alpha}^{i}\left(\theta\right)-E\left[\hat{D}_{T_{P},\alpha}^{i}\left(\theta\right)\right]\right|\stackrel{p.}{\rightarrow}0$, then we may use the following test statistic 
\begin{equation}
\hat{S}_{T_{P},\alpha}^{i}=\sup_{\theta\in\Theta}\sqrt{T_{P}}\hat{D}_{T_{P},\alpha}^{i}\left(\theta\right)\label{test_stat}
\end{equation}
to test the null of (\ref{null1}). Here $\Theta\subseteq\mathbb{R}$
is the union of supports of $X_{1t}$, $X_{2t}$ and $Y_{t+h}$. 
To find the suprema in $\sqrt{T_{P}}\hat{D}_{T_{P},\alpha}^{i}\left(\theta\right)$, 
we may take the maxima over a grid of points in the joint supports of $X_{1t}$, $X_{2t}$ and $Y_{t+h}$, for example, all sample points of $X_{1t}$, $X_{2t}$ and $Y_{t+h}$. In practice, to save time
of computations, we may calculate approximations to the suprema based
on a smaller subset of the points. As the evaluation points increase in the joint supports, the theoretical properties for the test statistics will not be affected by using such approximations \citep{LMW_2005}.

\subsection{Properties of the test statistics}

In the following, we provide asymptotic results for the proposed test statistics of (\ref{test_stat}). 
We consider a more general version of the null of (\ref{null1}) in which $\left(X_{1t},X_{2t},Y_{t+h}\right)$
is replaced by $\left(X_{kt},X_{lt},Y_{t+h}\right)$, $k\neq l$, $k,l=1,\ldots,K$.
In the more generalized situation, we have $K$ generated forecasts
and the $k$th forecast is the benchmark and the other $K-1$ forecasts
are the competitors. Let 
\begin{eqnarray}
\hat{d}_{kl,t}^{i}\left(\theta\right) & = & L_{\theta,\alpha}^{i}\left(X_{kt},Y_{t+h}\right)-L_{\theta,\alpha}^{i}\left(X_{lt},Y_{t+h}\right)\nonumber\\
D_{kl,\alpha}^{i}\left(\theta\right) & = & E\left[L_{\theta,\alpha}^{i}\left(X_{kt},Y_{t+h}\right)\right]-E\left[L_{\theta,\alpha}^{i}\left(X_{lt},Y_{t+h}\right)\right]=E\left[\hat{d}_{kl,t}^{i}\left(\theta\right)\right],\nonumber\\
\hat{D}_{kl,\alpha}^{i}\left(\theta\right) & = & \frac{1}{T_{P}}\sum_{t=T_{R}}^{T-h}\left[L_{\theta,\alpha}^{i}\left(X_{kt},Y_{t+h}\right)-L_{\theta,\alpha}^{i}\left(X_{lt},Y_{t+h}\right)\right]=\frac{1}{T_{P}}\sum_{t=T_{R}}^{T-h}\hat{d}_{kl,t}^{i}\left(\theta\right),\nonumber\\
S_{\alpha}^{i} & = & \max_{k\neq l,k,l=1,\ldots,K}\sup_{\theta\in\Theta}D_{kl,\alpha}^{i}\left(\theta\right),\nonumber\\
\hat{S}_{T_{P},\alpha}^{i} & = & \max_{k\neq l,k,l=1,\ldots,K}\sup_{\theta\in\Theta}\sqrt{T_{P}}\hat{D}_{kl,\alpha}^{i}\left(\theta\right),\label{test_stat_sample}
\end{eqnarray}
where $i\in\left\{ E,Q\right\} $ is for the expectile and quantile
forecasts and $\Theta\subseteq\mathbb{R}$ is non-empty. By assuming
that $\left(X_{kt},X_{lt},Y_{t+h}\right)$ is strictly stationary,
it can be shown that 
\begin{eqnarray*}
\sup_{\theta\in\Theta}\sqrt{T_{P}}\hat{D}_{kl,\alpha}^{i}\left(\theta\right) & = & \sup_{\theta\in\Theta}\frac{1}{\sqrt{T_{P}}}\sum_{t=T_{R}}^{T-h}\left(\hat{d}_{kl,t}^{i}\left(\theta\right)-E\left[\hat{d}_{kl,t}^{i}\left(\theta\right)\right]+E\left[\hat{d}_{kl,t}^{i}\left(\theta\right)\right]\right)\\
 & = & \sup_{\theta\in\Theta}\left(v_{k,T_{P}}^{i}\left(\theta\right)-v_{l,T_{P}}^{i}\left(\theta\right)+\sqrt{T_{P}}D_{kl,\alpha}^{i}\left(\theta\right)\right),
\end{eqnarray*}
where 
\begin{equation}
v_{j,T_{P}}^{i}\left(\theta\right)=\sqrt{T_{P}}\left(\frac{1}{T_{P}}\sum_{t=T_{R}}^{T-h}\left(L_{\alpha,\theta}^{i}\left(X_{jt},Y_{t+h}\right)-E\left[L_{\alpha,\theta}^{i}\left(X_{jt},Y_{t+h}\right)\right]\right)\right),\label{v_j}
\end{equation}
for $i=\left\{ E,Q\right\} $, and $j=k,l$. With these notations,
we may rewrite a more generalized version of the nulls of (\ref{null1}) as 
\begin{equation}
H_{0}^{i}:S_{\alpha}^{i}\leq0,\label{null_master}
\end{equation}
for $i\in\left\{ E,Q\right\}$. 

If the null of (\ref{null_master})
is not true, the term $\sqrt{T_{P}}D_{kl,\alpha}^{i}\left(\theta\right)\rightarrow\infty$
as $T_{P}\rightarrow\infty$ for some $\theta$. If the null of (\ref{null_master}) is true, there exists at least a pair $\left(k,l\right)$ such that $D_{kl,\alpha}^{i}\left(\theta\right)\leq0$ for all $\theta\in \Theta$. Now suppose that under the null of (\ref{null_master}), with the pair $\left(k,l\right)$, $D_{kl,\alpha}^{i}\left(\theta\right)\leq0$
for all $\theta\in\Theta$ but $D_{kl,\alpha}^{i}\left(\theta\right)=0$
for some $\theta\in\mathcal{A}_{kl}^{i}\subseteq\Theta$. This implies
that $\sup_{\theta\in\Theta}D_{kl,\alpha}^{i}\left(\theta\right)=0$.
Let $\tilde{D}_{kl,\alpha}^{i}\left(\theta\right)=\hat{D}_{kl,\alpha}^{i}\left(\theta\right)-D_{kl,\alpha}^{i}\left(\theta\right)$.
Under some suitable conditions, with the central limit theorem of
an empirical process, it can be shown that the centered process $\sqrt{T_{P}}\tilde{D}_{kl,\alpha}^{i}\left(\theta\right)$
will converge weakly to a mean zero Gaussian process indexed by $\theta$,
say $\tilde{g}_{kl}^{i}\left(\theta\right)$. Since for $\theta\in\mathcal{A}_{kl}^{i}$,
$\sqrt{T_{P}}D_{kl,\alpha}^{i}\left(\theta\right)=0$ but for $\theta\notin\mathcal{A}_{kl}^{i}$,
$\sqrt{T_{P}}D_{kl,\alpha}^{i}\left(\theta\right)\rightarrow-\infty$
as $T_{P}\rightarrow\infty$ and $\sup_{\theta\in\Theta}\left(-\sqrt{T_{P}}\hat{D}_{kl,\alpha}^{i}\left(\theta\right)\right)\rightarrow\infty$
as $T_{P}\rightarrow\infty$. But $\sup_{\theta\in\Theta}\sqrt{T_{P}}\hat{D}_{kl,\alpha}^{i}\left(\theta\right)$
will approximately equal to $\sup_{\theta\in\Theta}\sqrt{T_{P}}\tilde{D}_{kl,\alpha}^{i}\left(\theta\right)$.
Thus the asymptotic distribution of $\sup_{\theta\in\Theta}\sqrt{T_{P}}\hat{D}_{T_{P},\alpha}^{i}\left(\theta\right)$
is determined by $\sup_{\theta\in\Theta}\sqrt{T_{P}}\tilde{D}_{T_{P},\alpha}^{i}\left(\theta\right)$,
which will weakly converge to $\sup_{\theta\in\Theta}\tilde{g}_{kl}^{i}\left(\theta\right)$
under some suitable conditions. On contrary, if with the pair $\left(k,l\right)$,
$D_{kl,\alpha}^{i}\left(\theta\right)<0$ for all $\theta\in\Theta$,
which implies that $\mathcal{A}_{kl}^{i}$ is empty, then 
\[
\sup_{\theta\in\Theta}\sqrt{T_{P}}\hat{D}_{kl,\alpha}^{i}\left(\theta\right)=\sup_{\theta\in\Theta}\sqrt{T_{P}}\left[\tilde{D}_{kl,\alpha}^{i}\left(\theta\right)+D_{kl,\alpha}^{i}\left(\theta\right)\right]\rightarrow-\infty
\]
as $T_{P}\rightarrow\infty$. 

We now state relevant assumptions and a formal theorem for the properties of the test statistic $\hat{S}_{T_{P},\alpha}^{i}$ as follows. Let $x\vee y=\max(x,y)$ and $x\wedge y=\min(x,y)$ and $\Rightarrow$ denote weak convergence of stochastic processes.
\begin{assumption}
	For $k=1,\ldots,K$, $\left\{ \left(Y_{t+h},X_{kt}\right):t=1,\ldots,T-h\right\} $ is strictly stationary and satisfies strong mixing condition. The mixing coefficients $\alpha\left(n\right)$ satisfy $\sum_{n=1}^{\infty}\left[\alpha\left(n\right)\right]^{A}<\infty$, 
	where $A<1/[\left(r-1\right)\left(r+1\right)]\wedge\left(\varrho/\left(2+\varrho\right)\wedge\left(s-r\right)/rs\right)$, $2\leq r < s$, $2+\varrho\leq s$ and $0<\varrho$ are some constants. 
\end{assumption}
\begin{assumption} The forecast error $\varepsilon_{k,t+h}=Y_{t+h}-X_{kt}$ should satisfy \[\left\Vert \varepsilon_{k,t+h}\right\Vert _{s}:=E\left[|\varepsilon_{k,t+h}|^{s}\right]^{\frac{1}{s}}<\infty,\]where $s$ is the constant satisfying the conditions in Assumption 1.
\end{assumption}
\begin{assumption} For $k=1,\ldots,K$ and $t=1,\ldots,T-h$, the marginal
density functions of $X_{kt}$ and $Y_{t+h}$, denoted by $f_{X_{kt}}(x)$
and $f_{Y_{t+h}}(y)$, are bounded with respect to Lebesgue measure
a.s.
\end{assumption}
Assumption 1 requires that the generated forecasts and random variable should satisfy a mixing condition. This kind of requirement for time series data is commonly seen in proving consistency results which rely on using property of stochastic equicontinuity of an empirical process (e.g., \cite{Hansen_1996a}, \cite{JCS_2016}, \cite{LMW_2005}, \cite{LWY_2016}). 
Assumption 2 requires the forecast error should satisfy a certain moment condition and Assumption 3 states density functions of the generated forecasts and random variable should be bounded from above. There is a trade-off between the moment condition of Assumption 2 and restriction on the constant $A$ in Assumption 1. In our case, we need all the three assumptions to construct the stochastic equicontinuity of the empirical process for $v_{j,T_{p}}^{i}\left(\theta\right)$ in (\ref{v_j}), which is indexed by the parameter $\theta$. With the results of the stochastic equicontinuity, some other useful statistical convergence results can be established. Please see Lemma 1 to 3 and their proofs in Appendix 7.1. 

\begin{theorem}

Suppose Assumptions 1 to 3 hold. Then under the null of (\ref{null_master}), the test statistic 
\[
\hat{S}_{T_{P},\alpha}^{i}\Rightarrow\begin{cases}
\max_{\left(k,l\right)\in\mathcal{K}}\sup_{\theta\in\mathcal{A}_{kl}^{i}}\tilde{g}_{kl}^{i}\left(\theta\right) & \text{ if }S_{\alpha}^{i}=0\\
-\infty & \text{ if }S_{\alpha}^{i}<0,
\end{cases}
\]
for $i\in\left\{ E,Q\right\} $, where $\tilde{g}_{kl}^{i}\left(\theta\right)$
is a mean zero Gaussian process with covariance $var_{kl}^{i}\left(\theta_{1},\theta_{2}\right)$
defined in Lemma 3, and $\mathcal{K}=\left\{ \left(k,l\right):k\neq l,k,l=1,\ldots,K,\sup_{\theta\in\Theta}D_{kl,\alpha}^{i}\left(\theta\right)=0\right\} $
and $\mathcal{A}_{kl}^{i}=\left\{\theta: \theta\in\Theta,D_{kl,\alpha}^{i}\left(\theta\right)=0\right\} .$

\end{theorem}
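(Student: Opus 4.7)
The plan is to reduce everything to a pairwise empirical-process statement and then glue the pairs together with the maximum. For a fixed ordered pair $(k,l)$ with $k\neq l$, I use the decomposition
\[
\sqrt{T_P}\,\hat{D}_{kl,\alpha}^{i}(\theta)=\sqrt{T_P}\,\tilde{D}_{kl,\alpha}^{i}(\theta)+\sqrt{T_P}\,D_{kl,\alpha}^{i}(\theta)=v_{k,T_P}^{i}(\theta)-v_{l,T_P}^{i}(\theta)+\sqrt{T_P}\,D_{kl,\alpha}^{i}(\theta),
\]
noted in the text. Under the null the deterministic drift $\sqrt{T_P}D_{kl,\alpha}^i(\theta)$ is identically $0$ on $\mathcal{A}_{kl}^i$ (when that set is non-empty, i.e.\ $(k,l)\in\mathcal{K}$) and tends to $-\infty$ elsewhere; the stochastic part is $O_p(1)$. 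So the pointwise ``profile'' of the process is: a tight Gaussian ripple on $\mathcal{A}_{kl}^i$ sitting on top of a drift that pushes the process to $-\infty$ off $\mathcal{A}_{kl}^i$, which forces the $\sup_\theta$ to be attained (in the limit) on $\mathcal{A}_{kl}^i$.

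The main technical ingredient is the weak convergence, in $\ell^\infty(\Theta)$, of the centered process $\sqrt{T_P}\tilde{D}_{kl,\alpha}^i(\cdot)$ to a mean-zero Gaussian process $\tilde{g}_{kl}^i(\cdot)$ with covariance kernel $var_{kl}^i(\theta_1,\theta_2)$ from Lemma 3. I would obtain the finite-dimensional convergence from a central limit theorem for strictly stationary strongly mixing sequences, using Assumption 1 on the mixing coefficients and Assumption 2 on the moments of $\varepsilon_{k,t+h}$ (the extremal losses are bounded by $\max(\alpha,1-\alpha)|y-x|$ in the expectile case and by $\max(\alpha,1-\alpha)$ in the quantile case, so the required $s$-th moment is controlled by $\|\varepsilon_{k,t+h}\|_s$). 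Tightness is equivalent to stochastic equicontinuity of $\theta\mapsto v_{j,T_P}^i(\theta)$, which is the content of Lemmas 1--3 that the paper cites from its appendix; Assumption 3 (bounded marginal densities) ensures that the bracketing $L^r$-size of the class $\{L_{\theta,\alpha}^i(\cdot,\cdot):\theta\in\Theta\}$ indexed by the scalar $\theta$ is controlled and cooperates with the mixing rate encoded in the exponent $A$.

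Once the pairwise weak convergence is established, I would pass to the supremum on each of the two pieces of $\Theta$. On $\mathcal{A}_{kl}^i$ the continuous mapping theorem applied to the sup functional, which is continuous on $\ell^\infty(\mathcal{A}_{kl}^i)$, yields
\[
\sup_{\theta\in\mathcal{A}_{kl}^i}\sqrt{T_P}\,\hat{D}_{kl,\alpha}^i(\theta)\Rightarrow\sup_{\theta\in\mathcal{A}_{kl}^i}\tilde{g}_{kl}^i(\theta).
\]
On $\Theta\setminus\mathcal{A}_{kl}^i$ the drift dominates, so $\sup_{\theta\in\Theta\setminus\mathcal{A}_{kl}^i}\sqrt{T_P}\hat{D}_{kl,\alpha}^i(\theta)\to-\infty$ in probability; combining the two pieces gives the pairwise limit $\sup_{\theta\in\mathcal{A}_{kl}^i}\tilde{g}_{kl}^i(\theta)$ for $(k,l)\in\mathcal{K}$, and $-\infty$ for $(k,l)\notin\mathcal{K}$. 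Finally I would aggregate across the finitely many pairs by joint weak convergence of the vector $(\sqrt{T_P}\tilde{D}_{kl,\alpha}^i)_{(k,l)}$ (the covariance across pairs is again read off the Gaussian limit) and continuous mapping applied to $\max$. In the case $S_\alpha^i=0$ only pairs in $\mathcal{K}$ survive and one obtains the stated limit; in the case $S_\alpha^i<0$ the set $\mathcal{K}$ is empty, every pairwise contribution diverges to $-\infty$, and so does the maximum.

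The hard part is the stochastic equicontinuity step. The extremal losses involve the indicators $1\{y<x\}$ and $1\{\theta<x\}$, so $\theta\mapsto L_{\theta,\alpha}^i(x,y)$ is only piecewise smooth; standard Lipschitz-in-$\theta$ bracketing bounds are unavailable and one has to exploit the VC-subgraph structure of the indicator class together with the envelope bounds on the losses. Under strong mixing rather than independence this is further delicate because the bracketing-entropy rate must be balanced against the mixing rate, which is precisely the trade-off built into Assumptions 1 and 2 through the constants $A$, $r$, $s$ and $\varrho$. Once Lemmas 1--3 deliver the functional CLT for $v_{j,T_P}^i(\theta)$ under these assumptions, the rest of the argument is a fairly standard drift-domination plus continuous-mapping exercise.
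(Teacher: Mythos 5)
Your proposal follows essentially the same route as the paper's own proof: the same decomposition of $\sqrt{T_P}\hat{D}_{kl,\alpha}^i(\theta)$ into the centered empirical-process difference $v_{k,T_P}^i(\theta)-v_{l,T_P}^i(\theta)$ plus the deterministic drift $\sqrt{T_P}D_{kl,\alpha}^i(\theta)$, the same reliance on Lemmas 1--3 (mixing CLT for the finite-dimensional distributions plus stochastic equicontinuity) to get weak convergence to $\tilde{g}_{kl}^i$, and the same continuous-mapping-plus-drift-domination argument to restrict the supremum to $\mathcal{A}_{kl}^i$ and to handle the $S_\alpha^i<0$ case. Your explicit appeal to joint weak convergence across the finitely many pairs before applying the max is, if anything, slightly more careful than the paper's treatment, but it is not a different proof.
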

A detailed proof of Theorem 1 can be found in Appendix 7.1. The theorem
says that the sample test statistic $\hat{S}_{T_{P},\alpha}^{i}$ of (\ref{test_stat_sample})
has a non-degenerate asymptotic distribution associated with $\tilde{g}_{kl}^{i}\left(\theta\right)$,
which can be used to construct empirical $p$-values. In next subsection
we will introduce the method for empirically constructing the distribution
of the sample test statistic $\hat{S}_{T_{P},\alpha}^{i}$.

\subsection{Constructing empirical distributions of the test statistics}

We use the re-centered bootstrap \citep{LMW_2005} to construct the
empirical distribution of the sample test statistic $\hat{S}_{T_{P},\alpha}^{i}$,
where $i\in\left\{ E,Q\right\}$ is for the $\alpha-$expectile or $\alpha-$quantile forecast. 
In the following we briefly describe procedures for implementing the re-centered bootstrap. We focus on the case of comparing two forecasts $X_{1t}$ and $X_{2t}$. 
Let 
\[
\hat{d}_{t}^{i*}\left(\theta\right):=\hat{d}_{12,t}^{i*}\left(\theta\right)=L_{\theta,\alpha}^{i}\left(X_{1t}^{*},Y_{t+h}^{*}\right)-L_{\theta,\alpha}^{i}\left(X_{2t}^{*},Y_{t+h}^{*}\right),
\]
where $i\in \{E,Q\}$ and $\left(X_{1t}^{*},X_{2t}^{*},Y_{t+h}^{*}\right)$ is the bootstrap
sample randomly drawn with replacement from the empirical (joint)
distribution of $\left(X_{1t},X_{2t},Y_{t+h}\right)$ by using a bootstrap
re-sampling scheme, e.g., the stationary bootstrap of \citet{PR_1994}. 
Let 
$
\hat{D}_{T_{P},\alpha}^{i*}\left(\theta\right)=1/T_{P}\sum_{t=T_{R}}^{T-h}\hat{d}_{t}^{i*}\left(\theta\right),
$ which is an analogue of
$\hat{D}_{\alpha}^{i}\left(\theta\right)$ in (\ref{D_hat}) calculated
with the bootstrap sample. Let $\hat{D}_{c,T_{P},\alpha}^{i*}\left(\theta\right)=\hat{D}_{T_{P},\alpha}^{i*}\left(\theta\right)-E^{*}\left[\hat{D}_{T_{P},\alpha}^{i}\left(\theta\right)\right]$. 
Here $E^{*}[.]$ denotes the expectation relative
to the distribution of bootstrap sample $\left(X_{1t}^{*},X_{2t}^{*},Y_{t+h}^{*}\right)$
conditional on the original sample $\left(X_{1t},X_{2t},Y_{t+h}\right).$ 
Practically, we may replace $E^{*}\left[\hat{D}_{T_{P},\alpha}^{i}\left(\theta\right)\right]$
with $\hat{D}_{T_{P},\alpha}^{i}\left(\theta\right)$, the test statistic
calculated with the full sample. 
Let $
\hat{S}_{c,T_{P},\alpha}^{i*}=\sup_{\theta\in\Theta}\sqrt{T_{P}}\hat{D}_{c,T_{P},\alpha}^{i*}\left(\theta\right)
$
denote the re-centered bootstrap sample test statistic. We then compute
the bootstrap distribution of $\hat{S}_{c,T_{P},\alpha}^{i*}$ as
$
\hat{H}_{M}^{i}\left(\omega\right)=1/M\sum_{i=1}^{M}1\left\{\hat{S}_{c,T_{P},\alpha}^{i*}\leq\omega\right\}
$
and use it to construct the critical value and empirical p-value for
the test. Here $M$ is the size of the bootstrap sample. Let $\hat{h}_{M}^{i}\left(1-\gamma\right)$
denote ($1-\gamma$)th sample quantile of $\hat{H}_{M}^{i}\left(\omega\right)$: $
\hat{h}_{M}^{i}\left(1-\gamma\right)=\inf\left\{ \omega:\hat{H}_{M}^{i}\left(\omega\right)\geq1-\gamma\right\},
$ which is the re-centered bootstrap
critical value of significance level $\gamma$. We reject the null
hypothesis at the significance level $\gamma$ if $\hat{S}_{T_{P},\alpha}^{i}\geq\hat{h}_{M}^{i}\left(1-\gamma\right)$, $i\in \{E,Q\}$.

Let $W_{t}=\left(X_{1t},X_{2t},\ldots,X_{kt},Y_{t+h}\right)$, $t=1,\ldots,K$.
Let $p_{T_{P}}$ be the reciprocal of mean block length for the stationary bootstrap of \citet{PR_1994}, which is a function of $T_{P}$. With the notations used in Subsection 3.2, the theoretical result for validation of using the re-centered bootstrap method with the stationary bootstrap scheme are stated as follows.

\begin{theorem} Suppose Assumptions 1 and 2 hold and $p_{T_{P}}\rightarrow0$
	and $T_{P}\times p_{T_{P}}\rightarrow\infty$ as $T_{P}\rightarrow\infty$.
	Then for $i\in\{E,Q\}$, we have 
	\begin{eqnarray*}
		\sup_{\omega\in\mathbb{R}}\left|P\left(\max_{k\neq l,k,l=1,\ldots,K}\sup_{\theta\in\Theta}\sqrt{T_{P}}\left(\hat{D}_{kl,\alpha}^{i*}\left(\theta\right)-\hat{D}_{kl,\alpha}^{i}\left(\theta\right)\right)\leq\omega|W_{T_{R}},\ldots,W_{T-h}\right)\right.\\
		\left.-P\left(\max_{k\neq l,k,l=1,\ldots,K}\sup_{\theta\in\Theta}\sqrt{T_{P}}\left(\hat{D}_{kl,\alpha}^{i}\left(\theta\right)-D_{kl,\alpha}^{i}\left(\theta\right)\right)\leq\omega\right)\right| & \overset{p.}{\rightarrow} & 0
	\end{eqnarray*}
	as $T_{P}\rightarrow\infty$. Furthermore, as $T_{P}$ and $M\rightarrow\infty$, 
	\begin{itemize}
		\item[1.] if 
		\begin{equation}
		E\left[L_{\theta,\alpha}^{i}\left(X_{1t},Y_{t+h}\right)\right]=E\left[L_{\theta,\alpha}^{i}\left(X_{2t},Y_{t+h}\right)\right]=\ldots=E\left[L_{\theta,\alpha}^{i}\left(X_{kt},Y_{t+h}\right)\right]\text{ for all }\theta\in\Theta\label{implicit_constraint}
		\end{equation}
		holds, we have $S_{\alpha}^{i}=0$ and  $P\left(\hat{S}_{T_{p},\alpha}\geq\hat{h}_{M}^{i}\left(1-\gamma\right)\right)\rightarrow\gamma$. 
		\item[2.] if $S_{\alpha}^{i}>0$, we have $P\left(\hat{S}_{T_{p},\alpha}\geq\hat{h}_{M}^{i}\left(1-\gamma\right)\right)\rightarrow1$. 
	\end{itemize}
\end{theorem}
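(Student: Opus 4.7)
The plan is to reduce the theorem to a functional bootstrap central limit theorem for the empirical process $\sqrt{T_P}(\hat D_{kl,\alpha}^{i}(\theta)-D_{kl,\alpha}^{i}(\theta))$ viewed as an element of $\ell^{\infty}(\Theta)$, and then transfer everything to the scalar statistic via the continuous mapping theorem applied to the functional $(g_{kl})_{k\neq l}\mapsto\max_{k\neq l}\sup_{\theta\in\Theta}g_{kl}(\theta)$. The uniform-in-$\omega$ statement in the first display is then obtained by a Polya-type argument, which applies because the limiting distribution $\max_{(k,l)}\sup_{\theta}\tilde g_{kl}^{i}(\theta)$ has a continuous CDF under Assumptions~1--3 (the Gaussian process $\tilde g_{kl}^{i}$ has a non-degenerate covariance $var_{kl}^{i}(\theta_1,\theta_2)$).

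For the original sample the functional CLT $\sqrt{T_P}\tilde D_{kl,\alpha}^{i}\Rightarrow\tilde g_{kl}^{i}$ in $\ell^{\infty}(\Theta)$ is exactly what Lemmas~1--3 deliver in the course of the proof of Theorem~1. For the bootstrap process I would first establish finite-dimensional convergence of $\sqrt{T_P}(\hat D_{kl,\alpha}^{i*}(\theta)-\hat D_{kl,\alpha}^{i}(\theta))$ to the same mean-zero Gaussian vector, conditionally on the data and in probability; this is a direct application of the scalar stationary-bootstrap CLT of \citet{PR_1994} to the sums $T_P^{-1/2}\sum_{t}\hat d_{kl,t}^{i*}(\theta)$, whose validity is guaranteed by Assumptions~1--2 together with $p_{T_P}\to0$ and $T_Pp_{T_P}\to\infty$. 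Second, I would establish asymptotic tightness of the bootstrap process over $\theta\in\Theta$ by adapting the bracketing argument of Lemmas~1--3 to the block-resampled observations, exploiting the fact that the geometric block lengths reproduce the long-run covariance of the original strong-mixing sequence while the moment bound in Assumption~2 transfers to conditional bootstrap moments.

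Once both the original and the bootstrap processes converge (the latter conditionally in probability) to the same Gaussian limit, the continuous mapping theorem plus Polya's theorem give the uniform-in-$\omega$ statement in the first display. The size claim then follows because (\ref{implicit_constraint}) implies $D_{kl,\alpha}^{i}(\theta)\equiv 0$ on $\Theta$ for every pair, so $\mathcal{A}_{kl}^{i}=\Theta$ and $\mathcal{K}$ contains all pairs; Theorem~1 gives $\hat S_{T_P,\alpha}^{i}\Rightarrow\max_{(k,l)}\sup_{\theta}\tilde g_{kl}^{i}(\theta)$, and the re-centered bootstrap quantile $\hat h_M^{i}(1-\gamma)$ converges to the matching $(1-\gamma)$-quantile, yielding asymptotic size $\gamma$. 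The power claim follows because whenever $S_{\alpha}^{i}>0$ there exist $(k,l)$ and $\theta^{*}$ with $D_{kl,\alpha}^{i}(\theta^{*})>0$, so $\hat S_{T_P,\alpha}^{i}\geq\sqrt{T_P}\hat D_{kl,\alpha}^{i}(\theta^{*})\stackrel{p.}{\rightarrow}\infty$, while the re-centering of $\hat D_{c,T_P,\alpha}^{i*}$ removes any drift and keeps $\hat h_M^{i}(1-\gamma)=O_p(1)$.

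The main obstacle will be the second ingredient above, namely conditional stochastic equicontinuity of the stationary-bootstrap empirical process uniformly in $\theta$. The original-sample proof relies on the indicator-step structure of $L_{\theta,\alpha}^{i}$, the marginal-density bounds in Assumption~3, and the mixing decay in Assumption~1 to construct a bracketing set of polynomial size. Transferring this to the stationary bootstrap requires controlling the conditional modulus of continuity $\sup_{|\theta_1-\theta_2|<\delta}|v_{j,T_P}^{i*}(\theta_1)-v_{j,T_P}^{i*}(\theta_2)|$ in probability over the block randomness; the cleanest route is to decompose each bootstrap partial sum into a sum of resampled geometric blocks and to show that the block-level covariance contribution replicates $var_{kl}^{i}(\theta_1,\theta_2)$, so the same bracketing bounds used in Lemmas~1--3 apply on a set of data configurations of probability tending to one.
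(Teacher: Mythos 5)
Your proposal follows essentially the same route as the paper: both rest on the stationary-bootstrap CLT of Politis and Romano (1994) under Assumptions 1--2 and the block-length conditions $p_{T_{P}}\to 0$, $T_{P}p_{T_{P}}\to\infty$, lift the result to the max--sup statistic, obtain the uniform-in-$\omega$ statement from continuity of the limiting law of $\max_{(k,l)}\sup_{\theta}\tilde g_{kl}^{i}(\theta)$, and derive size from the least-favorable configuration (\ref{implicit_constraint}) and power from the divergence of $\sqrt{T_{P}}\hat D_{kl,\alpha}^{i}(\theta^{*})$ together with $\hat h_{M}^{i}(1-\gamma)=O_{p}(1)$. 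The only place you differ is in being more explicit about the functional step: the paper applies the Politis--Romano result pointwise in $\theta$ and then invokes the continuous mapping theorem, whereas you correctly observe that passing to $\sup_{\theta\in\Theta}$ requires conditional stochastic equicontinuity of the bootstrap empirical process, which you flag (rightly) as the main remaining work rather than something that follows from the pointwise result alone.
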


As pointed out by \citet{LMW_2005}, to suitably approximate the distribution
of the test statistic under the null, using the re-centered bootstrap method (or other re-centered re-sampling methods) requires (\ref{implicit_constraint})
holds. The implicit constraint of (\ref{implicit_constraint}) is
a least favorable configuration for the test, which is a special case
of $S_{\alpha}^{i}=0$ and the null $H_{0}^{i}:S_{\alpha}^{i}\leq0$. But note that $S_{\alpha}^{i}=0$ does not imply the favorable configuration. When (\ref{implicit_constraint}) holds, using the re-centered bootstrap method
would yield an exact asymptotic size of the test statistic. But when
it fails to hold, in general the exact asymptotic size of the test
statistic would not be obtained by using the re-centered bootstrap method.
To sum, the re-centered bootstrap sample test statistic is not asymptotically
similar on the boundary of the null. When an alternative is too close
to the null, in general, a non-asymptotic similar test statistic may
be less powerful for it than an asymptotic similar test statistic.
However, previous studies show that the re-centered bootstrap method performs at least equally well as other re-sampling methods, either in
simulations or empirical applications, see \citet{LMW_2005} and
\citet{JCS_2016}. This is the main reason why we suggest to use the
re-centered bootstrap method to conduct the proposed tests.\footnote{In an early work, we also used subsampling method suggested by \citet{LMW_2005} to conduct the proposed tests but found in most situations it performs worse than the re-centered bootstrap method. The relevant results of using the subsampling scheme can be requested.} We will use the re-centered bootstrap method in the following simulations and empirical analyses. 

\section{Simulations}

In this section, we conduct simulations to understand how the proposed test statistics perform. 
In the first simulation in Section 4.1.1, we investigate how the proposed test statistic works when different consistent loss functions provide different ranking results for two forecasts on the conditional expectation. In the rest simulations, models E1 to E3 are for the conditional expectile forecasts and models Q1 and Q2 are for the conditional quantile forecasts. We use these models to examine how the proposed test statistics perform under different data generating processes. 

For each simulation, we set the number of generated forecasts $T_{P}=100$, 300 and 1000, and the number of bootstrap $M=400$. Each scenario is simulated 1000 times. For the simulation in Section 4.1.1 and model E1 and Q1, the forecasts are not generated from any estimated model. 
For models E2, E3 and Q2, the forecasts are generated by using rolling window scheme with window length $l=100$, and for each model, length of a generated sample path $T=T_{R}+T_{P}$, where $T_{R}=l=100$ is the sample size for initial estimations of the model parameters. 
In the main context, for each simulation, we show rejection frequencies of the proposed test statistics used for the simulations from the 1000 iterations. As for a more completed description for properties of size and power of the proposed test statistics, we show their size-power curves \citep{DM_1998} in Appendix 7.3.


\subsection{Conditional expectile forecasts}

In this subsection, we present simulation results for forecasting the conditional $\alpha-$expectile of a random variable $Y_{t+1}$ at each period $t$: $e_{t+1|t}\left(\alpha\right):=\upsilon$, where $\upsilon$ satisfies
\[
\frac{E_{t}\left[\left(\upsilon-Y_{t+1}\right)_{+}\right]}{E_{t}\left[\left(Y_{t+1}-\upsilon\right)_{+}\right]} = \frac{\alpha}{1-\alpha},
\]
and $E_{t}\left[.\right]=E\left[.|I_{t}\right]$ is the conditional
expectation operator at period $t$ and $I_{t}$ is the information
set up to period $t$. Again we let $X_{1t}:=f_{1,t+1|t}$ be the benchmark and $X_{2t}:=f_{2,t+1|t}$ be the competing forecasts.

\subsubsection{A comparison of consistent loss functions and the proposed test}
We first consider a simulation when different consistent loss functions provide different ranking results for two competing forecasts on the conditional expectation of $Y_{t+1}$: $E_{t}\left[Y_{t+1}\right]=e_{t+1|t}(0.5)$.
The consistent loss functions we consider here are the squared error
loss 
and the exponential Bregman loss. 
The random variable $Y_{t+1}$ has the following data generating process 
\begin{equation}
Y_{t+1}=\gamma+\beta_{1}W_{1t}+\beta_{2}W_{2t}+\varepsilon_{t+1},
\label{sim_mse_expb}
\end{equation}
where $W_{1t}\sim i.i.d.N\left(0,\sigma_{W_{1}}^{2}\right)$, $W_{2t}\sim i.i.d.N\left(0,\sigma_{W_{2}}^{2}\right)$
and $\varepsilon_{t+1}\sim i.i.d.N\left(0,1\right)$. $W_{1t}$, $W_{2t}$ and $\varepsilon_{t+1}$ are mutually independent.
We set $\gamma=0.4$, $\beta_{1}=0.5$, $\beta_{2}=0.2$ and $\sigma_{W_{1}}^{2}=\sigma_{W_{2}}^{2}=1$.
The benchmark forecast is $X_{1t}=c_{1}+b_{1}W_{1t}$ and the
competitor is $X_{2t}=c_{2}+b_{2}W_{2t}$. We consider three scenarios for parameter settings: (1) $c_{1}=c_{2}=2\gamma$, $b_{1}=2\beta_{1}$ and $b_{2}=2\beta_{2}$; (2) $c_{1}=2\gamma$, $c_{2}=\gamma$, $b_{1}=2\beta_{1}$ and $b_{2}=\beta_{2}$; (3) $c_{1}=\gamma$, $c_{2}=2\gamma$, $b_{1}=\beta_{1}$ and $b_{2}=2\beta_{2}$. The three scenarios result in different forecast rankings when the squared error loss is used. Let $MSE(X,Y):=E[(X-Y)^{2}]$ denote the expected squared error loss of the random variable $Y$ and forecast $X$.
As shown in Appendix 7.5, scenario (1) implies $MSE(X_{1t},Y_{t+1})=MSE(X_{2t},Y_{t+1})$; scenario (2) implies $MSE(X_{1t},Y_{t+1})>MSE(X_{2t},Y_{t+1})$ and scenario (3) implies $MSE(X_{1t},Y_{t+1})<MSE(X_{2t},Y_{t+1})$. 

In the left panel of Figure \ref{figure3}, we plot differences of the expected
exponential Bregman loss for the two forecasts under the three scenarios
with parameter $a\in\left[-1,1\right]$. The right panel of Figure \ref{figure3} shows differences of the expected extremal consistent loss for the two forecasts with parameter $\theta\in\left[-5,5\right]$.
	
In scenario (1), the two forecasts have the same
expected squared error loss, 
but as can be seen from Figure \ref{figure3}, they have different expected exponential Bregman loss for $a\neq0$.\footnote{Note that for $a=0$, the exponential Bregman loss becomes the squared error loss (scaled by 0.5).} The difference is positive for $a>0$ and negative for $a<0$. In this scenario, if we use an accuracy test with the squared error loss, say the Diebold and Marino (DM) test, we will have a low rejection frequency since it is the least favorable configuration (l.f.c.) of the test. On contrary if the exponential Bregman loss with $a>0$ is used in the accuracy test, we may have a very high rejection frequency. As for the extremal consistent loss, the difference of their expected values has a positive maximum. 
It suggests that the null of (\ref{null}) should be rejected. 
		
In scenario (2), the competitor outperforms the benchmark under both the squared error loss and exponential Bregman loss, as can be seen from Figure \ref{figure3}. For the expected extremal consistent loss, again the difference has a positive maximum, 
which suggests that the null of (\ref{null}) should be rejected. But it is interesting to note that the difference also has a negative minimum, which suggests that the competitor may perform worse than the benchmark under a certain consistent loss function other than the squared error loss and exponential Bregman loss.
		
In scenario (3), 
the benchmark outperforms the competitor under the squared error loss and exponential Bregman loss. Furthermore, 
the difference of the expected extremal consistent loss is nonpositive for all $\theta$ considered here. It suggests that no matter which consistent loss function is used, the benchmark will still perform no worse than the competitor and the null of (\ref{null}) should not be rejected.
		
In the upper panel of Table \ref{table3}, we show rejection frequencies of the proposed test and the DM test with the squared error loss for scenarios (1) to (3). 
The significant levels we choose are 0.01, 0.05 and 0.1. The simulation results confirm what Figure \ref{figure3} shows. For scenario (1), 
rejection frequencies of the DM test are close to the corresponding significant levels, which is expected, since scenario (1) is the least favorable configuration for the DM test when the squared error loss is used. But in this scenario, rejection frequencies of the proposed test are much higher than the corresponding significant levels and increase with the number of generated forecasts $T_{P}$. 
For scenario (2), rejection frequencies of the proposed test and the DM test both increase with $T_{P}$. 
For scenario (3), the proposed test and the DM test both obtain no rejection, which again confirm what Figure \ref{figure3} shows. 

In the bottom panel of Table 3, we show simulation results for a ``reverse situation'' in which $X_{1t}$ is the competitor and $X_{2t}$ is the benchmark. In this situation, results for scenarios (1) and (3) are expected. The proposed test statistic and the DM test statistic behave as before in scenario (1). While in scenario (3), now the test statistics both have a high probability to reject the null. 
In scenario (2), as mentioned, the difference of the expected extremal consistent loss functions has a negative minimum, which implies that $X_{2t}$ may perform worse than $X_{1t}$ under a certain consistent loss function other than the squared error loss and exponential Bregman loss. Our proposed test statistic thus has a high probability to reject
the null in this case. However, using the DM test statistic has a
very low probability to reject the null since $X_{2t}$ performs better
than $X_{1t}$ under the squared error loss. 

\subsubsection{Model E1}
For this simulation, $Y_{t+1}|\mu_{t+1|t}  \sim  i.i.d.N\left(\mu_{t+1|t},1\right)$, 
where the conditional expectation $\mu_{t+1|t} \sim  i.i.d.N\left(0,1\right)$. Let $e^{Z}\left(\alpha\right)$ denote the $\alpha-$expectile
of a standard normal random variable $Z$. The conditional $\alpha-$expectile of $Y_{t+1}$ at period
$t$ is $e_{t+1|t}\left(\alpha\right)=\mu_{t+1|t}+e^{Z}\left(\alpha\right)$. 
We set the benchmark forecast for $e_{t+1|t}\left(\alpha\right)$
as $X_{1t} =  \mu_{t+1|t}+e^{Z}\left(\alpha\right)+\varsigma\left(\alpha\right)Z_{1t}$, where $Z_{1t} \sim  i.i.d.N\left(0,0.25\right)$ and
\[
\varsigma\left(\alpha\right)=\frac{\sqrt{E\left[\left(1\left\{ Z<e^{Z}\left(\alpha\right)\right\}-\alpha\right)^{2}\left(Z-e^{Z}\left(\alpha\right)\right)^{2}\right]}}{E\left[\left|1\left\{ Z<e^{Z}\left(\alpha\right)\right\}-\alpha\right|\right]}.
\]
The benchmark forecast $X_{1t}$ can be viewed as a noisy forecast
for the conditional $\alpha-$expectile $e_{t+1|t}\left(\alpha\right)$.
For the noise $Z_{1t}$, we scale it with $\varsigma\left(\alpha\right)$
to reflect the fact that accuracy of forecasting conditional expectiles
generally depends on $\alpha$.\footnote{Note that $\varsigma^{2}\left(\alpha\right)/n$ is the asymptotic variance of the empirical $\alpha-$expectile for $n$ i.i.d. normal samples, see \cite{NP_1987}.} We use the following settings to generate the competing forecast $X_{2t}$:
(1) $X_{2t}=\mu_{t+1|t}+e^{Z}\left(\alpha\right)$;
(2) $X_{2t}=\mu_{t+1|t}+e^{Z}\left(\alpha\right)+\varsigma\left(\alpha\right)Z_{it}$,
$Z_{it}\sim i.i.d.N\left(0,\sigma_{i}^{2}\right)$ and $\sigma_{i}^{2}=0.04,$
0.25 and 1 for $i=2$, 3, 4; (3) $X_{2t}=e^{Z}\left(\alpha\right)+\varsigma\left(\alpha\right)Z_{it}$,
$Z_{it}\sim i.i.d.N\left(0,\sigma_{i}^{2}\right)$, where $\sigma_{i}^{2}=0.25$
and 1 for $i=3$, 4.

In setting (1), $X_{2t}$ is the true conditional $\alpha-$expectile. 
In setting (2), like 
$X_{1t}$, $X_{2t}$ can be viewed as a noisy forecast for the conditional
$\alpha-$expectile. In particular, $X_{1t}$ and $X_{2t}=\mu_{t+1|t}+e^{Z}\left(\alpha\right)+\varsigma\left(\alpha\right)Z_{3t}$
shall be equivalent since their noisy terms both follow $N\left(0,0.25\right)$, and this case is the least favorable configuration for the test. When $X_{2t}=\mu_{t+1|t}+e^{Z}\left(\alpha\right)+\varsigma\left(\alpha\right)Z_{2t}$
($\mu_{t+1|t}+e^{Z}\left(\alpha\right)+\varsigma\left(\alpha\right)Z_{4t}$),
$X_{2t}$ is on average a more accurate (less accurate) forecast than
$X_{1t}$, since the noise $Z_{2t}$ ($Z_{4t}$) has a smaller (larger)
variance than $Z_{1t}$ does. In setting (3), $X_{2t}$ can be viewed as a noisy forecast
when the conditional expectation $\mu_{t+1|t}$ is replaced with the unconditional
expectation (zero). Also the noise has the same or a larger variance
than $Z_{1t}$ does. Thus in this case, $X_{2t}$ is expected to perform
worse than $X_{1t}$.

\subsubsection{Model E2}

For this simulation, we generate data from a VAR(1) model: 
\begin{eqnarray}
Y_{t+1} & = & 0.1+0.3Y_{t}+\beta_{2}W_{1t}+\varepsilon_{1,t+1},\nonumber \\
W_{1,t+1} & = & 0.2+0.6W_{1t}+\varepsilon_{2,t+1},\nonumber\\ 
W_{2,t+1} & = & 0.3+0.4W_{2t}+\varepsilon_{3,t+1},\nonumber 
\end{eqnarray}
where 
\begin{eqnarray*}
\left[\begin{array}{c}
\varepsilon_{1,t+1}\\
\varepsilon_{2,t+1}\\
\varepsilon_{3,t+1}
\end{array}\right] & \sim & i.i.d.MN\left(\mathbf{0},\Omega_{\varepsilon}\right),\\
\Omega_{\varepsilon} & = & \left[\begin{array}{ccc}
1 & 0 & 0\\
0 & 1 & \sigma_{23}\\
0 & \sigma_{23} & 1
\end{array}\right],
\end{eqnarray*}
and $MN\left(\mathbf{0},\Omega_{\varepsilon}\right)$ denotes a multivariate normal distribution with mean vector $\mathbf{0}$ and covariance matrix $\Omega_{\varepsilon}$. Here we focus on evaluating forecasts of the conditional expectation of $Y_{t+1}$ at each period $t$. The parameter $\beta_{2}$ controls the importance of $W_{1t}$ for the forecast. 
For $W_{2t}$, it does not directly affect $Y_{t+1}$ and may not be helpful on the
forecast. However, if its correlation with $W_{1t}$ (measured by $\sigma_{23}$) is high and $W_{1t}$ is not available, $W_{2t}$ can be a suitable alternative predictor. In the simulation, we will vary $\beta_{2}$ and $\sigma_{23}$ and see how such variations
affect performances of the proposed test statistic. The forecasts are all generated with estimated models in which the estimated coefficients at period $t$ 
are obtained from using the OLS and rolling window scheme with window length $l=100$.

The benchmark forecast is $X_{1t} =  \left(\hat{\gamma_{t}}+Z_{1t}\right)+\left(\hat{\beta}_{1t}+Z_{2t}\right)Y_{t}$, where $Z_{1t}  \sim  i.i.d.N\left(0,0.0025\right)$, $Z_{2t} \sim i.i.d.N\left(0,0.0225\right)$,
and $\hat{\gamma}_{t}$ and $\hat{\beta}_{1t}$ are the estimated coefficients at period $t$. 
The benchmark is from a misspecified model in which the coefficients are the OLS estimates plus noises. 
We use the following six settings to generate the competing forecast $X_{2t}$: (1) $\left(\beta_{2},\sigma_{23}\right)=\left(0.45,0\right)$,
$X_{2t}=\tilde{\gamma}_{t}+\tilde{\beta}_{1t}Y_{t}$, $\tilde{\gamma}=\hat{\gamma}+Z_{3t}$,
$\tilde{\beta}_{1t}=\hat{\beta}_{1t}+Z_{4t}$. $Z_{3t}\sim i.i.d.N\left(0,0.0025\right)$
and $Z_{4t}\sim i.i.d.N\left(0,0.0225\right)$. 
For settings (2) to (4), we set $\sigma_{23}=0$,
$\beta_{2}=0.1$, 0.45 and 0.75, and $X_{2t}=\hat{\gamma_{t}}+\hat{\beta}_{1t}Y_{t}+\hat{\beta}_{2t}^{k}W_{1t}$, where $\hat{\beta}_{2t}^{k}$ is the estimated coefficient at period $t$ and $k=low,$ $med$ and $high$ correspond to $\beta_{2}=0.1$,
0.45 and 0.75. 
For settings (5) and (6), we set $\sigma_{23}=0.3$ and 0.8, $\beta_{2}=0.45$,
and $X_{2t}=\hat{\gamma}_{t}+\hat{\beta}_{1t}Y_{t}+\hat{\beta}_{3t}W_{2t}^{h}$, where $\hat{\beta}_{3t}$ is the estimated coefficient at period $t$ and $h=lcr$ and $hcr$ correspond to $\sigma_{23}=0.3$ and 0.8. 

In setting (1), similar as the benchmark $X_{1t}$, $X_{2t}$ is also from a misspecified model in which the estimated coefficients are perturbed by noises. Since the noises in the benchmark and this setting follow the same distribution, $X_{1t}$ and $X_{2t}$ shall be equivalent forecasts. Hence setting (1) is the least favorable configuration (l.f.c.) for the test. In settings (2) to (4), we vary the coefficient $\beta_{2}$ at three different levels and keep $W_{1t}$ and $W_{2t}$ uncorrelated. The model used here is correctly specified. Comparing to the benchmark forecast $X_{1t}$, it is expected that as magnitude
of $\beta_{2}$ becomes strong, $W_{1t}$ will become more important
in the forecast, and $X_{2t}$ will outperform $X_{1t}$. Finally, in settings (5) and (6), we vary correlation between $W_{1t}$
and $W_{2t}$ at two different levels but keep $\beta_{2}$ constant. Although the model used in settings (5) and (6) is not correctly specified, it is expected that as the correlation between $W_{1t}$ and $W_{2t}$ increases, $W_{2t}$ may become more useful on the forecast. Hence $X_{2t}$ may perform better than $X_{1t}$ in this case. 

\subsubsection{Model E3}
For this simulation, we generate data by using a GARCH(1,1) model.
We focus on evaluating forecasts of the conditional expectation of $Y_{t+1}=V_{t+1}^{2}$ at each period $t$, where $V_{t+1}|\sigma_{t+1|t}^{2} \sim  N\left(0,\sigma_{t+1|t}^{2}\right)$ and $\sigma_{t+1|t}^{2} = 0.05+0.75\sigma_{t|t-1}^{2}+0.2V_{t}^{2}$. Note that $E_{t}\left[Y_{t+1}\right]=E_{t}\left[V_{t+1}^{2}\right]=\sigma_{t+1|t}^{2}$. 
The benchmark forecast is $X_{1t} = \exp(-0.045)U_{1t}Y_{t}$, where $\ln U_{1t} \sim i.i.d.N\left(0,0.09\right)$. Note that $E\left[\exp(-0.045)U_{1t}\right]=1$ and the benchmark forecast is an unbiased forecast. 
Let $\hat{\sigma}_{t+1|t}^{2}\left(p,q\right)=\hat{a}_{t}+\sum_{i=1}^{p}\hat{b}_{it}\hat{\sigma}_{t+1-i|t-i}^{2}+\sum_{j=1}^{q}\hat{c}_{jt}V_{t+1-j}^{2}$
denote a one-period ahead forecast for $\sigma_{t+1|t}^{2}$, in which $\hat{a}_{t}$, $\hat{b}_{it}$ and $\hat{c}_{jt}$ are the estimated coefficients at period $t$ obtained from using the maximized likelihood (ML). 
We use the following settings to generate the competing forecast $X_{2t}$: (1) $X_{2t}=\exp(-0.045)U_{2t}Y_{t}$, $\ln U_{2t}\sim i.i.d.N\left(0,0.09\right)$; 
(2) $X_{2t}=\hat{\sigma}_{t+1|t}^{2}\left(0,1\right)$; (3) $X_{2t}=\hat{\sigma}_{t+1|t}^{2}\left(1,1\right)$;
(4) $X_{2t}=\hat{\sigma}_{t+1|t}^{2}\left(2,2\right)$.

In setting (1), similar as the benchmark forecast, $X_{2t}$ is a random walk forecast scaled by a log-normal noise multiplying $\exp(-0.045)$. Since the noises in the benchmark and this setting follow the same distribution, $X_{1t}$ and $X_{2t}$ shall be equivalent forecasts and setting (1) is the least favorable configuration (l.f.c.) for the test. In
setting (3), $X_{2t}$ is a forecast from the correctly specified
GARCH(1,1) model and it is expected to outperform the benchmark forecast
$X_{1t}$. In setting (2) and (4), $X_{2t}$ is a forecast from misspecified models ARCH(1) and GARCH(2,2), respectively. 

\subsubsection{Simulation results}


Table \ref{table4} shows rejection frequencies of
the test statistic for using model E1. 
We can see that when the competing forecast $X_{2t}$ is either $\mu_{t+1|t}+e^{Z}(\alpha)$ or $\mu_{t+1|t}+e^{Z}(\alpha)+\varsigma(\alpha)Z_{2t}$, rejection frequency of the
test statistic increases as the length of forecast generated $T_{P}$ increases. The results are expected, since $\mu_{t+1|t}+e^{Z}(\alpha)$ is the true conditional expectation and $\mu_{t+1|t}+e^{Z}(\alpha)+\varsigma(\alpha)Z_{2t}$ has a smaller noisy perturbation than the
benchmark $X_{1t}$. In the least favorable configuration ($X_{2t}=\mu_{t+1|t}+e^{Z}(\alpha)+\varsigma(\alpha)Z_{3t}$), when $T_{P}$ is low, rejection frequency is slightly lower than the corresponding nominal size. But when $T_{P}$ increases, size of the test statistic is improved, as can be seen that the rejection frequency approaches to the corresponding significant level. For the other three settings, the results are very similar: over
different $T_{P}$ and significant levels, the rejection frequency
is at zero or a very low level. The results are also expected, since
these competing forecasts are worse forecasts than the benchmark forecast. 

Table \ref{table5} shows rejection frequencies of
the test statistic for using model E2. In the least favorable configuration, the rejection frequency behaves well. For the other five cases, the rejection frequency increases with the length of generated forecast $T_{P}$. 
As the magnitude of $\beta_{2}$ increases, on average the rejection frequency
increases. When $W_{2t}$ becomes more correlated
with $W_{1t}$, on average the rejection frequency also increases.
To sum, these results suggest that statistical power of the proposed test statistic increases when $W_{1t}$ becomes more important for $Y_{t+1}$ or correlation between $W_{1t}$ and $W_{2t}$ rises. Table \ref{table6} show rejection frequencies of
the test statistic for using model E3. As can be seen from the table, in the least favorable configuration, the rejection frequency is slightly lower than the corresponding significant level, which suggests that some size distortions occur here. For the other three cases, the rejection frequencies increase with $T_{P}$. 
\subsection{Conditional quantile forecasts}

In this subsection, we conduct simulations to understand how the proposed test statistic performs on evaluating forecasts of the conditional $\alpha-$quantile of the random variable
$Y_{t+1}$ at each period $t$. The conditional $\alpha-$quantile of $Y_{t+1}$
at period $t$ is defined as $q_{t+1|t}\left(\alpha\right):=\inf\left\{ \tau:P_{t}\left(Y_{t+1}\leq \tau\right)\geq\alpha\right\}$, 
where $P_{t}(.)$ is the conditional probability of $Y_{t+1}$ at period $t$. 

\subsubsection{Model Q1}

The data generating process for $Y_{t+1}$ used here is the same as in Subsection 4.1.2. Let $\varphi\left(x\right)$ and $\varPhi\left(x\right)$ denote density
and cumulative distribution functions of a standard normal random
variable. The conditional $\alpha-$quantile of $Y_{t+1}$ is $q_{t+1|t}(\alpha)=\mu_{t+1|t}+\varPhi^{-1}\left(\alpha\right)$, where $\varPhi^{-1}\left(\alpha\right)$
is the $\alpha-$quantile of the standard normal random variable. We set the benchmark forecast $X_{1t}=\mu_{t+1|t}+\varPhi^{-1}\left(\alpha\right)+\xi\left(\alpha\right)Z_{1t}$, where 
\[
\xi\left(\alpha\right)=\frac{\sqrt{\alpha\left(1-\alpha\right)}}{\varphi\left(\varPhi^{-1}\left(\alpha\right)\right)}
\]
and $Z_{1t}\sim i.i.d.N\left(0,0.25\right)$. 
The benchmark $X_{1t}$ is a noisy forecast for the true conditional
quantile. For the noise $Z_{1t}$, we scale it with $\xi\left(\alpha\right)$ to reflect the fact that accuracy of forecasting conditional quantiles generally depends on $\alpha$.\footnote{Note that $\xi^{2}\left(\alpha\right)/n$ is the asymptotic variance of the empirical $\alpha-$quantile for $n$ i.i.d. normal samples.} We use the following settings to generate competitors $X_{2t}$: (1) $X_{2t}=\mu_{t+1|t}+\varPhi^{-1}\left(\alpha\right)$;
(2) $X_{2t}=\mu_{t+1|t}+\varPhi^{-1}\left(\alpha\right)+\xi\left(\alpha\right)Z_{it}$,
$Z_{it}\sim i.i.d.N\left(0,\sigma_{i}^{2}\right)$ and $\sigma_{i}^{2}=0.04$,
0.25 and 1 for $i=2$, 3, and 4; (3) $X_{2t}=\varPhi^{-1}\left(\alpha\right)+\xi\left(\alpha\right)Z_{it}$,
$Z_{it}\sim i.i.d.N\left(0,\sigma_{i}^{2}\right)$, $\sigma_{i}^{2}=0.25$
and 1 for $i=3$ and 4.

In setting (1), $X_{2t}$ is the true conditional quantile. 
In setting (2), like $X_{1t}$, $X_{2t}$ can
be viewed as a noisy forecast for the true conditional quantile.
In particular, $X_{1t}$ and $X_{2t}=\mu_{t+1|t}+\varPhi^{-1}\left(\alpha\right)+\xi\left(\alpha\right)Z_{3t}$
shall be equivalent forecasts since their noisy terms both follow $N\left(0,0.25\right)$, and this case is the least favorable configuration for the test. When $X_{2t}=\mu_{t+1|t}+\varPhi^{-1}\left(\alpha\right)+\xi\left(\alpha\right)Z_{2t}$
($X_{2t}=\mu_{t+1|t}+\varPhi^{-1}\left(\alpha\right)+\xi\left(\alpha\right)Z_{4t}$),
$X_{2t}$ on average is a more accurate (less accurate) forecast than $X_{1t}$, since the noise $Z_{2t}$ ($Z_{4t}$) has a smaller (larger)
variance than $Z_{1t}$ does. In setting (3), $X_{2t}$ can be viewed
as a noisy forecast when the conditional expectation $\mu_{t+1|t}$ is replaced with the
unconditional one (zero). Also the noise has the same or a larger
variance than $Z_{1t}$ does. Thus in this case, $X_{2t}$ is expected to perform worse than $X_{1t}$. 

\subsubsection{Model Q2}
For this simulation, we set $Y_{t+1}=0.5+1.2W_{1t}+1.5W_{2t}+\varepsilon_{t+1}$, 
where $W_{1t}$, $W_{2t}$ and $\varepsilon_{t+1}$ $\sim i.i.d.N\left(0,1\right)$. We estimate
the conditional $\alpha-$quantile $q_{t+1|t}\left(\alpha\right)$ of $Y_{t+1}$ at period $t$ with $\hat{q}_{t}\left(\alpha\right)=\hat{\mu}_{t+1|t}+\hat{q}_{t}^{\varepsilon}\left(\alpha\right)$. 
Here $\hat{\mu}_{t+1|t}$ is a forecast for $E_{t}[Y_{t+1}]$ at period $t$ from a predictive regression. The predictive regression has different specifications and is estimated with the OLS with the rolling window scheme. $\hat{q}_{t}^{\varepsilon}\left(\alpha\right)$ is
the sample quantile of residuals $\hat{\varepsilon}_{i}^{t}$,
$i=t-l+1,\ldots,t$, of the predictive regression and $l=100$ is the rolling window length. 
The benchmark forecast $X_{1t}$ is given by $X_{1t} = \hat{\gamma_{t}}+\hat{\beta}_{1t}W_{1t}+\hat{q}_{t}^{\varepsilon}\left(\alpha\right)+Z_{1t}$, where $Z_{1t} \sim i.i.d.N\left(0,1\right)$ 
and $\hat{\gamma}_{t}$ and $\hat{\beta}_{1t}$ are the estimated coefficients at period $t$. 
In this case, $\hat{\mu}_{t+1|t}=\hat{\gamma_{t}}+\hat{\beta}_{1t}W_{1t}$ is a conditional expectation forecast from a misspecified predictive regression. 
The benchmark $X_{1t}$ thus can be viewed as a conditional quantile forecast from a misspecified model plus a noise $Z_{1t}$. We use
the following settings to generate the competitors $X_{2t}$: (1) $X_{2t}=\hat{\gamma_{t}}+\hat{\beta}_{1t}W_{1t}+\hat{q}_{t}^{\varepsilon}\left(\alpha\right)+Z_{2t}$,
$Z_{2t}\sim i.i.d.N\left(0,1\right)$;
(2) $X_{2t}=\hat{\gamma_{t}}+\hat{\beta}_{1t}W_{1t}+\hat{q}_{t}^{\varepsilon}\left(\alpha\right)$;
(3) $X_{2t}=\hat{\gamma_{t}}+\hat{\beta}_{1t}W_{1t}+\hat{\beta}_{2t}W_{2t}+\hat{q}_{t}^{\varepsilon}\left(\alpha\right)$;
(4) $X_{2t}=\hat{\gamma_{t}}+\hat{\beta}_{1t}W_{1t}+1.5W_{2t}+\hat{q}_{t}^{\varepsilon}\left(\alpha\right)$;
(5) $X_{2t}=0.5+1.2W_{1t}+1.5W_{2t}+\hat{q}_{t}^{\varepsilon}\left(\alpha\right)$.

In setting (1), $X_{2t}$ an equivalent forecast of $X_{1t}$, since they have the same $\hat{\mu}_{t+1|t}$ and the two noises
$Z_{1t}$ and $Z_{2t}$ have the same distribution. Hence setting
(1) is the least favorable configuration for the test. In
setting (2), $X_{2t}$ is the same as the benchmark but without the noise term. In setting (3), $\hat{\mu}_{t+1|t}$
is estimated from the correctly specified predictive regression. In setting
(4), $\hat{\mu}_{t+1|t}$ is a combination of two components: $\hat{\gamma_{t}}+\hat{\beta}_{1t}W_{1t}$
and $1.5W_{2t}$. The former is the same as the conditional expectation forecast
in setting (1) and the latter is $W_{2t}$ with its true coefficient.
In setting (5), $\hat{\mu}_{t+1|t}$ is the true conditional expectation of $Y_{t+1}$. From above, it can be seen that $X_{2t}$ in settings (2) to (5) are expected to outperform $X_{1t}$ in forecasting the conditional quantile of $Y_{t+1}$.

\subsubsection{Simulation results}

We report rejection frequencies of the proposed test statistic for using model Q1 in Table \ref{table7}. From the table, we can see that when
the competing forecast $X_{2t}$ is either $\mu_{t+1|t}+\Phi^{-1}\left(\alpha\right)$
or $\mu_{t+1|t}+\Phi^{-1}\left(\alpha\right)+\xi\left(\alpha\right)Z_{2t}$,
rejection frequency of the test statistic increases as the length
of generated forecast $T_{P}$ increases. The results are expected, 
since the two are more accurate forecasts than the benchmark $X_{1t}$. For the least favorable configuration ($X_{2t}=\mu_{t+1|t}+\Phi^{-1}\left(\alpha\right)+\xi\left(\alpha\right)Z_{3t}$
), the sizes are overall controlled well as $T_{P}$ increases. As for the other three settings, which are considered as worse forecasts than the benchmark, the results are very similar: over different $T_{P}$ and significant levels, the rejection frequency is at zero or a very low level.

Table \ref{table8} shows rejection frequencies of the proposed test statistic for using model Q2. From the table, we can see that for the least favorable configuration, overall the sizes are well controlled. We also can see that for settings (2) to (5), when $T_{P}$ is low, the rejection frequencies for the low quantiles ($\alpha=0.01$ and 0.05) are lower than those for the high quantile ($\alpha=0.5$). But as $T_{P}$ increases, the rejection frequencies increase. For settings (3) to (5), which use the correct model specification, the rejection frequencies for different quantiles approach to a satisfied level as $T_{P}$ increases. But for setting (2), which uses an incorrect model specification, the rejection frequencies for different quantiles still have some differences as $T_{P}$ increases. Overall the results suggest that as the competing forecast becomes more accurate than the benchmark, the proposed test statistic has more statistical power to detect the performance difference.


\section{Empirical applications}
\subsection{Forecasting equity risk premium of the S\&P500 Index}
In this subsection, we use the proposed test to evaluate abilities
of some predictors on forecasting risk premium of the S\&P500 index.
\citet{GW_2008} claim that some predictors which were suggested by
academic research often perform worse than the historical average
excess return on forecasting risk premium of the S\&P500 index, either
in-sample or out-of-sample. Here we re-examine the claim and focus
on the out-of-sample performances of the predictors. The main statistics used in \citet{GW_2008} for evaluating the out-of-sample forecasts are the out-of-sample R-square and difference of the root mean squared errors (dRMSE), which are based on the squared error loss function or its variant. We use the proposed test statistic to see whether the predictors can possibly outperform
the historical average excess return under other consistent loss functions. 

We consider sixteen predictors: (1) the default yield spread (dfy); (2)
inflation (infl); (3) stock variance (svar); 
(4) log dividend payout ratio (de); (5) long term yield (lty); (6) the term
spread (tms); (7) treasury-bill rates (tbl); (8) default return spread
(dfr); (9) log dividend price ratio (dp); (10) log dividend yield (dy); (11)
long term return (ltr); (12) log earnings price ratio (ep); (13) the book-to-market ratio (bm); (14) net equity expansion (ntis); (15) investment to capital ratio (ik); (16) percent equity issuing (eqis). For detailed explanations
on the predictors, please see \citet{GW_2008}. The data have three frequencies: annual (from 1927 to
2015), quarterly (from Q1-1927 to Q4-2015) and monthly (from January-1927
to December-2015).\footnote{For some predictors, their quarterly and/or monthly data are not available. Quarterly data are not available for percent equity issuing (eqis). Monthly data are not available for eqis and investment to capital ratio (ik). In addition, yearly and quarterly data for ik are only available after 1947.} The data set can be downloaded from Amit Goyal's
website: \texttt{http://www.hec.unil.ch/agoyal/}. 

\subsubsection{Single-variable predictive regressions}
The variable to be forecasted is the one-period-ahead risk premium
(expected excess return) of the S\&P500 index. To calculate the excess
return, we use the simple return (including the dividend) of the index
and then subtract the U.S. treasury bill rate from it. We use the
historical average excess return of the S\&P500 index as the benchmark
forecast. The competing forecast is constructed by using a single-variable
linear regression (including the intercept term), which is estimated with the OLS. The forecasts may be viewed as the ones that are generated from misspecified models. 
Thus using different consistent loss functions may yield different ranking results \citep{Patton_2015}. 

We use a rolling window scheme to generate the forecasts. 
The window length for the annual data is 20 years; for the quarterly data,
it is 80 quarters and for the monthly data, it is 240 months. Accordingly,
the forecasting period for the annual data is from 1947 to 2015 (69
years); for the quarterly data, it is from Q1-1947 to Q4-2015 (276
quarters)\footnote{For investment to capital ratio (ik), the forecasting period for the quarterly data is from Q1-1967 to Q4-2015 (196 quarters).} and for the monthly data, it is from January-1947 to December-2015 (828 months). 

In Table \ref{table9}, we show values of the proposed test statistic
for forecasting the conditional expectation (50\%-expectile) and the corresponding
empirical p-values. 
For comparisons, we also show p-values of the Diebold and Marino (DM) test statistic with the squared error loss and the difference of the root mean squared error loss (dRMSE) scaled by 100. The DM test statistic is obtained with the Newey-West standard error of the difference of the squared error loss.

From the table, it can be seen that the proposed test statistic is
not statistically significant at 5\% level, except in three cases
of forecasting the annual risk premium (dp, ik and eqis). 
For the DM test statistic, it is also not statistically significant 5\% level for all cases. These results suggest that there is still weak evidence to say that these predictors can effectively outperform the historical average excess return on forecasting the risk premium of the S\&P500 index, even a much larger class of consistent loss functions are considered for the forecast evaluations. 

\subsubsection{Multivariate predictive regressions}

While the results of the single-variable predictive regressions are overall not positive for the considered predictors, different combinations of them might provide improved outcomes. We next apply the proposed test on a completed list of predictive regressions generated from combinations of the predictors.

Some filtrations are conducted before the empirical analysis. First, we only focus on the cases of quarterly and monthly data since they can provide enough samples for the rolling window estimations when the predictive regressions are multivariate. We also exclude investment to capital ratio (ik) from the predictors since its sample length is shorter than others. Thus for
each of the quarterly and monthly data used here, we have fourteen predictors. Ideally we can have $2^{14}-1=16,383$ predictive regressions generated from combinations of these predictors. However, among the predictors, some of them are a linear combination of others. For example, term spread (tms) equals long term yield (lty) minus treasury-bill rates (tbl), and log earnings price ratio (ep) equals log dividend price ratio (dp) minus log dividend payout ratio (de). When these variables are simultaneously included in a predictive regression, it will result in the problem of muticollineraity in the estimation. Thus we exclude the predictive regressions in which all (lty, tms, tbl) or all (de, dp, ep) are included.

In Figure \ref{figure9} we show ordered values (from small to large) of the relevant four quantities for forecasts obtained from using the multivariate predictive regressions. 
The red crosses in each plot are values of the quantities for the single-variable predictive regressions shown in Table \ref{table9}. As can be seen from the second row of the figure, among these forecasts, only a small proportion of them have a very small p-value. For the quarterly data, only six forecasts generate empirical p-values less than 0.0025;\footnote{Since here we have a large number of candidate predictive regressions, to avoid data snooping and take multiplicity into account, we use a much more restricted criterion for the p-value than the conventional levels 0.05 and 0.01 used in the single-variable predictive regressions.} for the monthly data, the same number is 99. As shown in the third row of the figure, there are also only a few number of forecasts generating a positive dRMSE: for the quarterly data, the number is 4 (two of them are from using the single-variable regressions), and for the monthly data, the number is 13 (two of them are from using the single-variable regressions). For the DM test statistic, the p-values are all above 0.35 (0.18) for the quarterly (monthly) data. 

Finally, in Table \ref{table10} we show frequency that a predictor is included in the predictive regressions whose forecasts have the empirical p-values less than 0.0025, 0.005 and 0.01. Some predictors seem to be more often included in such predictive regressions than others (e.g., dfy and infl for the quarterly data, and dfy and ntis for the monthly data), which suggests that under certain non squared-error loss functions, using these predictors might be helpful on outperforming the historical average excess return on forecasting the risk premium of the S\&P500 index.

\subsection{Forecasting annual growth of U.S. real gross domestic product (RGDP)}
In this subsection, we use the proposed test to compare performances of experts' forecasts on annual growth of U.S. real gross domestic product (RGDP). The extremal consistent loss function used here is for the conditional expectation forecast. The data are from Survey of Professional Forecasters (SPF) conducted by Federal Reserve Bank of Philadelphia. We focus on comparing mean forecast from all experts (SPF average) and an expert's (with ID: 426) individual forecast. We use forecasts for next four quarter-to-quarter growth of U.S. RGDP to calculate forecast for the annual growth. We use both Q3-2017 vintage and the first release data of U.S. RGDP level data to calculate the realized annual growth. 
The sample period for the comparison is from Q1-1991 to Q2-2017 (106 quarters) and all the data used are in quarterly frequency. Figure \ref{figure10} shows time series plots of the Q3-2017 vintage and the first release data for annual growth of 
U.S. RGDP and the two forecasts. 

Upper panel of Table \ref{table11} shows summary statistics for the four time
series. The mean forecast can be viewed as an average of opinions
of the experts who were in the survey. It is known that such ``wisdom
of crowds'' on average has a superior performance than an individual
forecast. Results of our proposed test confirm this. As can be seen
in bottom panel of Table \ref{table11}, when the mean forecast is either the
benchmark or the competitor, empirical p-values of the proposed
test suggest that the mean forecast should at least perform equally well
or better than the individual forecast, no matter whether the Q3-2017
vintage or first release data are used as the realized target random
variable. Furthermore, when the mean forecast is the benchmark, the test result
suggests that underperformance of the individual forecast is insensitive
to the choice of consistent loss function for the conditional expectation forecast.

In upper panel of Figure \ref{figure11}, with the Q3-2017 vintage data, we plot empirical differences of consistent loss functions (SPF average minus ID: 426): exponential and homogeneous Bregman with $\alpha=0.5$, over a range of parameter values.\footnote{The plots for the case of using the first release data are very similar, so they are not shown here.} As can be seen
from the plots, the consistent loss functions chosen here all show
non-positive empirical differences, which are in line with the test results.

\subsection{Estimating Value at Risk of the daily S\&P500 index}
Value at risk (VaR) is an estimated amount of possible investment loss during a certain period. In risk management, the VaR is one of the most important measures used by regulators for quantifying banks' and financial institutions' exposures to risk. Suppose the amount of investment at the end of period $t$ is $I_{t}$ and log return of the investment at period $t+1$ is $R_{t+1}$. At period $t$, the VaR at level $\alpha$ for period $t+1$: $VaR_{\alpha,t+1}$ can be formally defined as the conditional $\alpha-$quantile of $I_{t}\times R_{t+1}$. 
For simplicity, we assume $I_{t}=\$1$ for all $t$ and thus $VaR_{\alpha,t+1}$ is equivalent to the conditional $\alpha-$quantile of $R_{t+1}$. In this subsection, we use the proposed test for conditional quantile forecasts to compare performances of four methods on estimating daily $VaR_{\alpha,t+1}$ of the S\&P500 index. 

The first method is to use sample quantile of an asset's daily log return. The second one is to assume that the asset's daily log return follows a normal distribution and the VaR is calculated with the estimated mean and variance. The two methods are simple and can be viewed as benchmarks on estimating the daily VaR. The third and fourth methods
are based on the conditional autoregressive value at risk (CAViaR)
models of \citet{EM_2004}. In the CAViaR models, $VaR_{\alpha,t+1}$
follows an AR process augmented with a function of a finite number
of lagged observable variables. Here we consider the following two specifications for the CAViaR models: 
\begin{eqnarray}
	VaR_{\alpha,t+1} & = & a+b\times VaR_{\alpha,t}+c\left|R_{t}\right|,\label{caviar_sy}\\
	VaR_{\alpha,t+1} & = & a+b\times VaR_{\alpha,t}+c_{1}\left|R_{t}\right|\mathbf{1}\left\{ R_{t}>0\right\} +c_{2}\left|R_{t}\right|\mathbf{1}\left\{ R_{t}\leq0\right\} .\label{caviar_asy}
\end{eqnarray}
The CAViaR models of (\ref{caviar_sy}) and (\ref{caviar_asy}) are
termed ``symmetric absolute value'' and ``asymmetric slope'' in
\citet{EM_2004}, and thus we use CAViaR-sy and CAViaR-asy to denote
them. Coefficients of the two CAViaR models are estimated with minimizing an average of (empirical) tick loss. 
We solve the minimization problem with the Nelder and Mead simplex algorithm.

We consider $\alpha=0.01,$ 0.025 and 0.05, which are the most often
used VaR levels in practice. All of the four methods are conducted
with a rolling window scheme with window length equal to 500. The
estimated daily $VaR_{\alpha,t+1}$ is generated as an out-of-sample forecast
of the conditional $\alpha-$quantiles of the daily S\&P500 log return. The sample period of the daily S\&P500 index data is from Jan-08-2002
to Dec-29-2017 (4,024 days) and the forecasting period is from Jan-02-2004
to Dec-29-2017 (3,524 days). Figure \ref{figure12} shows time-series
plots of the daily S\&P500 log return and the estimated daily $VaR_{\alpha,t+1}$
generated with CAViaR-sy and CAViaR-asy. Table \ref{table12} presents
summary statistics, hit proportion and value of averaged tick
loss of the estimated daily $VaR_{\alpha,t+1}$ generated with the four
methods and summary statistics of the daily S\&P500 log return. The hit proportion is an average of number of days when the
daily S\&P500 log return is no greater than the estimated daily $VaR_{\alpha,t+1}$,
which estimates the unconditional probability of an exceedance event.
From the table, it can be seen that the two CAViaR models on average
generate a lower value of tick loss than the two simple methods. 

We report values of the proposed test statistic, the corresponding
empirical p-values and p-values of the Diebold-Marino test statistic in Table \ref{table13}. The loss function used for calculating the DM test statistic is
the tick loss. The performances are compared pairwisely. In the table, methods shown in rows are benchmarks and those shown in columns
are competitors in the tests. It can be seen that when the two simple
methods are the benchmarks and the two CAViaR models are the competitors,
under the conventional significant level 0.05, the null hypotheses
are all rejected for the proposed test. But when the two CAViaR models
are the benchmarks and the two simple methods are the competitors, all the null
hypotheses are not rejected under the conventional significant level 0.05 (the smallest corresponding p-value is 0.610). The results suggest that the two CAViaR models perform at least equally well as or better than the two simple methods on estimating the daily $VaR_{\alpha,t+1}$ of the S\&P500 index under all consistent loss functions for forecasting the conditional $\alpha-$quantiles when $\alpha = 0.01$, 0.025 and 0.05. Using the DM test also show similar results. Finally, turning to a comparison of the two CAViaR models themselves, the test results suggest that CAViaR-asy seems to be more adequate than CAViaR-sy on estimating the daily $VaR_{\alpha,t+1}$ when $\alpha=$ 0.025 and 0.05.

\section{Conclusions}

In this paper, we develop statistical tests for evaluating performances of expectile and quantile forecasts of a random variable. Based on
the extremal consistent loss functions proposed by \citet{EGJK_2016}, we construct test statistics for the tests. If the null hypothesis holds, the benchmark forecast will at least perform equally well as the competing one regardless which consistent loss function is used. 
For implementing the tests, we propose to use the re-centered bootstrap to obtain empirical p-values of the test statistics. We derive asymptotic results for the proposed test statistics and for using the stationary bootstrap to construct the empirical p-values. In the simulation study, we show the proposed test statistics work reasonably well under various situations. 

We apply the proposed test on re-examining abilities of some predictors on forecasting risk premiums of the S\&P500 index. When the predictors are used individually, we find that they seldom can outperform the historical average of excess return, no matter which consistent loss functions for forecasting conditional expectation is used for evaluating the forecast performances. When we consider possible combinations of the predictors, for forecasting the quarterly and monthly risk premiums, we find a few number of them might outperform the historical average of excess return under certain consistent loss functions. With the proposed test, we also demonstrate that for forecasting U.S. RGDP annual growth, mean forecasts from all experts has a superior performance than an individual forecast, and the result is insensitive to which consistent loss function for forecasting conditional expectation is chosen. As for comparisons of estimated daily value at risk of the S\&P500 index, results from the proposed test suggest that the CAViaR type models perform better than the two benchmark methods, no matter which consistent loss function for the conditional quantile forecasts is used for the performance evaluations.
\clearpage
\bibliographystyle{ECTA}
\bibliography{ref_scoring_function}

\clearpage
\begin{sidewaystable}
	\caption{Examples for $L^{E}\left(x,y\right)$} 
	\scalebox{.9}{
	\begin{tabular}{cccll}
		\hline 
		$\phi\left(t\right)$  & Domain of $t$  & $L^{E}\left(x,y\right)$  & Name for $L^{E}_{\alpha}\left(x,y\right)$, $\alpha = 0.5$   
		& Reference\tabularnewline
		\hline 
		$t^{2}$  & $t\in\mathbb{R}$  & $\left(x-y\right)^{2}$  & Squared error loss  & -\tabularnewline
		&  &  &  & \tabularnewline
		$t\log(t)+\left(1-t\right)\log(1-t)$  & $t\in [0,1]$  & $-\log x$ if $y=1$, $-\log\left(1-x\right)$ if $y=0$  & Negative log likelihood for $Y\in\{0,1\}$ & -\tabularnewline
		&  &  &  & \tabularnewline
		$\left|t\right|^{b}$, $b>1$  & $t\in\mathbb{R}$  & $\left|y\right|^{b}-\left|x\right|^{b}-b\times sign\left(x\right)\left|x\right|^{b-1}\left(y-x\right)$  & Homogeneous Bregman loss  & Gneiting (2011)\tabularnewline
		&  &  &  & \tabularnewline
		$\frac{1}{a^{2}}\exp(at)$, $a\neq0$  & $t\in\mathbb{R}$  & $\frac{1}{a^{2}}\left[\exp\left(ay\right)-\exp\left(ax\right)\right]-\frac{1}{a}\exp\left(ax\right)\left(y-x\right)$  & Exponential (non-homogeneous)  & Patton (2015)\tabularnewline
		&  &  & Bregman loss  & \tabularnewline
		&  &  &  & \tabularnewline
		$-\log(t)$  & $t>0$  & $\frac{y}{x}-\log(\frac{y}{x})-1$  & QLIKE loss (homogeneous loss  & Patton (2011)\tabularnewline
		&  &  & with order $c=0$)  & \tabularnewline
		&  &  &  & \tabularnewline
		$t\log(t)$  & $t>0$  & $y\log\frac{y}{x}-\left(y-x\right)$  & Homogeneous loss with order $c=1$  & Patton (2011)\tabularnewline
		&  &  &  & \tabularnewline
		$\frac{1}{c^{2}-c}t^{c}$, $c\notin\left\{ 0,1\right\} $  & $t>0$  & $\frac{1}{c^{2}-c}\left(y^{c}-x^{c}\right)-\frac{1}{c-1}x^{c-1}\left(y-x\right)$  & Homogeneous loss with order $c\notin\left\{ 0,1\right\} $  & Patton (2011)\tabularnewline
		&  &  &  & \tabularnewline
		$\left(t-\theta\right)_{+}$, $\theta\in\Theta\subseteq\mathbb{R}$  & $t\in\mathbb{R}$  & $\left(y-\theta\right)_{+}-\left(x-\theta\right)_{+}-1\left\{ \theta<x\right\} \left(y-x\right)$  & Extremal consistent loss (for expectile)  & Ehm et al. (2016)\tabularnewline
		\hline 
	\end{tabular}
}
	\label{table1} 
\end{sidewaystable}

\begin{sidewaystable}
	\caption{Examples for $L^{Q}\left(x,y\right)$}
	\begin{tabular}{cccll}
		\hline 
		$\zeta\left(t\right)$  & Domain of $t$  & $L^{Q}\left(x,y\right)$  & Name for $L^{Q}_{\alpha}\left(x,y\right)$, $\alpha \in (0,1)$  
		& Reference\tabularnewline
		\hline 
		$t$  & $t\in\mathbb{R}$  & $x-y$ & Lin-lin (tick) loss & -\tabularnewline
		&  &  &  & \tabularnewline
		$t^{c}/\left|c\right|$, $c\neq0$ & $t>0$  & $\left(x^{c}-y^{c}\right)/\left|c\right|$  & Homogeneous (power) loss  & Gneiting (2011)\tabularnewline
		&  &  & with order $c\neq0$  & \tabularnewline
		&  &  &  & \tabularnewline
		$\log\left(t\right)$ & $t>0$  & $\log x-\log y$  & Homogeneous (power) loss  & Gneiting (2011)\tabularnewline
		&  &  & with order $c=0$  & \tabularnewline
		&  &  &  & \tabularnewline
		$t/\alpha$  & $t\in\mathbb{R}$ & $\left(x-y\right)/\alpha$ & Scaled lin-lin loss  & Holzmann and\tabularnewline
		&  &  &  & Eulert (2013)\tabularnewline
		&  &  &  & \tabularnewline
		$1\left\{ \theta<t\right\} $, $\theta\in\Theta\subseteq\mathbb{R}$ & $t\in\mathbb{R}$ & $1\left\{ \theta<x\right\} -1\left\{ \theta<y\right\}$ & Extremal consistent loss (for quantile) & Ehm et al. (2016)\tabularnewline
		\hline 
	\end{tabular}
	\label{table2} 
\end{sidewaystable}

\begin{table}
	\caption{The table shows rejection frequencies of the proposed test and the
		Diebold-Marino test with the squared error loss. The critical values
		of the proposed test are constructed by using the re-centered bootstrap.
		The variable to be forecasted is $E_{t}\left[Y_{t+1}\right]$, where
		$Y_{t+1}=\gamma+\beta_{1}W_{1t}+\beta_{2}W_{2t}+\varepsilon_{t+1}$,
		and $W_{1t}\sim i.i.d.N\left(0,\sigma_{W_{1}}^{2}\right)$, $W_{2t}\sim i.i.d.N\left(0,\sigma_{W_{2}}^{2}\right)$
		and $\varepsilon_{t+1}\sim i.i.d.N\left(0,\sigma_{\varepsilon}^{2}\right)$.
		$W_{1t}$, $W_{2t}$ and $\varepsilon_{t+1}$ are mutually independent.
		We set $\gamma=0.4$, $\beta_{1}=0.5$, $\beta_{2}=0.2$ and $\sigma_{W_{1}}^{2}=\sigma_{W_{2}}^{2}=1$.
		The benchmark forecast is $X_{1t}=c_{1}+b_{1}W_{1t}$ and the competing
		forecast is $X_{2t}=c_{2}+b_{2}W_{2t}$. Scenarios (1) to (3) correspond
		to different parameter settings in Section 4.1.1. We report the rejection
		frequencies at three different significant levels: 0.01, 0.05 and
		0.1. We set length of forecast $T_{p}=100$, 300 and 1000, bootstrap
		sample size $M=400$. Each scenario is simulated 1000 times.}
	\centering \scalebox{1}{ %
		\begin{tabular}{lcccccccc}
			\hline 
			&  & \multicolumn{7}{c}{Benchmark: $X_{1t}$, Competitor: $X_{2t}$ }\tabularnewline
			\hline 
			&  & \multicolumn{3}{c}{The proposed test} &  & \multicolumn{3}{c}{DM}\tabularnewline
			\cline{2-5} \cline{7-9} 
			& $T_{P}$  & 0.01  & 0.05  & 0.1  &  & 0.01  & 0.05  & 0.1\tabularnewline
			\hline 
			& 100  & 0.047  & 0.207  & 0.347  &  & 0.011  & 0.052  & 0.120\tabularnewline
			Scenario (1)  & 300  & 0.237  & 0.519  & 0.716  &  & 0.015  & 0.052  & 0.092\tabularnewline
			& 1000  & 0.968  & 1.000  & 1.000  &  & 0.007  & 0.048  & 0.102\tabularnewline
			&  &  &  &  &  &  &  & \tabularnewline
			& 100  & 0.120  & 0.317  & 0.511  &  & 0.097  & 0.272  & 0.397\tabularnewline
			Scenario (2)  & 300  & 0.419  & 0.721  & 0.875  &  & 0.237  & 0.479  & 0.608\tabularnewline
			& 1000  & 0.998  & 1.000  & 1.000  &  & 0.736  & 0.888  & 0.953\tabularnewline
			&  &  &  &  &  &  &  & \tabularnewline
			& 100  & 0.000  & 0.000  & 0.000  &  & 0.000  & 0.000  & 0.000\tabularnewline
			Scenario (3)  & 300  & 0.000  & 0.000  & 0.000  &  & 0.000  & 0.000  & 0.000\tabularnewline
			& 1000  & 0.000  & 0.000  & 0.000  &  & 0.000  & 0.000  & 0.000\tabularnewline
			\hline 
			&  & \multicolumn{7}{c}{Benchmark: $X_{2t}$, Competitor: $X_{1t}$ }\tabularnewline
			\hline 
			&  & \multicolumn{3}{c}{The proposed test} &  & \multicolumn{3}{c}{DM}\tabularnewline
			\cline{2-5} \cline{7-9} 
			& $T_{P}$  & 0.01  & 0.05  & 0.1  &  & 0.01  & 0.05  & 0.1\tabularnewline
			\hline 
			& 100  & 0.362 & 0.611 & 0.721 &  & 0.015 & 0.045 & 0.095\tabularnewline
			Scenario (1)  & 300  & 0.828 & 0.958 & 0.983 &  & 0.007 & 0.057 & 0.122\tabularnewline
			& 1000  & 1.000 & 1.000 & 1.000 &  & 0.012 & 0.057 & 0.105\tabularnewline
			&  &  &  &  &  &  &  & \tabularnewline
			& 100  & 0.217 & 0.479 & 0.599 &  & 0.000 & 0.000 & 0.001\tabularnewline
			Scenario (2)  & 300  & 0.559 & 0.791 & 0.888 &  & 0.000 & 0.000 & 0.001\tabularnewline
			& 1000  & 0.980 & 1.000 & 1.000 &  & 0.000 & 0.000 & 0.000\tabularnewline
			&  &  &  &  &  &  &  & \tabularnewline
			& 100  & 0.611 & 0.845 & 0.908 &  & 0.648 & 0.863 & 0.925\tabularnewline
			Scenario (3)  & 300  & 0.988 & 0.998 & 1.000 &  & 0.993 & 1.000 & 1.000\tabularnewline
			& 1000 & 1.000 & 1.000 & 1.000 &  & 1.000 & 1.000 & 1.000\tabularnewline
			\hline 
		\end{tabular}} \label{table3} 
	\end{table}

\begin{sidewaystable}
	\caption{The table shows rejection frequencies of the proposed test when critical
		values are constructed by using the re-centered bootstrap. The variable
		to be forecasted is the conditional $\alpha-$expectile of $Y_{t+1}$,
		where $Y_{t+1}|\mu_{t+1|t}\sim N\left(\mu_{t+1|t},1\right)$ and $\mu_{t+1|t}\sim i.i.d.N\left(0,1\right)$.
		We consider $\alpha=0.01$, 0.05 and 0.5. The benchmark forecast is
		$X_{1t} =\mu_{t+1|t}+e^{Z}\left(\alpha\right)+\varsigma\left(\alpha\right)Z_{1t},$
		where $e^{Z}\left(\alpha\right)$ is the $\alpha-$expectile of the
		standard normal random variable $Z$, $\varsigma\left(\alpha\right)=\sqrt{E\left[\left(1\left\{ Z<e^{Z}\left(\alpha\right)\right\}-\alpha\right)^{2}\left(Z-e^{Z}\left(\alpha\right)\right)^{2}\right]}/E\left[\left|1\left\{ Z<e^{Z}\left(\alpha\right)\right\}-\alpha\right|\right].$
		and $Z_{1t}\sim N\left(0,0.25\right).$ The first column shows six
		competing forecasts $X_{2t}$. Here $Z_{it}\sim N\left(0,\sigma_{i}^{2}\right)$
		and $\sigma_{i}^{2}=0.04,$ 0.25, 1 for $i=2$, 3, 4. We report the
		rejection frequencies at three different significant levels: 0.01,
		0.05 and 0.1. We set length of forecast $T_{p}=100$, 300 and 1000
		and bootstrap sample size $M=400$. Each scenario is simulated 1000
		times.}
	\centering %
	\begin{tabular}{lcccccccccccccc}
		\hline 
		&  &  &  & \multicolumn{3}{c}{$T_{p}=100$} &  & \multicolumn{3}{c}{$T_{p}=300$} &  & \multicolumn{3}{c}{$T_{p}=1000$}\tabularnewline
		\cline{5-7} \cline{9-11} \cline{13-15} 
		$X_{2t}$  &  & $\alpha$  &  & 0.01  & 0.05  & 0.1  &  & 0.01  & 0.05  & 0.1  &  & 0.01  & 0.05  & 0.1\tabularnewline
		\hline 
		&  & 0.01  &  & 0.002 & 0.062 & 0.314 &  & 0.080 & 0.519 & 0.800 &  & 0.963 & 1.000 & 1.000\tabularnewline
		$\mu_{t+1|t}+e^{Z}(\alpha)$  &  & 0.05  &  & 0.005 & 0.100 & 0.374 &  & 0.147 & 0.596 & 0.868 &  & 0.980 & 1.000 & 1.000\tabularnewline
		&  & 0.5  &  & 0.036 & 0.226 & 0.386 &  & 0.380 & 0.709 & 0.877 &  & 0.971 & 1.000 & 1.000\tabularnewline
		&  &  &  &  &  &  &  &  &  &  &  &  &  & \tabularnewline
		&  & 0.01  &  & 0.000 & 0.082 & 0.282 &  & 0.065 & 0.454 & 0.788 &  & 0.900 & 0.998 & 1.000\tabularnewline
		$\mu_{t+1|t}+e^{Z}(\alpha)+\varsigma\left(\alpha\right)Z_{2t}$  &  & 0.05  &  & 0.000 & 0.050 & 0.282 &  & 0.085 & 0.459 & 0.731 &  & 0.828 & 0.995 & 1.000\tabularnewline
		&  & 0.5  &  & 0.031 & 0.121 & 0.295 &  & 0.240 & 0.555 & 0.736 &  & 0.876 & 0.985 & 0.995\tabularnewline
		&  &  &  &  &  &  &  &  &  &  &  &  &  & \tabularnewline
		&  & 0.01  &  & 0.000 & 0.012 & 0.070 &  & 0.002 & 0.025 & 0.087 &  & 0.015 & 0.052 & 0.102\tabularnewline
		$\mu_{t+1|t}+e^{Z}(\alpha)+\varsigma\left(\alpha\right)Z_{3t}$ (l.f.c.)  &  & 0.05  &  & 0.000 & 0.010 & 0.077 &  & 0.002 & 0.052 & 0.112 &  & 0.007 & 0.062 & 0.107\tabularnewline
		&  & 0.5  &  & 0.011 & 0.026 & 0.061 &  & 0.014 & 0.053 & 0.100 &  & 0.015 & 0.066 & 0.105\tabularnewline
		&  &  &  &  &  &  &  &  &  &  &  &  &  & \tabularnewline
		&  & 0.01  &  & 0.000 & 0.000 & 0.000 &  & 0.000 & 0.000 & 0.000 &  & 0.000 & 0.000 & 0.000\tabularnewline
		$\mu_{t+1|t}+e^{Z}(\alpha)+\varsigma\left(\alpha\right)Z_{4t}$  &  & 0.05  &  & 0.000 & 0.000 & 0.000 &  & 0.000 & 0.000 & 0.000 &  & 0.000 & 0.000 & 0.000\tabularnewline
		&  & 0.5  &  & 0.000 & 0.000 & 0.000 &  & 0.000 & 0.000 & 0.000 &  & 0.000 & 0.000 & 0.000\tabularnewline
		&  &  &  &  &  &  &  &  &  &  &  &  &  & \tabularnewline
		&  & 0.01  &  & 0.000 & 0.002 & 0.012 &  & 0.000 & 0.002 & 0.007 &  & 0.000 & 0.001 & 0.002\tabularnewline
		$e^{Z}(\alpha)+\varsigma\left(\alpha\right)Z_{3t}$  &  & 0.05  &  & 0.000 & 0.002 & 0.002 &  & 0.000 & 0.000 & 0.000 &  & 0.000 & 0.000 & 0.000\tabularnewline
		&  & 0.5  &  & 0.000 & 0.000 & 0.000 &  & 0.000 & 0.000 & 0.000 &  & 0.000 & 0.000 & 0.000\tabularnewline
		&  &  &  &  &  &  &  &  &  &  &  &  &  & \tabularnewline
		&  & 0.01  &  & 0.000 & 0.000 & 0.000 &  & 0.000 & 0.000 & 0.000 &  & 0.000 & 0.000 & 0.000\tabularnewline
		$e^{Z}(\alpha)+\varsigma\left(\alpha\right)Z_{4t}$  &  & 0.05  &  & 0.000 & 0.000 & 0.000 &  & 0.000 & 0.000 & 0.000 &  & 0.000 & 0.000 & 0.000\tabularnewline
		&  & 0.5  &  & 0.000 & 0.000 & 0.000 &  & 0.000 & 0.000 & 0.000 &  & 0.000 & 0.000 & 0.000\tabularnewline
		\hline 
	\end{tabular}\label{table4} 
\end{sidewaystable}

\begin{table}
\caption{The table shows rejection frequencies of the proposed test when critical
values are constructed by using the re-centered bootstrap. The variable
to be forecasted is $E_{t}\left[Y_{t+1}\right]$. Data generating
processes for the relevant variables $Y_{t+1}$, $W_{1,t+1}$ and
$W_{2,t+1}$ are shown in Section 4.1.3. The benchmark forecast is
$X_{1t}:=f_{1,t+1|t}=\left(\hat{\gamma_{t}}+Z_{1t}\right)+\left(\hat{\beta}_{1t}+Z_{2t}\right),$
where $\hat{\gamma}_{t}$ and $\hat{\beta}_{1t}$ are the coefficients
estimated from using the OLS and rolling window scheme with window
length $T_{R}=100$, $Z_{1t}\sim i.i.d.N\left(0,0.0025\right)$ and
$Z_{2t}\sim i.i.d.N\left(0,0.0225\right)$. The first column shows
seven competing forecasts $X_{2t}:=f_{2,t+1|t}$. We report the
rejection frequencies at three different significant levels: 0.01,
0.05 and 0.1. We set length of forecast $T_{p}=100$, 300 and 1000
and bootstrap sample size $M=400$. Each scenario is simulated 1000
times.}
\centering \scalebox{0.9}{ %
\begin{tabular}{lcccccccccccc}
\hline 
 &  & \multicolumn{3}{c}{$T_{p}=100$} &  & \multicolumn{3}{c}{$T_{p}=300$} &  & \multicolumn{3}{c}{$T_{p}=1000$}\tabularnewline
\cline{3-5} \cline{7-9} \cline{11-13} 
$X_{2t}$  &  & 0.01  & 0.05  & 0.1  &  & 0.01  & 0.05  & 0.1  &  & 0.01  & 0.05  & 0.1\tabularnewline
\hline 
$\tilde{\gamma}_{t}+\tilde{\beta}_{1t}Y_{t}$ (l.f.c.)  &  & 0.011  & 0.051  & 0.125  &  & 0.015  & 0.049  & 0.086  &  & 0.011  & 0.054  & 0.101\tabularnewline
$\hat{\gamma_{t}}+\hat{\beta}_{1t}Y_{t}+\hat{\beta}_{2t}^{low}W_{1t}$  &  & 0.051  & 0.146  & 0.245  &  & 0.066  & 0.177  & 0.297  &  & 0.124  & 0.352  & 0.543\tabularnewline
$\hat{\gamma_{t}}+\hat{\beta}_{1t}Y_{t}+\hat{\beta}_{2t}^{med}W_{1t}$  &  & 0.413  & 0.721  & 0.869  &  & 0.881  & 0.985  & 1.000  &  & 1.000  & 1.000  & 1.000\tabularnewline
$\hat{\gamma_{t}}+\hat{\beta}_{1t}Y_{t}+\hat{\beta}_{2t}^{high}W_{1t}$  &  & 0.705  & 0.918  & 0.989  &  & 0.997  & 1.000  & 1.000  &  & 1.000  & 1.000  & 1.000\tabularnewline
$\hat{\gamma_{t}}+\hat{\beta}_{1t}Y_{t}+\hat{\beta}_{3t}W_{2t}^{lcr}$  &  & 0.025  & 0.126  & 0.241  &  & 0.025  & 0.176  & 0.292  &  & 0.134  & 0.383  & 0.525\tabularnewline
$\hat{\gamma_{t}}+\hat{\beta}_{1t}Y_{t}+\hat{\beta}_{3t}W_{2t}^{hcr}$  &  & 0.192  & 0.465  & 0.662  &  & 0.503  & 0.805  & 0.922  &  & 0.991  & 1.000  & 1.000\tabularnewline
\hline 
\end{tabular}} \label{table5} 
\end{table}

\begin{table}
\caption{The table shows rejection frequencies of the proposed test when critical
values are constructed by using the re-centered bootstrap. The variable
to be forecasted is $E_{t}\left[Y_{t+1}\right]$, where $Y_{t+1}=V_{t+1}^{2}$,
$V_{t+1}\sim i.i.d.N\left(0,\sigma_{t+1|t}^{2}\right)$. Data generating
processes for the relevant variables $V_{t+1}$ and $\sigma_{t+1|t}^{2}$
are shown in Section 4.1.4. The benchmark forecast is $X_{1t}:=f_{1,t+1|t}=\exp(-0.045)U_{1t}Y_{t},$
where $\ln U_{1t}\sim i.i.d.N\left(0,0.09\right)$. The first column
shows four competing forecasts $X_{2t}:=f_{2,t+1|t}$. We report
the rejection frequencies at three different significant levels: 0.01,
0.05 and 0.1. We set length of forecast $T_{p}=100$, 300 and 1000
and bootstrap sample size $M=400$. Each scenario is simulated 1000
times.}
\centering %
\begin{tabular}{lcccccccccccc}
\hline 
 &  & \multicolumn{3}{c}{$T_{p}=100$} &  & \multicolumn{3}{c}{$T_{p}=300$} &  & \multicolumn{3}{c}{$T_{p}=1000$}\tabularnewline
\cline{3-5} \cline{7-9} \cline{11-13} 
$X_{2t}$  &  & 0.01  & 0.05  & 0.1  &  & 0.01  & 0.05  & 0.1  &  & 0.01  & 0.05  & 0.1\tabularnewline
\hline 
$\exp(-0.045)U_{2t}Y_{t}$ (l.f.c.)  &  & 0.005  & 0.031  & 0.082  &  & 0.000  & 0.016  & 0.051  &  & 0.010  & 0.027  & 0.054\tabularnewline
$\hat{\sigma}_{t+1|t}^{2}\left(0,1\right)$  &  & 0.267  & 0.564  & 0.758  &  & 0.645  & 0.891  & 0.953  &  & 0.903  & 0.960  & 0.971\tabularnewline
$\hat{\sigma}_{t+1|t}^{2}\left(1,1\right)$  &  & 0.281  & 0.601  & 0.881  &  & 0.645  & 0.870  & 0.965  &  & 0.883  & 0.956  & 0.977\tabularnewline
$\hat{\sigma}_{t+1|t}^{2}\left(2,2\right)$  &  & 0.273  & 0.602  & 0.875  &  & 0.633  & 0.881  & 0.965  &  & 0.878  & 0.954  & 0.975\tabularnewline
\hline 
\end{tabular}\label{table6} 
\end{table}

\begin{sidewaystable}
\caption{The table shows rejection frequencies of the proposed test when critical
values are constructed by using the re-centered bootstrap. The variable
to be forecasted is the conditional $\alpha-$quantile of $Y_{t+1}$,
where $Y_{t+1}|\mu_{t+1|t}\sim N\left(\mu_{t+1|t},1\right)$ and $\mu_{t+1|t}\sim i.i.d.N\left(0,1\right)$.
We consider $\alpha=0.01$, 0.05 and 0.5. The benchmark forecast is
$X_{1t}:=f_{1,t+1|t}=\mu_{t+1|t}+\Phi^{-1}\left(\alpha\right)+\xi\left(\alpha\right)Z_{1t},$
where $\xi\left(\alpha\right)=\sqrt{\alpha\left(1-\alpha\right)}/\phi\left(\Phi^{-1}\left(\alpha\right)\right)$
and $Z_{1t}\sim N\left(0,0.25\right).$ The first column shows six
competing forecasts $X_{2t}:=f_{2,t+1|t}$. Here $Z_{it}\sim N\left(0,\sigma_{i}^{2}\right)$
and $\sigma_{i}^{2}=0.04,$ 0.25, 1 for $i=2$, 3, 4. We report the
rejection frequencies at three different significant levels: 0.01,
0.05 and 0.1. We set length of forecast $T_{p}=100$, 300 and 1000
and bootstrap sample size $M=400$. Each scenario is simulated 1000
times.}
\centering %
\begin{tabular}{lcccccccccccccc}
\hline 
 &  &  &  & \multicolumn{3}{c}{$T_{p}=100$} &  & \multicolumn{3}{c}{$T_{p}=300$} &  & \multicolumn{3}{c}{$T_{p}=1000$}\tabularnewline
\cline{5-7} \cline{9-11} \cline{13-15} 
$X_{2t}$  &  & $\alpha$  &  & 0.01  & 0.05  & 0.1  &  & 0.01  & 0.05  & 0.1  &  & 0.01  & 0.05  & 0.1\tabularnewline
\hline 
 &  & 0.01  &  & 0.065  & 0.441  & 0.713  &  & 0.865  & 0.988  & 1.000  &  & 1.000  & 1.000  & 1.000\tabularnewline
$\mu_{t+1|t}+\Phi^{-1}\left(\alpha\right)$  &  & 0.05  &  & 0.027  & 0.209  & 0.491  &  & 0.446  & 0.825  & 0.958  &  & 1.000  & 1.000  & 1.000\tabularnewline
 &  & 0.5  &  & 0.115  & 0.382  & 0.566  &  & 0.554  & 0.815  & 0.915  &  & 0.998  & 1.000  & 1.000\tabularnewline
 &  &  &  &  &  &  &  &  &  &  &  &  &  & \tabularnewline
 &  & 0.01  &  & 0.085  & 0.411  & 0.643  &  & 0.830  & 0.973  & 0.998  &  & 1.000  & 1.000  & 1.000\tabularnewline
$\mu_{t+1|t}+\Phi^{-1}\left(\alpha\right)+\xi\left(\alpha\right)Z_{2t}$  &  & 0.05  &  & 0.025  & 0.190  & 0.387  &  & 0.297  & 0.706  & 0.853  &  & 0.978  & 1.000  & 1.000\tabularnewline
 &  & 0.5  &  & 0.070  & 0.264  & 0.411  &  & 0.299  & 0.618  & 0.768  &  & 0.875  & 0.983  & 0.998\tabularnewline
 &  &  &  &  &  &  &  &  &  &  &  &  &  & \tabularnewline
 &  & 0.01  &  & 0.002  & 0.042  & 0.077  &  & 0.020  & 0.050  & 0.107  &  & 0.007  & 0.055  & 0.102\tabularnewline
$\mu_{t+1|t}+\Phi^{-1}\left(\alpha\right)+\xi\left(\alpha\right)Z_{3t}$
(l.f.c.)  &  & 0.05  &  & 0.000  & 0.022  & 0.092  &  & 0.002  & 0.047  & 0.095  &  & 0.007  & 0.050  & 0.087\tabularnewline
 &  & 0.5  &  & 0.010  & 0.027  & 0.062  &  & 0.005  & 0.047  & 0.097  &  & 0.012  & 0.042  & 0.085\tabularnewline
 &  &  &  &  &  &  &  &  &  &  &  &  &  & \tabularnewline
 &  & 0.01  &  & 0.000  & 0.000  & 0.000  &  & 0.000  & 0.000  & 0.000  &  & 0.000  & 0.000  & 0.000\tabularnewline
$\mu_{t+1|t}+\Phi^{-1}\left(\alpha\right)+\xi\left(\alpha\right)Z_{4t}$  &  & 0.05  &  & 0.000  & 0.000  & 0.000  &  & 0.000  & 0.000  & 0.000  &  & 0.000  & 0.000  & 0.000\tabularnewline
 &  & 0.5  &  & 0.002  & 0.002  & 0.005  &  & 0.000  & 0.000  & 0.000  &  & 0.000  & 0.000  & 0.000\tabularnewline
 &  &  &  &  &  &  &  &  &  &  &  &  &  & \tabularnewline
 &  & 0.01  &  & 0.000  & 0.002  & 0.030  &  & 0.000  & 0.007  & 0.030  &  & 0.002  & 0.017  & 0.027\tabularnewline
$\Phi^{-1}\left(\alpha\right)+\xi\left(\alpha\right)Z_{3t}$  &  & 0.05  &  & 0.000  & 0.005  & 0.007  &  & 0.000  & 0.000  & 0.002  &  & 0.000  & 0.007  & 0.015\tabularnewline
 &  & 0.5  &  & 0.000  & 0.000  & 0.000  &  & 0.000  & 0.000  & 0.000  &  & 0.000  & 0.000  & 0.000\tabularnewline
 &  &  &  &  &  &  &  &  &  &  &  &  &  & \tabularnewline
 &  & 0.01  &  & 0.000  & 0.000  & 0.000  &  & 0.000  & 0.000  & 0.000  &  & 0.000  & 0.000  & 0.000\tabularnewline
$\Phi^{-1}\left(\alpha\right)+\xi\left(\alpha\right)Z_{4t}$  &  & 0.05  &  & 0.000  & 0.000  & 0.000  &  & 0.000  & 0.000  & 0.000  &  & 0.000  & 0.000  & 0.000\tabularnewline
 &  & 0.5  &  & 0.000  & 0.000  & 0.000  &  & 0.000  & 0.000  & 0.000  &  & 0.000  & 0.000  & 0.000\tabularnewline
\hline 
\end{tabular}\label{table7} 
\end{sidewaystable}

\begin{sidewaystable}
	\caption{The table shows rejection frequencies of the proposed test when critical
		values are constructed by using the re-centered bootstrap.
		The variable to be forecasted is the conditional $\alpha-$quantile
		of $Y_{t+1}$, where $Y_{t+1}=0.5+1.2W_{1t}+1.5W_{2t}+\varepsilon_{t+1}$,
		where $W_{1t}$, $W_{2t}$ and $\varepsilon_{t+1}$ are i.i.d. and
		each follows $N\left(0,1\right)$. We consider $\alpha=0.01$, 0.05
		and 0.5. The benchmark forecast is $X_{1t}:=f_{1,t+1|t}=\hat{\gamma_{t}}+\hat{\beta}_{1t}W_{1t}+\hat{q}_{t}^{\varepsilon}(\alpha)+Z_{1t}$,
		where $\hat{q}_{t}^{\varepsilon}(\alpha)$ is the empirical quantile of residuals estimated at period $t$ and $Z_{1t}\sim N\left(0,1\right).$ The first column shows five competing forecasts $X_{2t}:=f_{2,t+1|t}$. We report the rejection frequencies at three different significant levels: 0.01, 0.05 and 0.1. We set length of forecast $T_{p}=100$, 300 and 1000 and bootstrap sample size $M=400$. Each scenario is simulated 1000 times.}
	\centering %
	\begin{tabular}{lcccccccccccccc}
		\hline 
		&  &  &  & \multicolumn{3}{c}{$T_{p}=100$} &  & \multicolumn{3}{c}{$T_{p}=300$} &  & \multicolumn{3}{c}{$T_{p}=1000$}\tabularnewline
		\cline{5-7} \cline{9-11} \cline{13-15} 
		$X_{2t}$  &  & $\alpha$  &  & 0.01  & 0.05  & 0.1  &  & 0.01  & 0.05  & 0.1  &  & 0.01  & 0.05  & 0.1\tabularnewline
		\hline 
		&  & 0.01  &  & 0.002 & 0.037 & 0.142 &  & 0.000 & 0.012 & 0.055 &  & 0.010 & 0.040 & 0.077\tabularnewline
		$\hat{\gamma_{t}}+\hat{\beta}_{1t}W_{1t}+\hat{q}_{t}^{\varepsilon}(\alpha)+Z_{1t}$
		(l.f.c.) &  & 0.05  &  & 0.000 & 0.042 & 0.140 &  & 0.000 & 0.047 & 0.107 &  & 0.010 & 0.042 & 0.102\tabularnewline
		&  & 0.5  &  & 0.020 & 0.060 & 0.120 &  & 0.015 & 0.060 & 0.100 &  & 0.017 & 0.047 & 0.092\tabularnewline
		&  &  &  &  &  &  &  &  &  &  &  &  &  & \tabularnewline
		&  & 0.01  &  & 0.002 & 0.042 & 0.165 &  & 0.007 & 0.070 & 0.224 &  & 0.135 & 0.516 & 0.733\tabularnewline
		$\hat{\gamma_{t}}+\hat{\beta}_{1t}W_{1t}+\hat{q}_{t}^{\varepsilon}(\alpha)$ &  & 0.05  &  & 0.017 & 0.137 & 0.302 &  & 0.060 & 0.287 & 0.526 &  & 0.524 & 0.873 & 0.958\tabularnewline
		&  & 0.5  &  & 0.127 & 0.414 & 0.594 &  & 0.429 & 0.713 & 0.853 &  & 0.970 & 1.000 & 1.000\tabularnewline
		&  &  &  &  &  &  &  &  &  &  &  &  &  & \tabularnewline
		&  & 0.01  &  & 0.022 & 0.269 & 0.591 &  & 0.137 & 0.541 & 0.826 &  & 0.960 & 1.000 & 1.000\tabularnewline
		$\hat{\gamma_{t}}+\hat{\beta}_{1t}W_{1t}+\hat{\beta}_{2t}W_{2t}+\hat{q}_{t}^{\varepsilon}(\alpha)$  &  & 0.05  &  & 0.354 & 0.788 & 0.925 &  & 0.955 & 1.000 & 1.000 &  & 1.000 & 1.000 & 1.000\tabularnewline
		&  & 0.5  &  & 0.991 & 1.000 & 1.000 &  & 1.000 & 1.000 & 1.000 &  & 1.000 & 1.000 & 1.000\tabularnewline
		&  &  &  &  &  &  &  &  &  &  &  &  &  & \tabularnewline
		&  & 0.01  &  & 0.025 & 0.289 & 0.606 &  & 0.165 & 0.531 & 0.820 &  & 0.945 & 0.998 & 1.000\tabularnewline
		$\hat{\gamma_{t}}+\hat{\beta}_{1t}W_{1t}+\beta_{2}W_{2t}+\hat{q}_{t}^{\varepsilon}(\alpha)$  &  & 0.05  &  & 0.327 & 0.781 & 0.933 &  & 0.963 & 0.998 & 1.000 &  & 1.000 & 1.000 & 1.000\tabularnewline
		&  & 0.5  &  & 0.988 & 1.000 & 1.000 &  & 1.000 & 1.000 & 1.000 &  & 1.000 & 1.000 & 1.000\tabularnewline
		&  &  &  &  &  &  &  &  &  &  &  &  &  & \tabularnewline
		&  & 0.01  &  & 0.025 & 0.307 & 0.631 &  & 0.160 & 0.554 & 0.828 &  & 0.965 & 1.000 & 1.000\tabularnewline
		$\gamma+\beta_{1}W_{1t}+\beta_{2}W_{2t}+\hat{q}_{t}^{\varepsilon}(\alpha)$  &  & 0.05  &  & 0.357 & 0.805 & 0.928 &  & 0.968 & 1.000 & 1.000 &  & 1.000 & 1.000 & 1.000\tabularnewline
		&  & 0.5  &  & 0.995 & 1.000 & 1.000 &  & 1.000 & 1.000 & 1.000 &  & 1.000 & 1.000 & 1.000\tabularnewline
		\hline
	\end{tabular}\label{table8} 
\end{sidewaystable}

\begin{sidewaystable}
	\caption{The table shows the value of the proposed test statistic for the conditional
		expectation (50\%-expectile), the corresponding empirical p-value, difference
		of the root mean squared loss function (dRMSE) and the p-value of the Diebold
		and Marino test statistic with the squared loss function (DM) for
		testing predictability of the risk premium of the S\&P500 index. The empirical p-value
		is obtained from using the re-centered bootstrap with bootstrap sample
		size $M=400$. For length of forecasting periods: annual: 69 years
		(1947 to 2015); quarterly: 276 quarters (Q1-1947 to Q4-2015)
		and monthly: 828 months (Jan-1947 to Dec-2015).
		}
	\centering %
	\begin{tabular}{lcccccccccccccc}
		\hline 
		& \multicolumn{4}{c}{Annual: 1927 to 2015} &  & \multicolumn{4}{c}{Quarterly: Q1-1927 to Q4-2015 } &  & \multicolumn{4}{c}{Monthly: Jan-1927 to Dec-2015}\tabularnewline
		\cline{2-15} 
		& Test stat.  & p-value  & dRMSE  & DM  &  & Test Stat.  & p-value  & dRMSE  & DM  &  & Test Stat.  & p-value  & dRMSE  & DM\tabularnewline
		\hline 
		dfy  & 0.058  & 0.355  & -1.806  & 0.922  &  & 0.015  & 0.795  & -0.332  & 0.979  &  & 0.007  & 0.838  & -0.039  & 0.856 \tabularnewline
		infl  & 0.074  & 0.120  & -1.251  & 0.976  &  & 0.031  & 0.320  & -0.056  & 0.690  &  & 0.026  & 0.060  & -0.003  & 0.551 \tabularnewline
		svar  & 0.041  & 0.753  & -1.530  & 0.983  &  & 0.031  & 0.293  & -0.854  & 0.834  &  & 0.002  & 0.983  & -0.069  & 0.949 \tabularnewline
		de  & 0.055  & 0.368  & -0.898  & 0.974  &  & 0.032  & 0.313  & -0.138  & 0.919  &  & 0.027  & 0.058  & -0.023  & 0.766 \tabularnewline
		lty  & 0.114  & 0.063  & -1.178  & 0.966  &  & 0.043  & 0.178  & -0.101  & 0.839  &  & 0.024  & 0.125  & -0.011  & 0.647 \tabularnewline
		tms  & 0.039  & 0.693  & -1.394  & 0.938  &  & 0.046  & 0.158  & -0.060  & 0.681  &  & 0.026  & 0.125  & -0.012  & 0.697 \tabularnewline
		tbl  & 0.098  & 0.110  & -1.650  & 0.971  &  & 0.040  & 0.195  & -0.191  & 0.871  &  & 0.021  & 0.255  & -0.035  & 0.875 \tabularnewline
		dfr  & 0.079  & 0.168  & -1.070  & 0.963  &  & 0.014  & 0.858  & -0.193  & 0.989  &  & 0.009  & 0.755  & -0.030  & 0.937 \tabularnewline
		dp  & 0.133  & \textbf{0.045}  & 0.615  & 0.204  &  & 0.050  & 0.140  & 0.039  & 0.355  &  & 0.024  & 0.255  & 0.004  & 0.425 \tabularnewline
		dy  & 0.131  & 0.078  & -0.282  & 0.629  &  & 0.048  & 0.183  & 0.037  & 0.400  &  & 0.023  & 0.283  & 0.010  & 0.334 \tabularnewline
		ltr  & 0.075  & 0.265  & -0.838  & 0.939  &  & 0.028  & 0.453  & -0.044  & 0.718  &  & 0.021  & 0.233  & -0.002  & 0.540 \tabularnewline
		ep  & 0.107  & 0.125  & -0.417  & 0.704  &  & 0.028  & 0.445  & -0.295  & 0.918  &  & 0.024  & 0.128  & -0.031  & 0.795 \tabularnewline
		b.m  & 0.113  & 0.110  & -0.767  & 0.746  &  & 0.018  & 0.728  & -0.139  & 0.847  &  & 0.014  & 0.585  & -0.028  & 0.888 \tabularnewline
		ntis  & 0.071  & 0.228  & -0.733  & 0.945  &  & 0.017  & 0.715  & -0.205  & 0.938  &  & 0.014  & 0.543  & -0.024  & 0.858\tabularnewline
		ik  & 0.135  & \textbf{0.028}  & 0.136  & 0.433  &  & 0.055  & 0.078  & 0.053  & 0.302  &  & -  & -  & -  & -\tabularnewline
		eqis  & 0.150  & \textbf{0.015}  & -0.076  & 0.535  &  & -  & -  & -  & -  &  & -  & -  & -  & -\tabularnewline
		\hline 
	\end{tabular}\label{table9} 
\end{sidewaystable}
\begin{table}
	\caption{The table shows frequencies that a predictor is included in the predictive
		regressions whose forecasts have the empirical p-values less than
		0.0025, 0.005 and 0.01. For the quarterly data, there are 6, 9 and 31 predictive
		regressions whose forecasts have the empirical p-values less than
		0.0025, 0.005 and 0.01; for the monthly data, the numbers are 99,
		192 and 422. }
	\centering
	\scalebox{0.9}{
	\begin{tabular}{cccccccc}
		\hline 
		& \multicolumn{3}{c}{Quarterly data} &  & \multicolumn{3}{c}{Monthly }\tabularnewline
		\cline{2-4} \cline{6-8} 
		& $\leq0.0025$ (6)  & $\leq0.005$ (9)  & $\leq0.01$(31)  &  & $\leq0.0025$ (99)  & $\leq0.005$ (192) & $\leq0.01$ (422)\tabularnewline
		\hline 
		dfy  & 6  & 8 & 24  &  & 87  & 162 & 348\tabularnewline
		infl  & 6  & 9 & 25  &  & 79  & 154 & 323\tabularnewline
		svar  & 0  & 2 & 10  &  & 30  & 51 & 114\tabularnewline
		de  & 5  & 6 & 21  &  & 59  & 114 & 247\tabularnewline
		lty  & 4  & 5 & 22  &  & 73  & 135 & 282\tabularnewline
		tms  & 3  & 6 & 17  &  & 36  & 66 & 154\tabularnewline
		tbl  & 5  & 7 & 22  &  & 38  & 76 & 165\tabularnewline
		dfr  & 0  & 0 & 2  &  & 69  & 128 & 275\tabularnewline
		dp  & 3  & 6 & 20  &  & 39  & 78 & 177\tabularnewline
		dy  & 1  & 2 & 8  &  & 48  & 97 & 208\tabularnewline
		ltr  & 0  & 0 & 0  &  & 13  & 26 & 71\tabularnewline
		ep  & 4  & 6 & 19  &  & 55  & 112 & 248\tabularnewline
		b.m  & 0  & 0 & 1  &  & 6  & 14 & 32\tabularnewline
		ntis  & 0  & 0 & 4  &  & 87  & 160 & 324\tabularnewline
		\hline 
	\end{tabular}}
	\label{table10} 
\end{table}

\begin{table}
		\caption{Upper panel of the table shows summary statistics of the Q3-2017 vintage and first release data for annual growth of U.S. real gross domestic product (RGDP) and two corresponding forecasts from Survey of Professional Forecasters conducted by Fed. Philadelphia: mean forecast from all experts (SPF average) and a forecast from an expert with ID. 426 (ID: 426). Bottom panel shows results of the proposed test when either SPF average or ID: 426 is the benchmark forecast. Both Q3-2017 vintage and first release data are used as the realized value of the target random variable. The data is in quarterly frequency and sample period is from Q1-1991 to Q2-2017 (106 quarters). }
		\centering
\begin{tabular}{lccccc}
	\hline 
	\multicolumn{6}{c}{Summary statistics}\tabularnewline
	\hline 
	&Q3-2017 vintage & First release &  & SPF average & ID: 426\tabularnewline
	\hline 
	Mean & 2.438 & 2.383 &  & 2.747 & 2.617\tabularnewline
	Std. & 1.775 & 1.428 &  & 0.530 & 0.786\tabularnewline
	Min. & -4.062 & -2.832 &  & 0.806 & 0.464\tabularnewline
	Max. & 5.266 & 5.300 &  & 4.006 & 4.198\tabularnewline
	\hline 
	\multicolumn{6}{c}{}\tabularnewline
	\multicolumn{6}{c}{Test results}\tabularnewline
	\hline 
	& \multicolumn{2}{c}{Q3-2017 vintage} &  & \multicolumn{2}{c}{First release}\tabularnewline
	\cline{2-3} \cline{5-6} 
	& Test stat. & p-value &  & Test stat. & p-value\tabularnewline
	\hline 
	$X_{1t}:$ SPF average & 0.000 & 1.000 &  & 0.000 & 1.000\tabularnewline
	$X_{2t}:$ ID: 426 &  &  &  &  & \tabularnewline
	&  &  &  &  & \tabularnewline
	$X_{1t}:$ ID: 426 & 2.380 & 0.010 &  & 1.917 & 0.012\tabularnewline
	$X_{2t}:$ SPF average &  &  &  &  & \tabularnewline
	\hline 
\end{tabular}
\label{table11}
\end{table}

\begin{table}
	\caption{The table shows summary statistics, hit proportion and value of averaged tick loss of the estimated $VaR_{\alpha,t+1}$ generated with the four methods and summary statistics of of the daily S\&P500 log return. The summary statistics are shown in percentage. The whole sample period of the daily S\&P500 log return is from Jan-08-2002 to Dec-29-2017 (4,024 days) and the forecast period is from Jan-02-2004 to Dec-29-2017 (3,524 days).}
	\centering
\begin{tabular}{lcccccc}
	\hline 
	& Mean & Std. & Min. & Max. & Hit prop. & tick loss\tabularnewline
	\hline 
	S\&P500 return & 0.021 & 1.192 & -9.470 & 10.957 & - & -\tabularnewline
	(whole sample) &  &  &  &  &  & \tabularnewline
	S\&P500 return & 0.025 & 1.162 & -9.470 & 10.957 & - & -\tabularnewline
	(forecast period) &  &  &  &  &  & \tabularnewline
	Sq. &  &  &  &  &  & \tabularnewline
	$\alpha=0.01$ & -3.029 & 1.484 & -6.317 & -1.500 & 0.015 & 0.046\tabularnewline
	$\alpha=0.025$ & -2.363 & 1.140 & -4.938 & -1.167 & 0.028 & 0.089\tabularnewline
	$\alpha=0.05$ & -1.797 & 0.759 & -3.532 & -0.962 & 0.051 & 0.140\tabularnewline
	Norm &  &  &  &  &  & \tabularnewline
	$\alpha=0.01$ & -2.536 & 1.154 & -5.299 & -1.413 & 0.024 & 0.051\tabularnewline
	$\alpha=0.025$ & -2.133 & 0.978 & -4.482 & -1.182 & 0.037 & 0.091\tabularnewline
	$\alpha=0.05$ & -1.786 & 0.826 & -3.779 & -0.984 & 0.054 & 0.141\tabularnewline
	CAViaR-sy &  &  &  &  &  & \tabularnewline
	$\alpha=0.01$ & -2.567 & 1.577 & -13.878 & -0.878 & 0.013 & 0.034\tabularnewline
	$\alpha=0.025$ & -2.129 & 1.268 & -11.794 & -0.646 & 0.028 & 0.070\tabularnewline
	$\alpha=0.05$ & -1.749 & 1.267 & -11.673 & -0.453 & 0.047 & 0.119\tabularnewline
	CAViaR-asy &  &  &  &  &  & \tabularnewline
	$\alpha=0.01$ & -2.490 & 1.683 & -15.270 & -0.279 & 0.016 & 0.032\tabularnewline
	$\alpha=0.025$ & -2.140 & 1.427 & -11.882 & -0.205 & 0.027 & 0.067\tabularnewline
	$\alpha=0.05$ & -1.743 & 1.240 & -10.946 & -0.157 & 0.049 & 0.116\tabularnewline
	\hline 
\end{tabular}
	\label{table12}
\end{table}

\begin{sidewaystable}
	\caption{The table shows the value of the proposed test statistic for the $\alpha$ conditional quantile, the corresponding empirical p-value and the p-value of the Diebold and Marino test statistic with the tick loss function (DM) for evaluating estimated daily $VaR_{\alpha,t+1}$ of the S\&P500 index. The tests are conducted pairwisely. The methods shown in rows are benchmarks and those shown in column are competitors in the tests. The empirical p-value is obtained from using the re-centered bootstrap with bootstrap sample
	size $M=400$. The length of forecasting periods is 3,524 days (from Jan-02-2004 to Dec-29-2017). 
	}
	\centering
	\scalebox{0.9}{
\begin{tabular}{llcccccccccccccc}
	\hline 
	&  & \multicolumn{4}{c}{$\alpha=0.01$} &  & \multicolumn{4}{c}{$\alpha=0.025$} &  & \multicolumn{4}{c}{$\alpha=0.05$}\tabularnewline
	\cline{3-6} \cline{8-11} \cline{13-16} 
	&  & Sq. & Norm & CAViaR-sy & CAViaR-asy &  & Sq. & Norm & CAViaR-sy & CAViaR-asy &  & Sq. & Norm & CAViaR-sy & CAViaR-asy\tabularnewline
	\hline 
	& Test stat. & - & 0.060 & 0.212 & 0.263 &  & - & 0.147 & 0.366 & 0.414 &  & - & 0.068 & 0.566 & 0.713\tabularnewline
	Sq. & p-value & - & 0.940 & 0.028 & 0.012 &  & - & 0.322 & 0.005 & 0.005 &  & - & 0.932 & 0.005 & 0.000\tabularnewline
	& DM & - & 0.959 & 0.003 & 0.001 &  & - & 0.888 & 0.001 & 0.001 &  & - & 0.928 & 0.001 & 0.000\tabularnewline
	&  &  &  &  &  &  &  &  &  &  &  &  &  &  & \tabularnewline
	& Test stat. & 0.211 & - & 0.296 & 0.346 &  & 0.179 & - & 0.459 & 0.523 &  & 0.243 & - & 0.685 & 0.622\tabularnewline
	Norm & p-value & 0.048 & - & 0.025 & 0.030 &  & 0.292 & - & 0.003 & 0.005 &  & 0.040 & - & 0.000 & 0.002\tabularnewline
	& DM & 0.041 & - & 0.006 & 0.004 &  & 0.112 & - & 0.001 & 0.001 &  & 0.072 & - & 0.000 & 0.000\tabularnewline
	&  &  &  &  &  &  &  &  &  &  &  &  &  &  & \tabularnewline
	& Test stat. & 0.079 & 0.046 & - & 0.105 &  & 0.042 & 0.042 & - & 0.248 &  & 0.001 & 0.001 & - & 0.285\tabularnewline
	CAViaR-sy & p-value & 0.610 & 0.952 & - & 0.278 &  & 0.982 & 0.970 & - & 0.002 &  & 0.998 & 0.998 & - & 0.010\tabularnewline
	& DM & 0.997 & 0.994 & - & 0.090 &  & 0.999 & 0.999 & - & 0.030 &  & 0.999 & 1.000 & - & 0.031\tabularnewline
	&  &  &  &  &  &  &  &  &  &  &  &  &  &  & \tabularnewline
	& Test stat. & 0.069 & 0.069 & 0.091 & - &  & 0.037 & 0.037 & 0.053 & - &  & 0.032 & 0.032 & 0.057 & -\tabularnewline
	CAViaR-asy & p-value & 0.778 & 0.895 & 0.390 & - &  & 0.992 & 0.982 & 0.908 & - &  & 0.998 & 0.995 & 0.940 & -\tabularnewline
	& DM & 0.999 & 0.996 & 0.910 & - &  & 0.999 & 0.999 & 0.970 & - &  & 1.000 & 1.000 & 0.969 & -\tabularnewline
	\hline 
\end{tabular}

}
	\label{table13}
\end{sidewaystable}

\clearpage
\begin{figure}[ht]
	\begin{center}
		\mbox{
			\subfigure{\includegraphics[height=7cm,width=8cm]{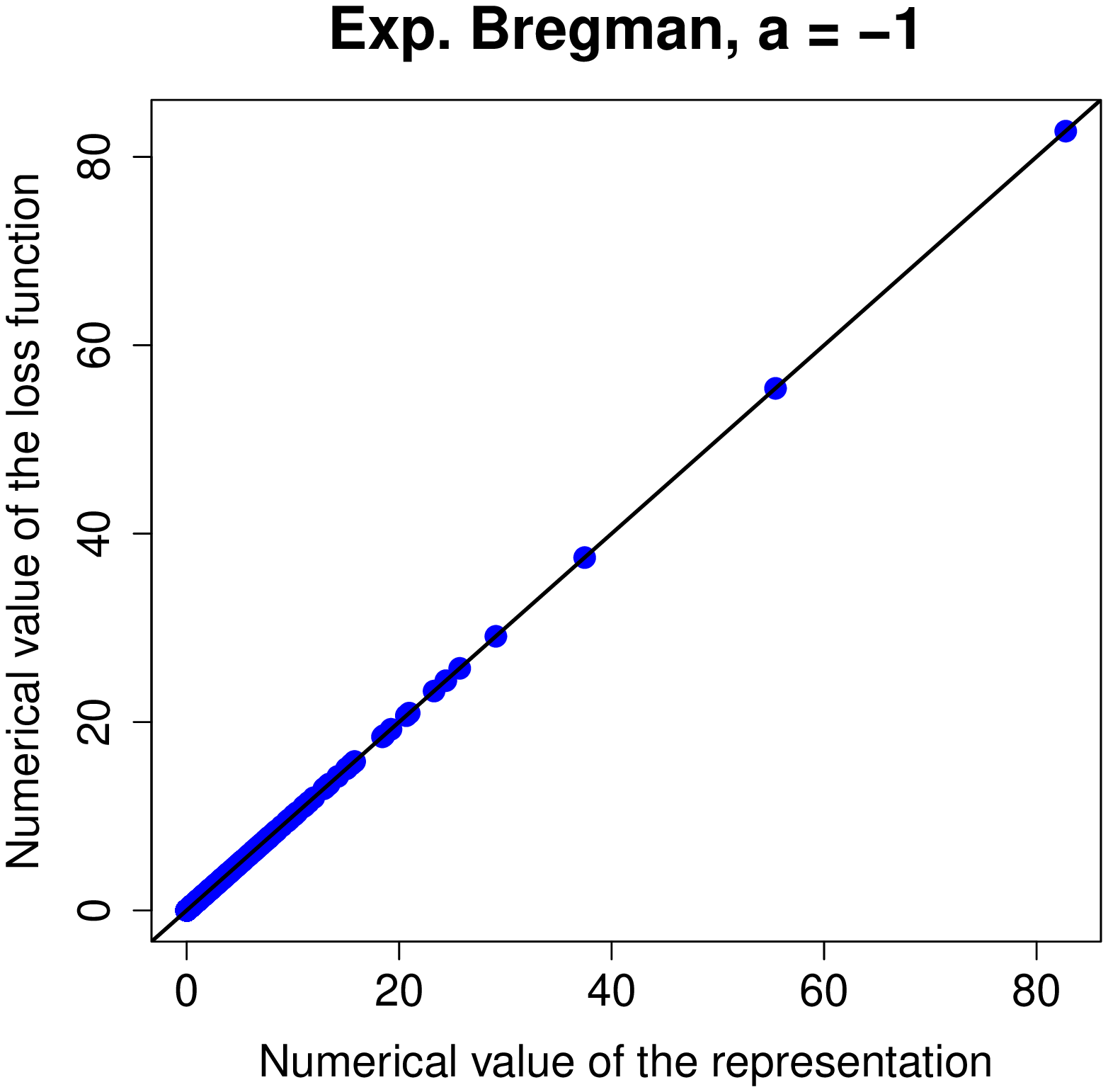}}
			\subfigure{\includegraphics[height=7cm,width=8cm]{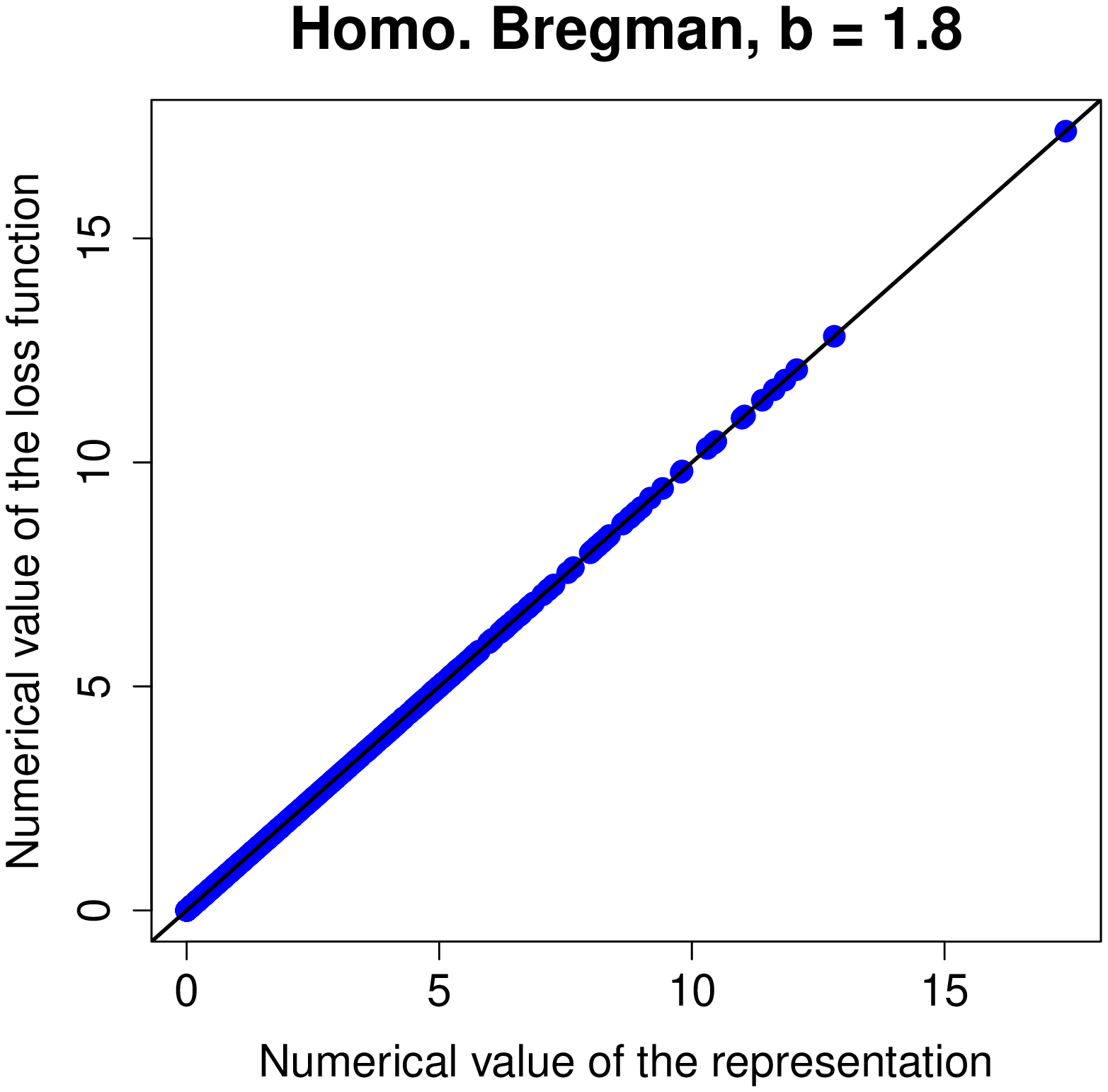}}
		} 
		\mbox{
		\subfigure{\includegraphics[height=7cm,width=8cm]{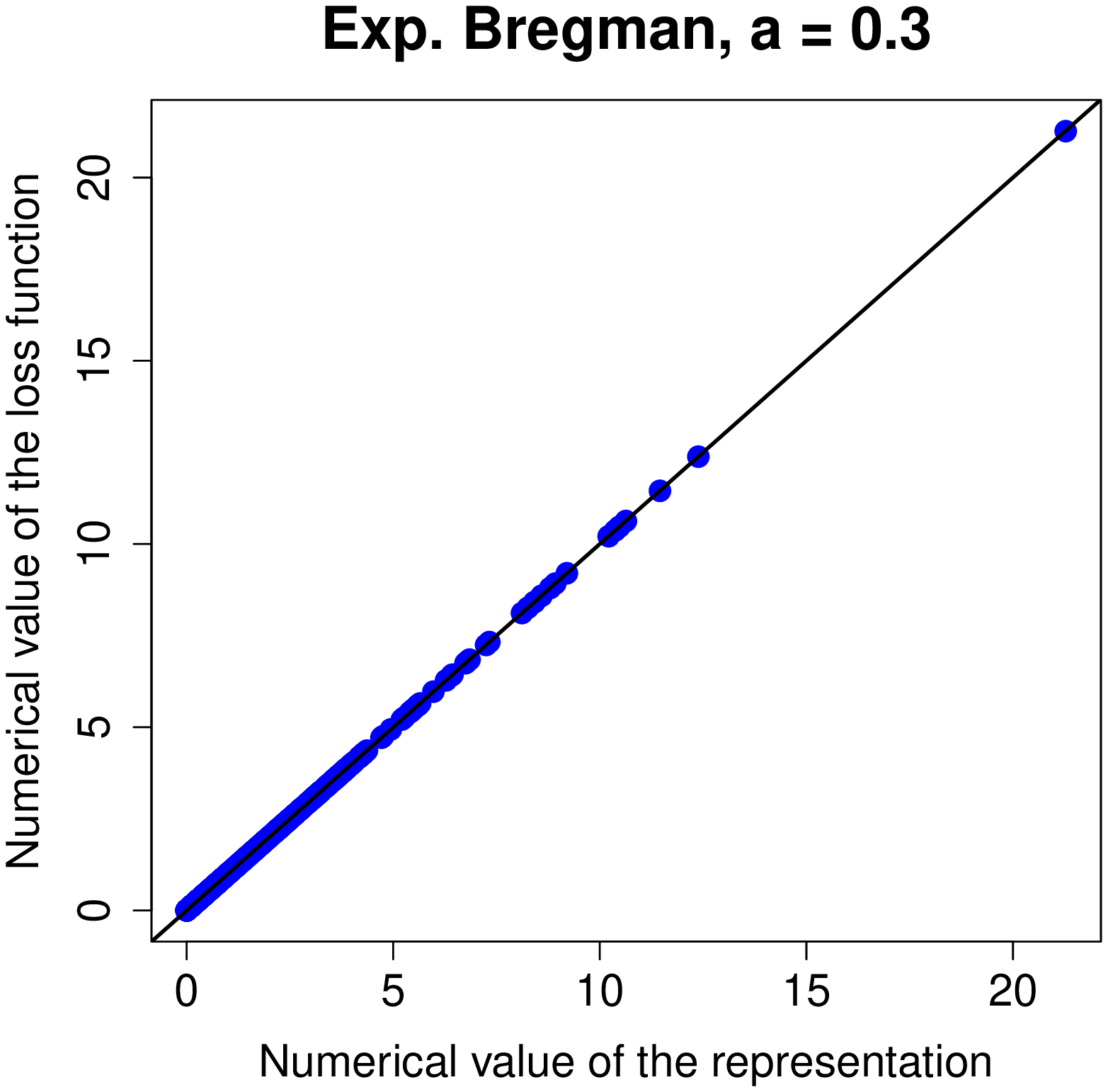}}
		\subfigure{\includegraphics[height=7cm,width=8cm]{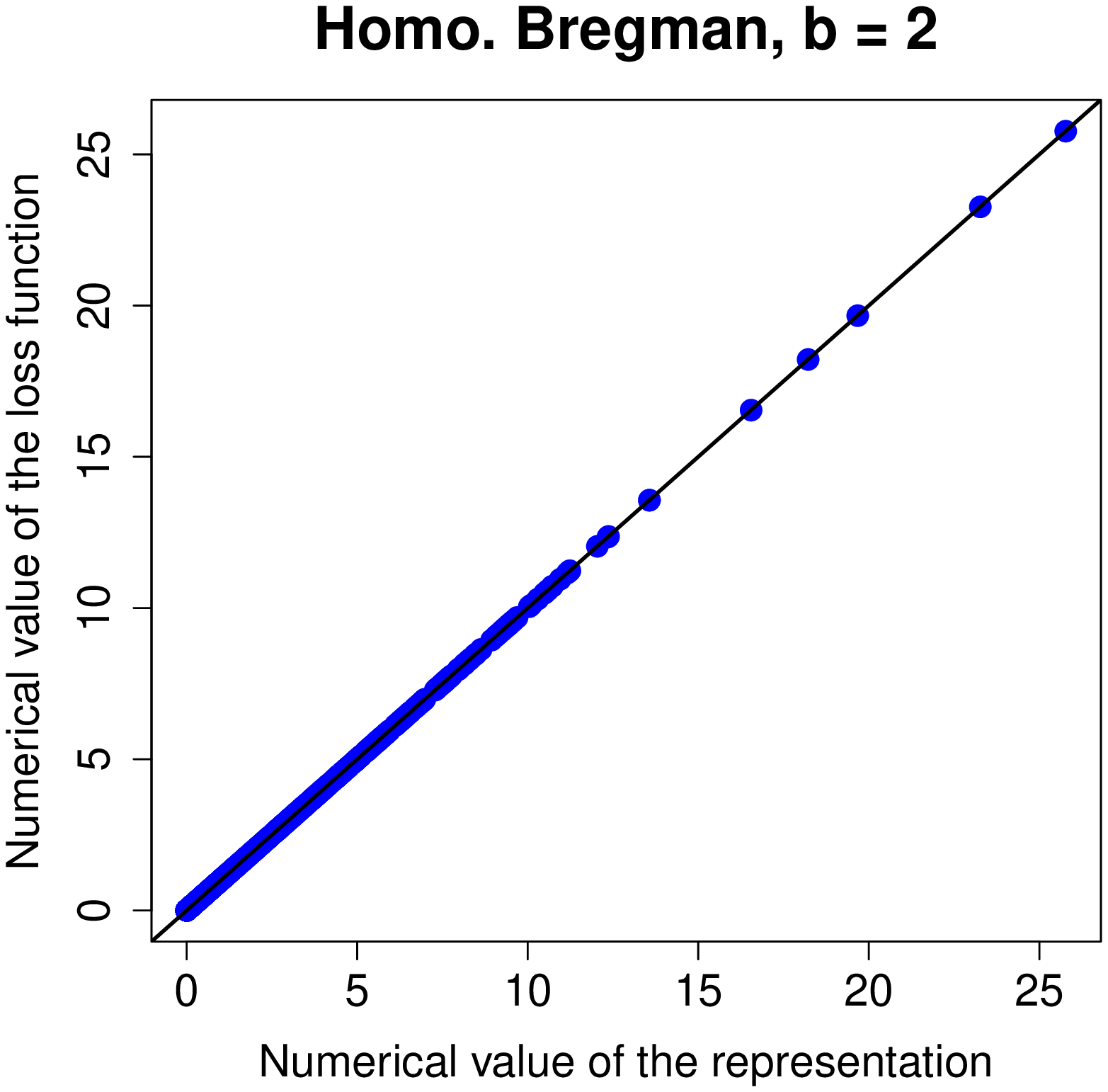}}
		}
		\mbox{
		\subfigure{\includegraphics[height=7cm,width=8cm]{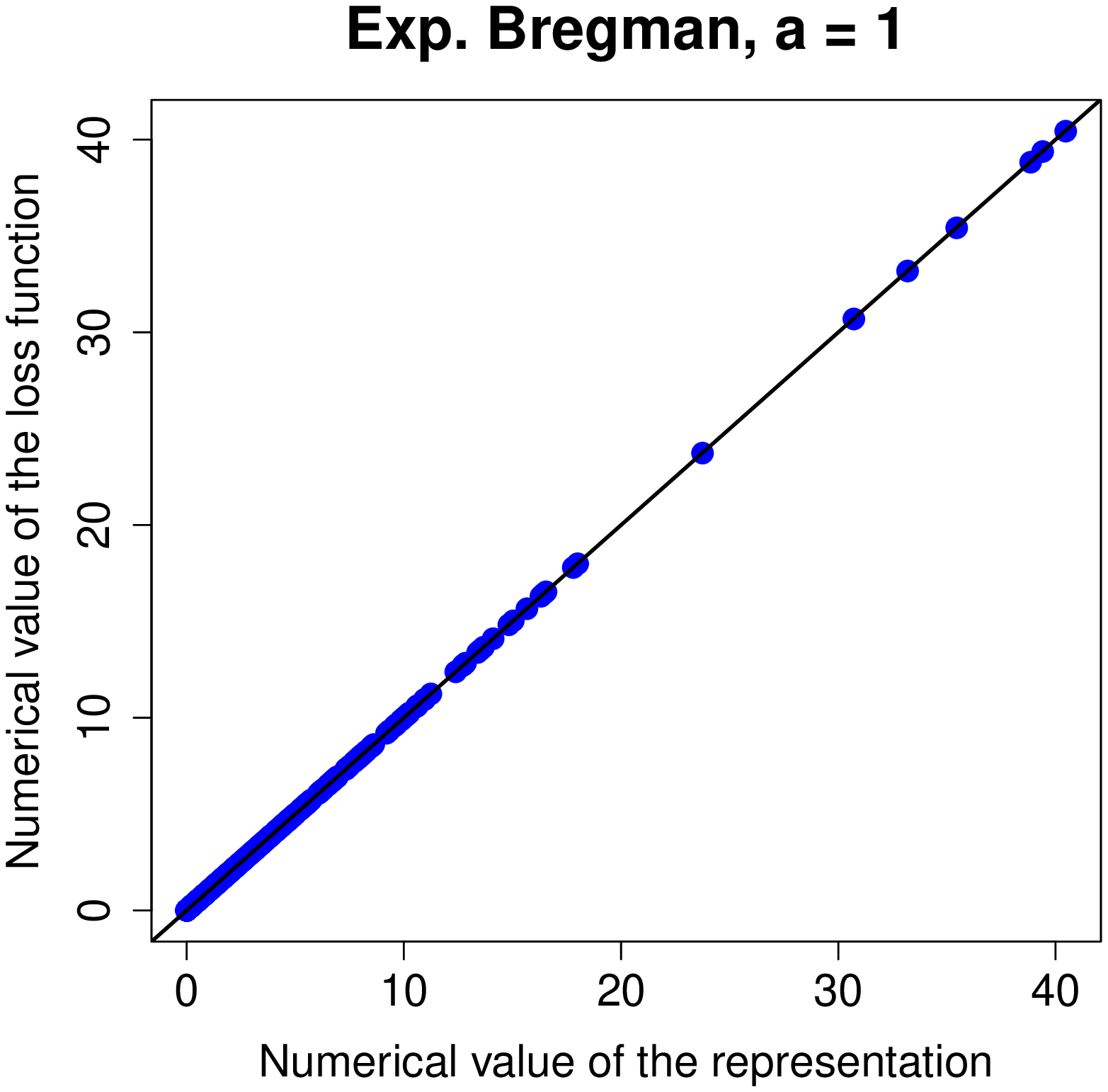}}
		\subfigure{\includegraphics[height=7cm,width=8cm]{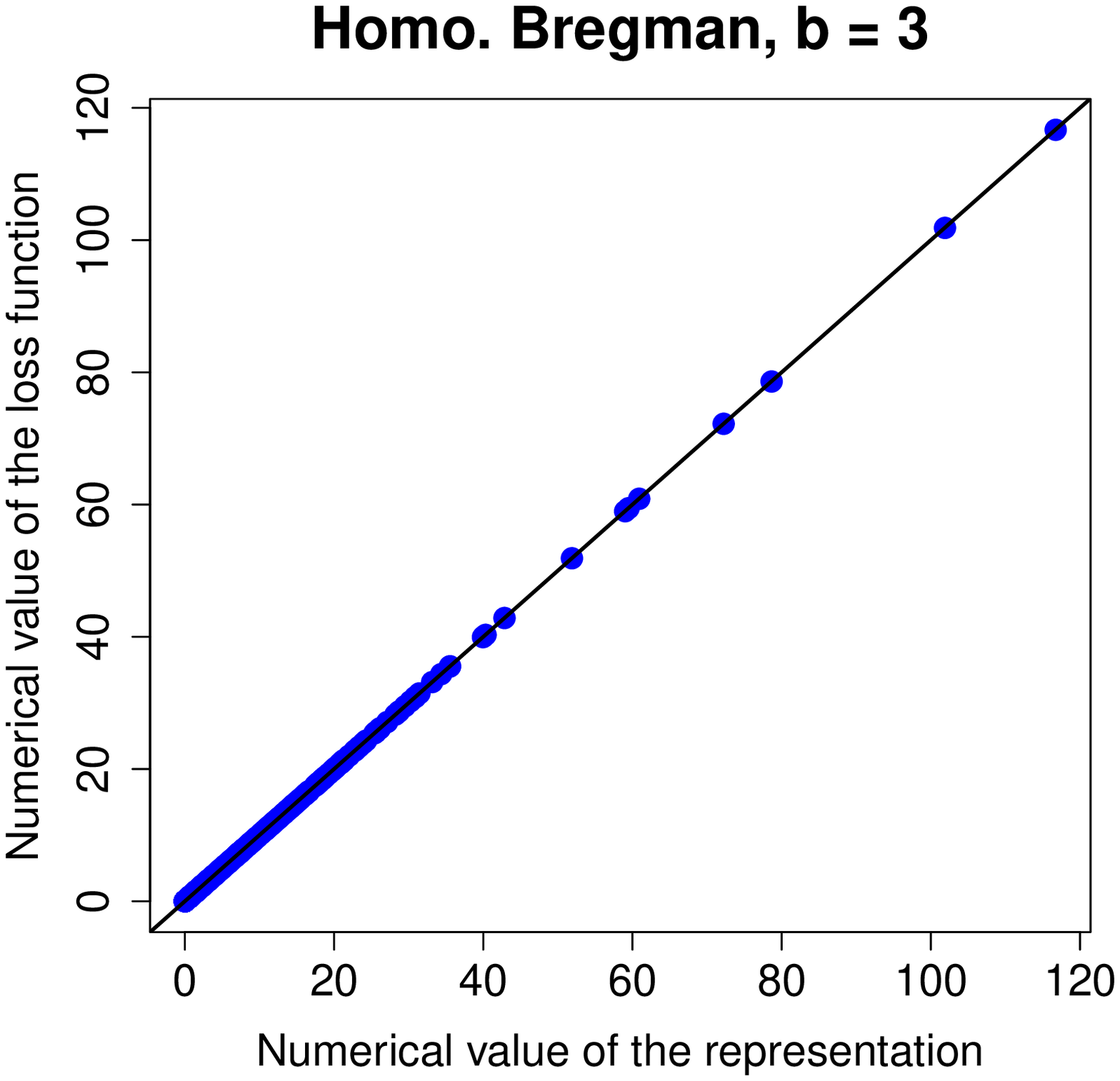}}
		}
	\end{center}
	\caption{The figure shows comparisons of numerical values
		of a consistent loss function for the $\alpha-$expectile and those obtained from using
		representation of (\ref{scoring_expectile1}) when $\alpha=0.5$. Left panel shows plots of numerical values of the exponential Bregman loss function vs. those obtained from using representation of (\ref{scoring_expectile1}) when $a=-1$, 0.3 and 1. Right panel shows the case of the homogeneous Bregman loss function with $b=1.8$, 2 and 3. The data for each comparison are 1000 pairs of $X\sim N(0,1)$ and
		$Y\sim N(0,1)$.}
	\label{figure1}
\end{figure}

\begin{figure}[ht]
	\begin{center}
		\mbox{
			\subfigure{\includegraphics[height=7cm,width=8cm]{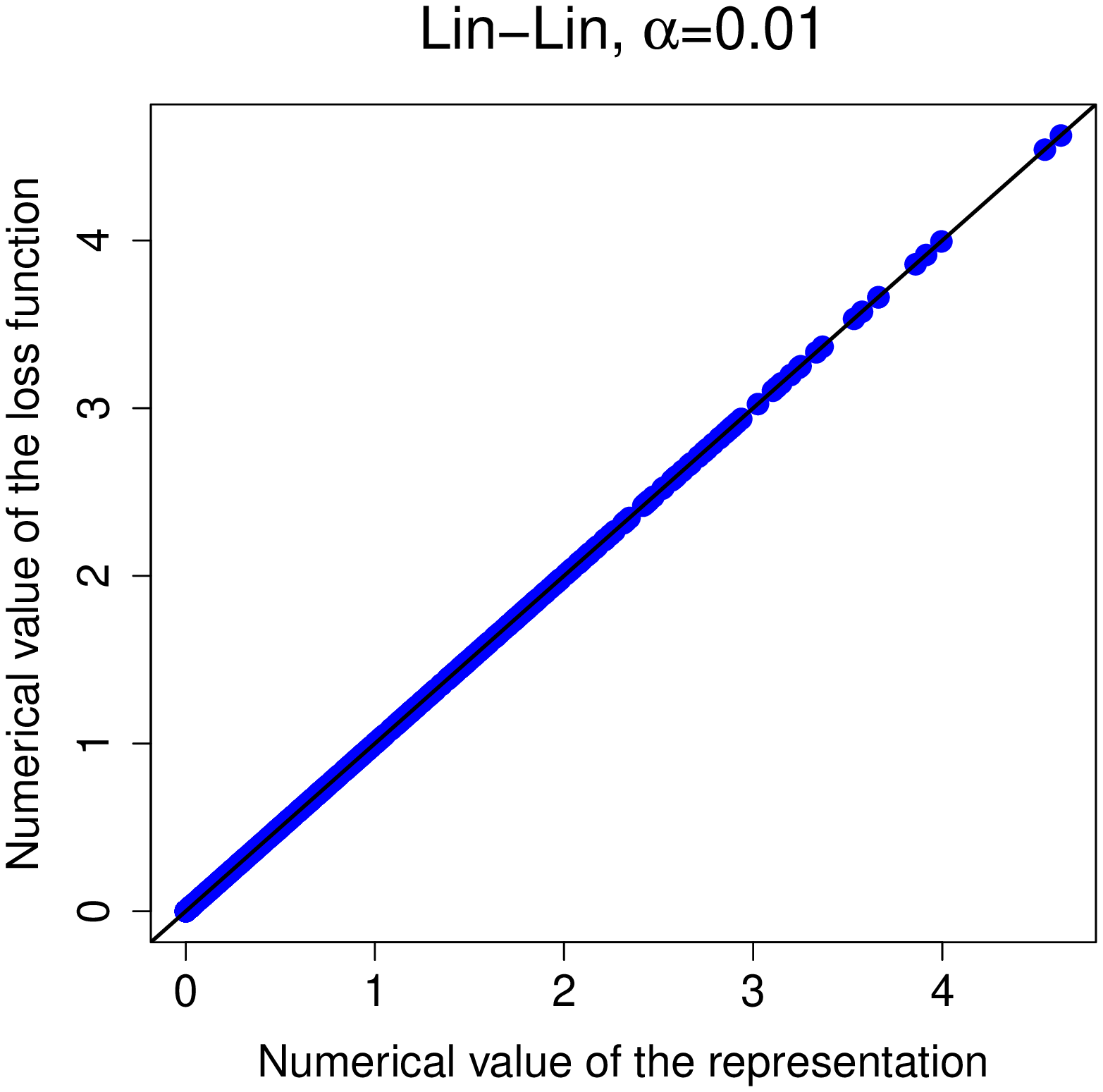}}
			\subfigure{\includegraphics[height=7cm,width=8cm]{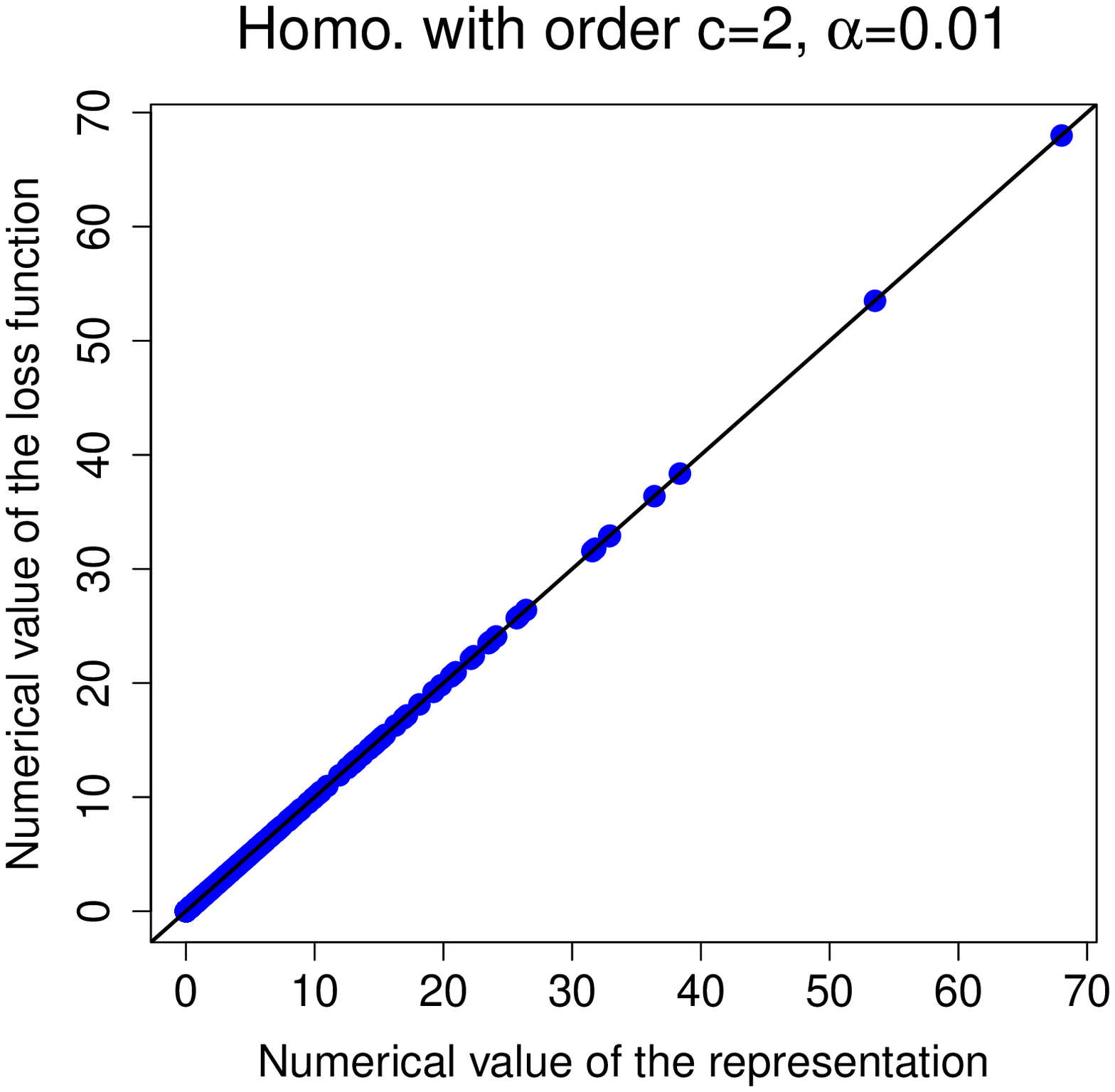}}
		} 
		\mbox{
			\subfigure{\includegraphics[height=7cm,width=8cm]{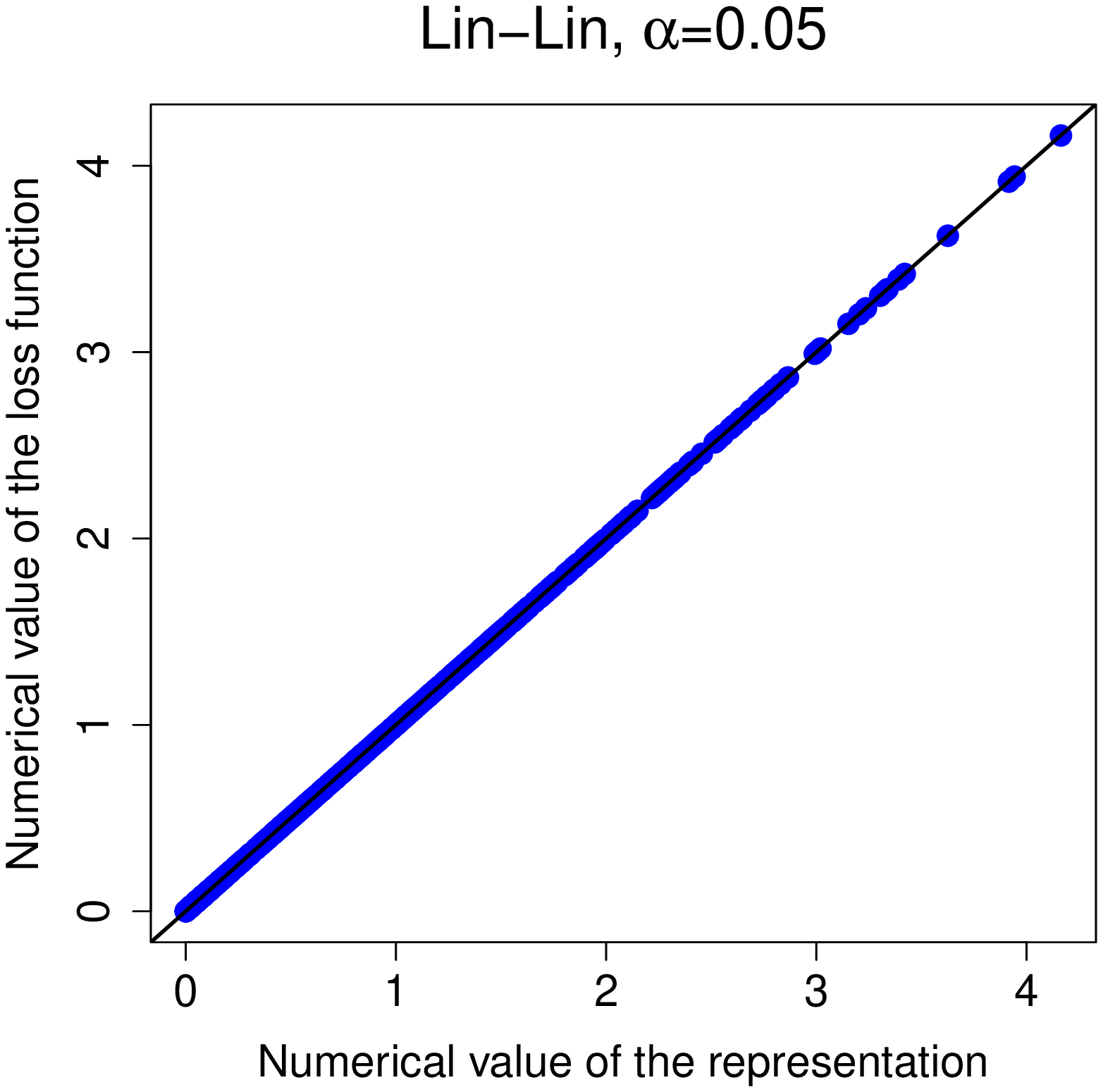}}
			\subfigure{\includegraphics[height=7cm,width=8cm]{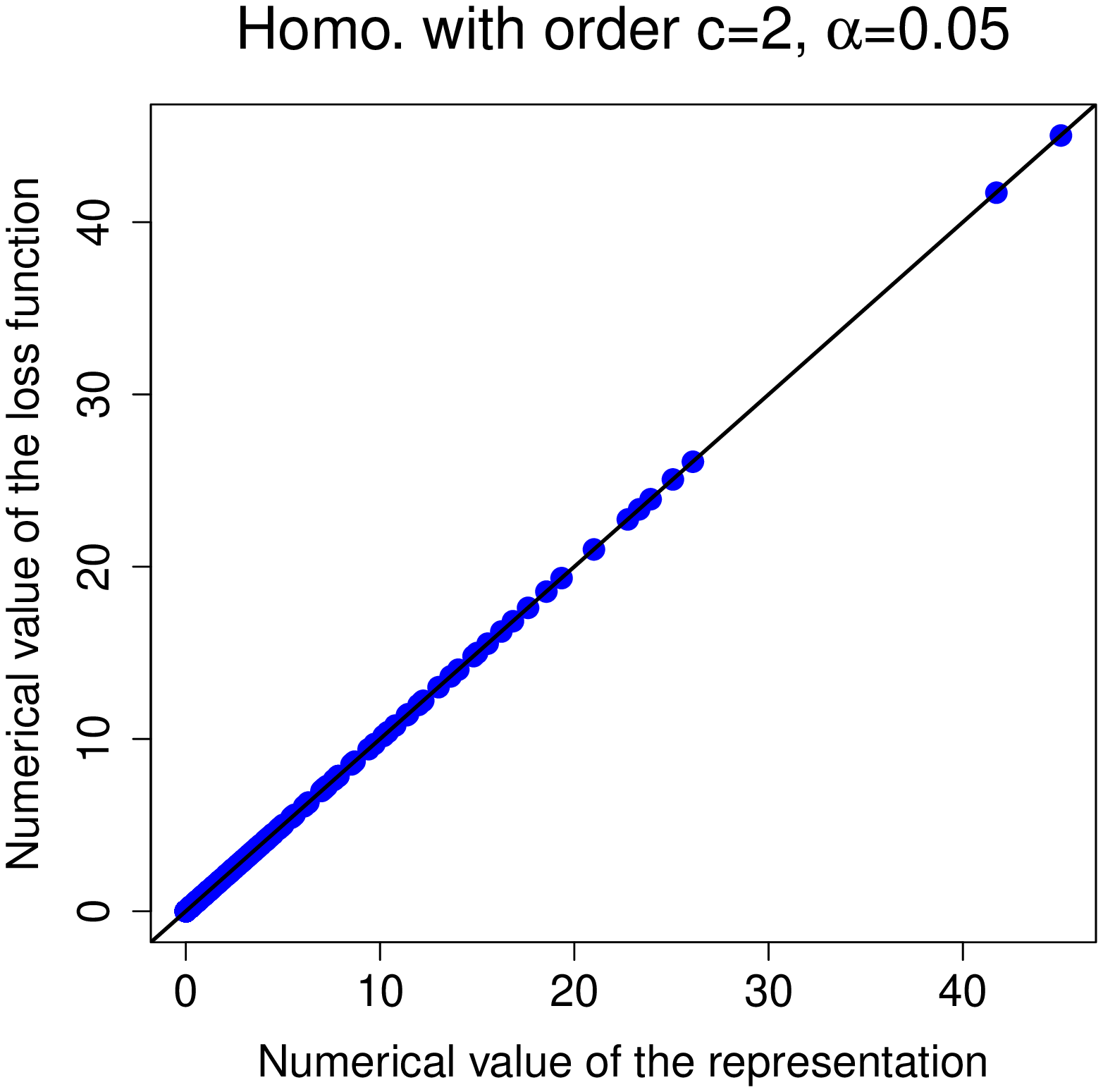}}
		}
		\mbox{
			\subfigure{\includegraphics[height=7cm,width=8cm]{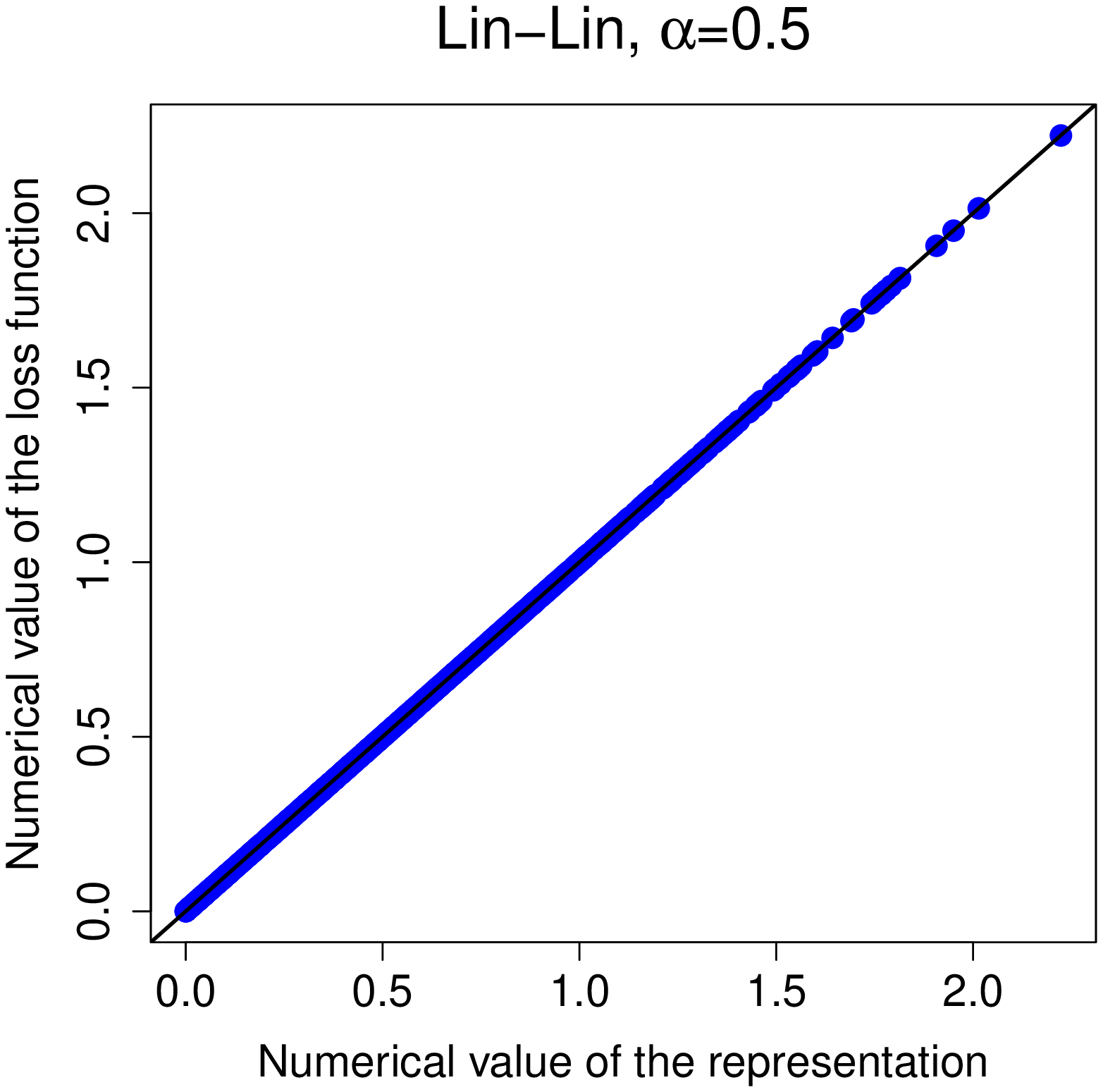}}
			\subfigure{\includegraphics[height=7cm,width=8cm]{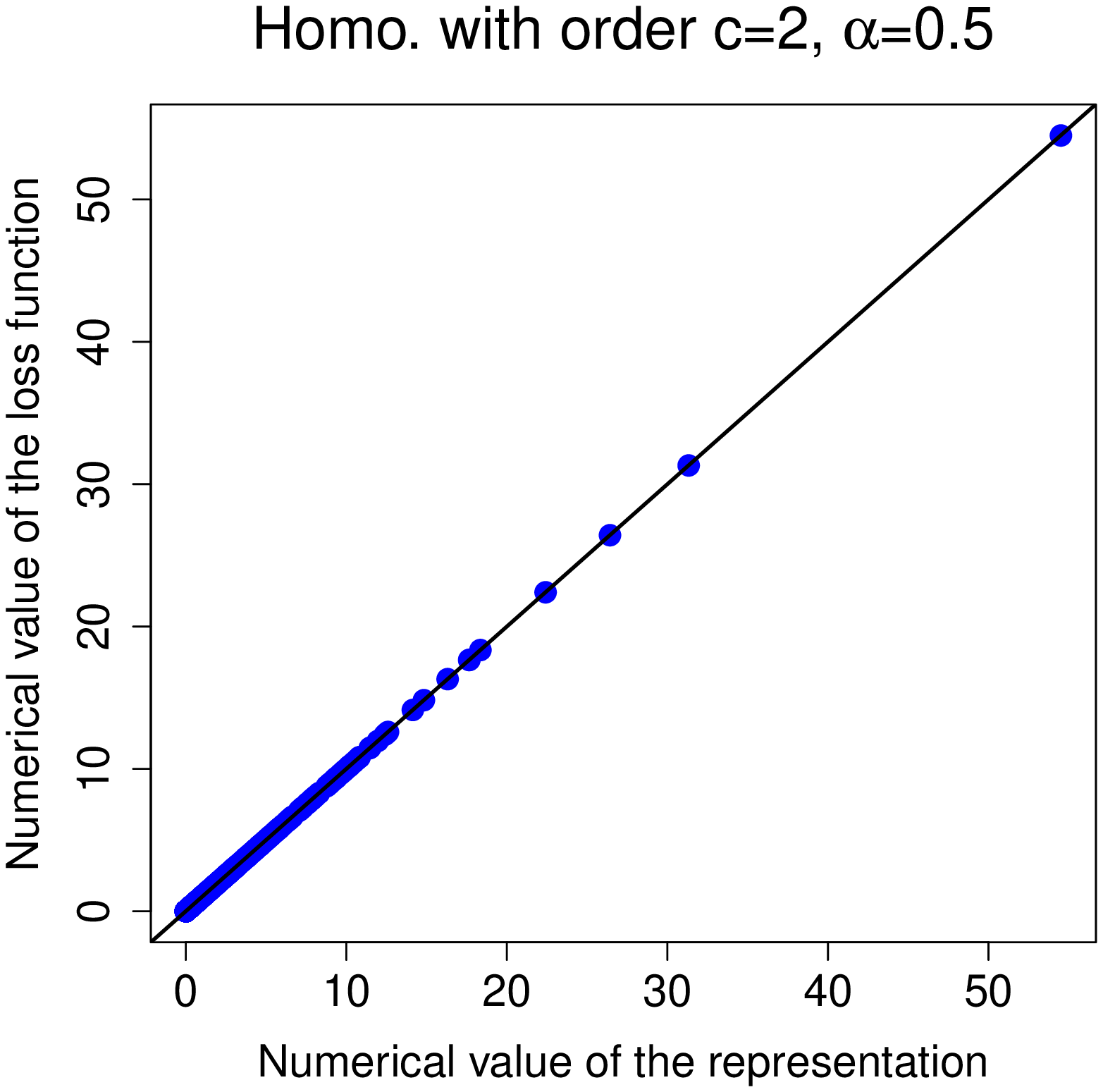}}
		}
	\end{center}
	\caption{The figure shows comparisons of numerical values
		of a consistent loss function for the $\alpha-$quantile and those obtained from using
		representation of (\ref{scoring_quantile1}) when $\alpha=0.01$, 0.05 and 0.5. Left panel shows plots of numerical values of the lin-lin loss function vs. those obtained from using representation of  (\ref{scoring_quantile1}). Right panel shows the case of the homogeneous loss function with order $c=2$. In the case of the lin-lin loss function, the data for each comparison are 1000 pairs of $X\sim N(0,1)$ and
		$Y\sim N(0,1)$. In the case of the homogeneous loss function with order $c=2$, the data for each comparison are 1000 pairs of $X\sim \chi^{2}\left(1\right)$ and $Y\sim \chi^{2}\left(1\right)$.}
	\label{figure2}
\end{figure}

\begin{figure}[ht]
	\begin{center}
		\mbox{
			\subfigure{\includegraphics[height=7cm,width=8cm]{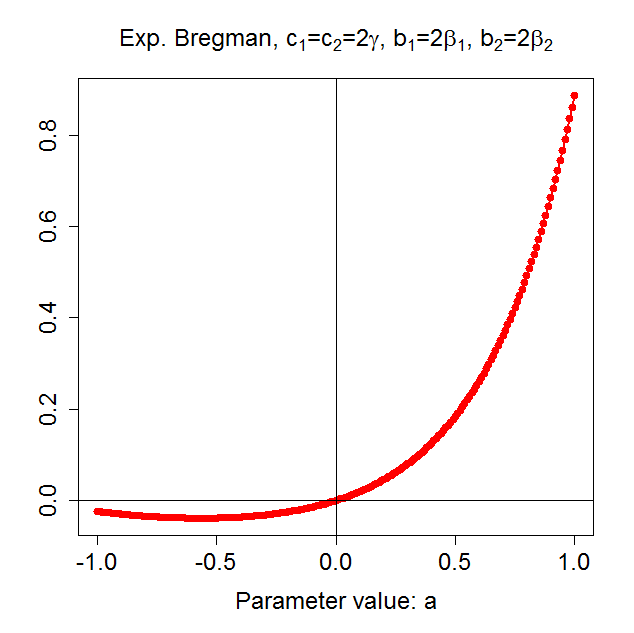}}
			\subfigure{\includegraphics[height=7cm,width=8cm]{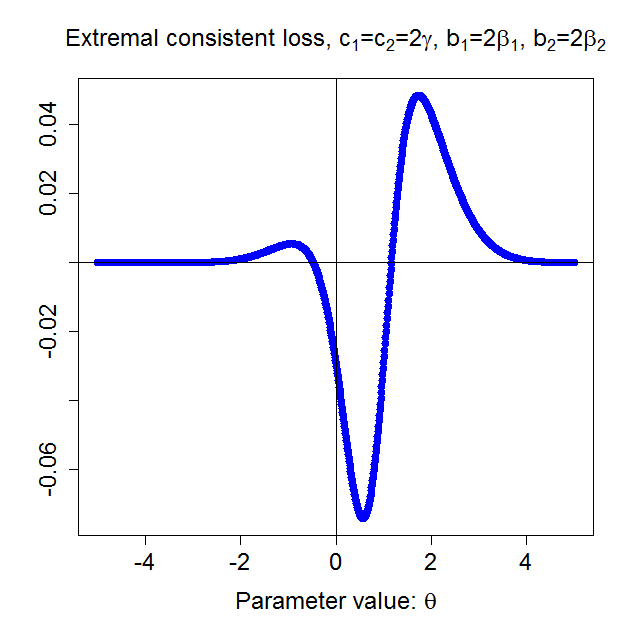}}
		} 
		\mbox{
			\subfigure{\includegraphics[height=7cm,width=8cm]{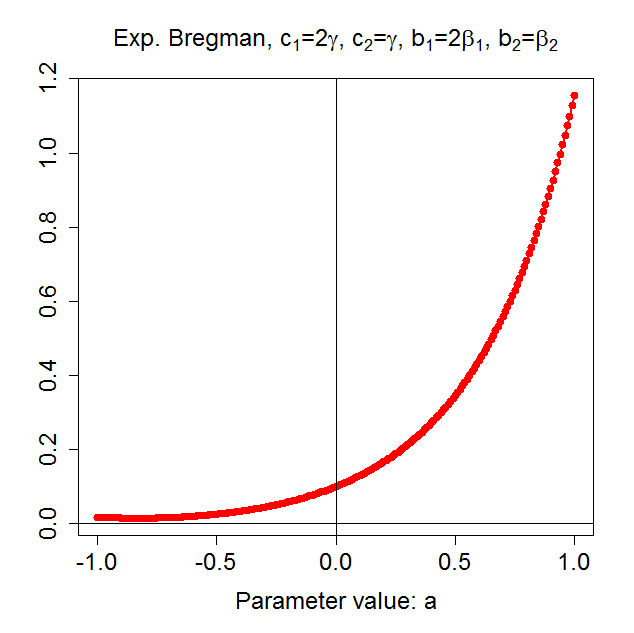}}
			\subfigure{\includegraphics[height=7cm,width=8cm]{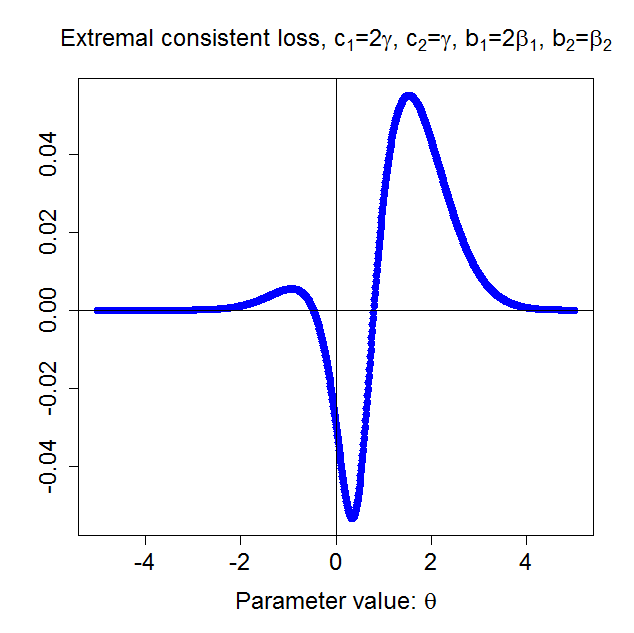}}
		}
		\mbox{
			\subfigure{\includegraphics[height=7cm,width=8cm]{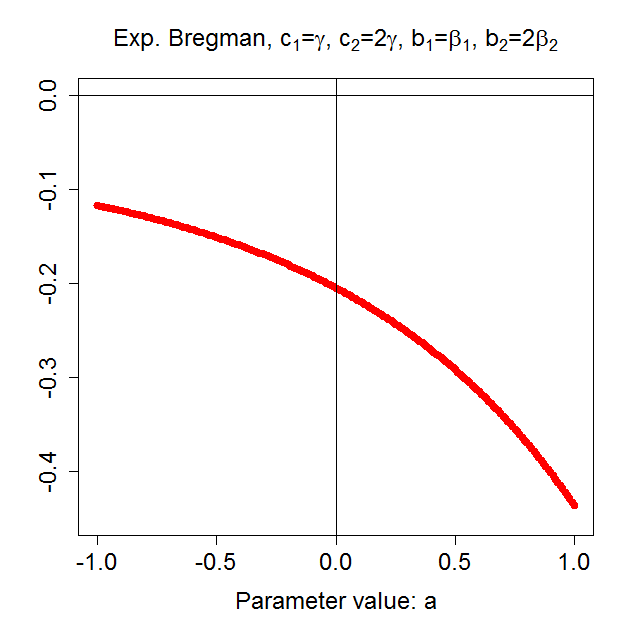}}
			\subfigure{\includegraphics[height=7cm,width=8cm]{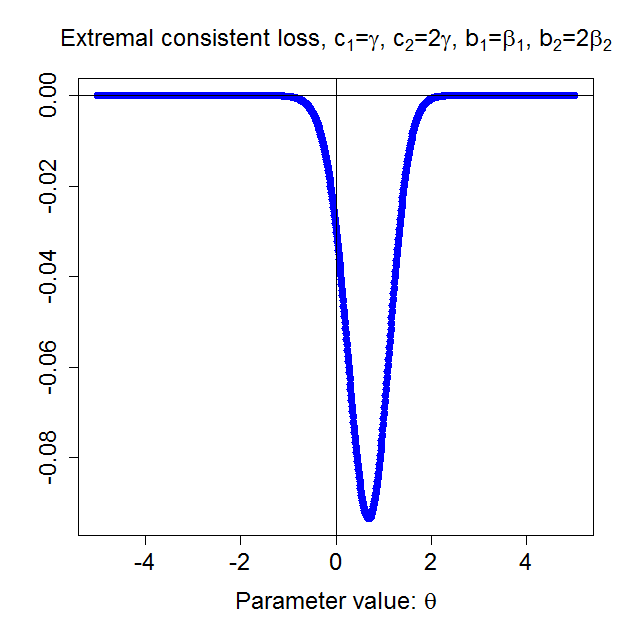}}
		}
	\end{center}
	\caption{The figure shows differences of the expected
		exponential Bregman loss with parameter $a\in\left[-1,1\right]$ (left panel) and differences of the expected extremal loss for the conditional expectation with parameter $\theta\in\left[-5,5\right]$ (right panel) for the two forecasts in cases (1) to (3) in Section 4.1.1.}
	\label{figure3}
\end{figure}

\begin{figure}[ht]
	\begin{center}
		\mbox{
			\subfigure{\includegraphics[height=5cm,width=7cm]{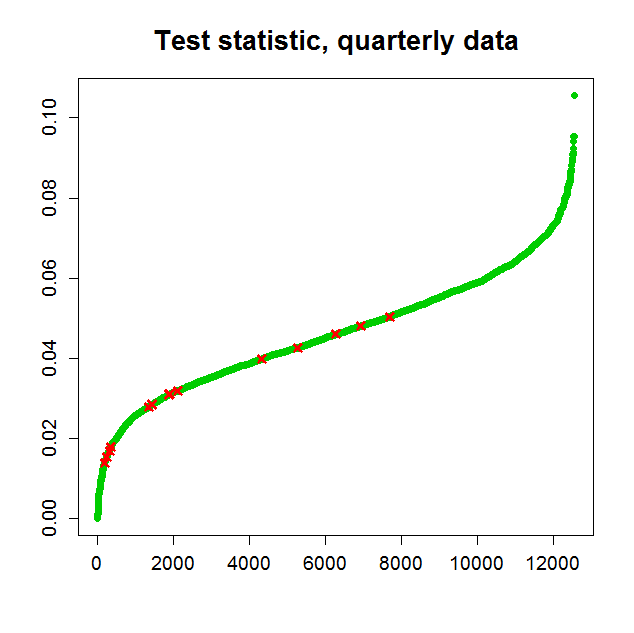}}
			\subfigure{\includegraphics[height=5cm,width=7cm]{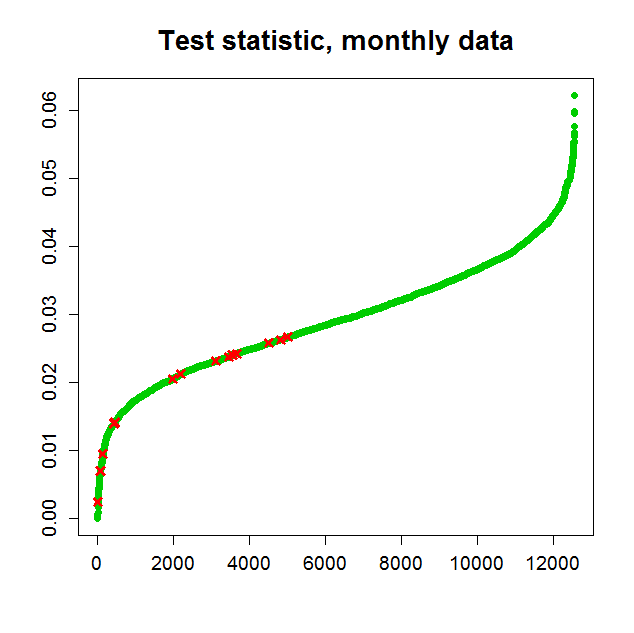}}
		} 
		\mbox{
			\subfigure{\includegraphics[height=5cm,width=7cm]{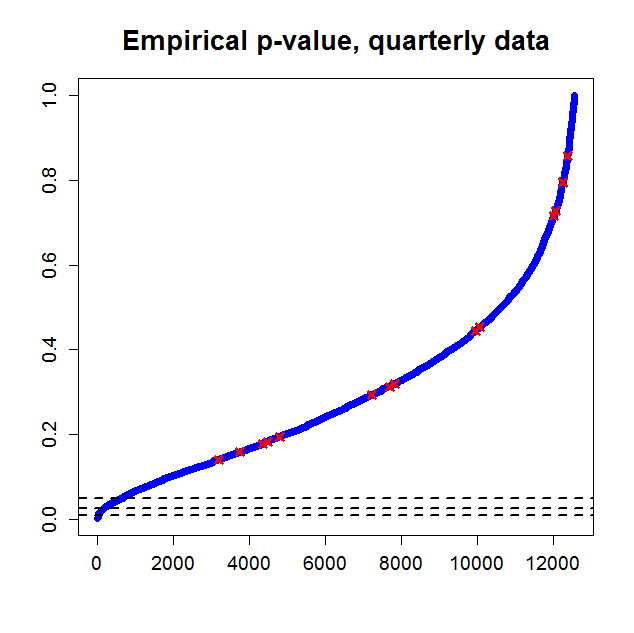}}
			\subfigure{\includegraphics[height=5cm,width=7cm]{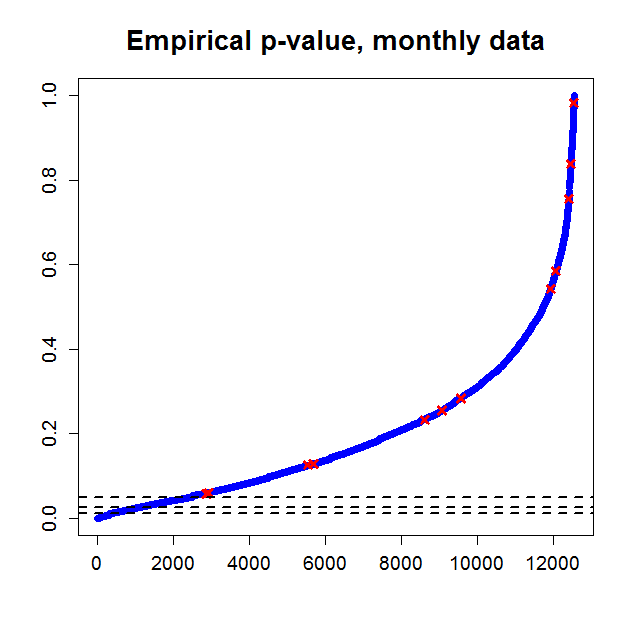}}
		}
		\mbox{
			\subfigure{\includegraphics[height=5cm,width=7cm]{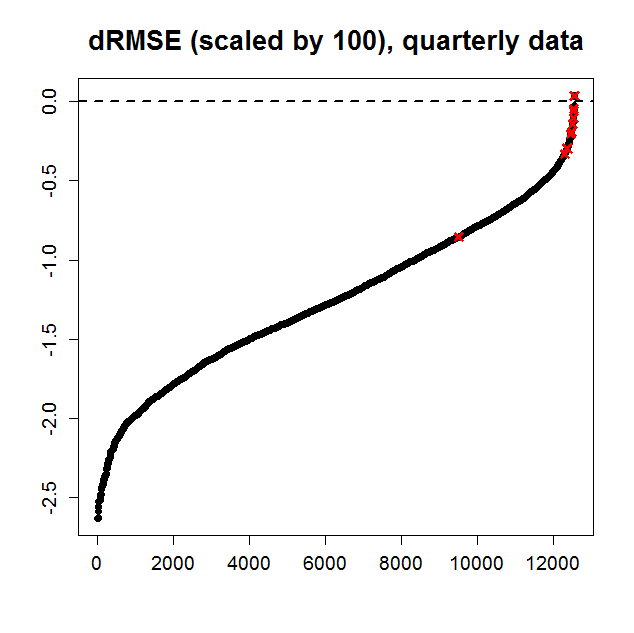}}
			\subfigure{\includegraphics[height=5cm,width=7cm]{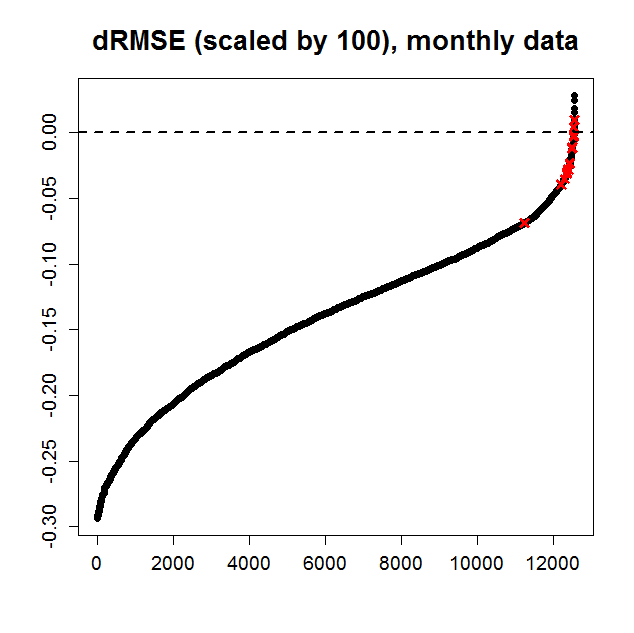}}
		}
		\mbox{
			\subfigure{\includegraphics[height=5cm,width=7cm]{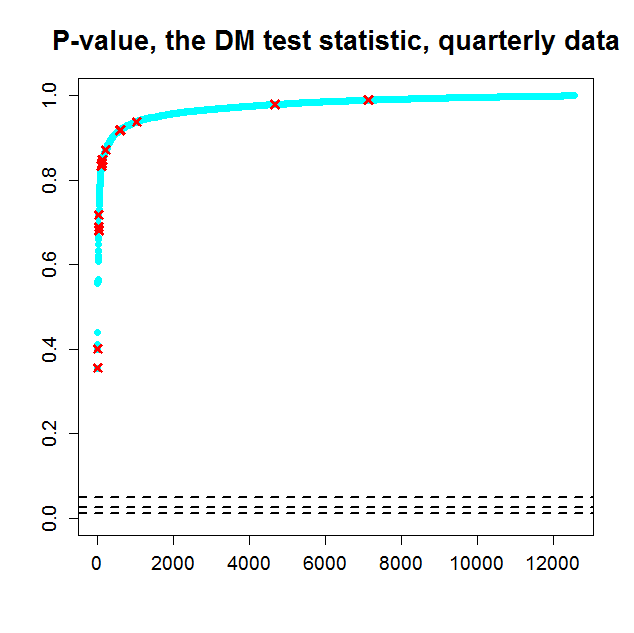}}
			\subfigure{\includegraphics[height=5cm,width=7cm]{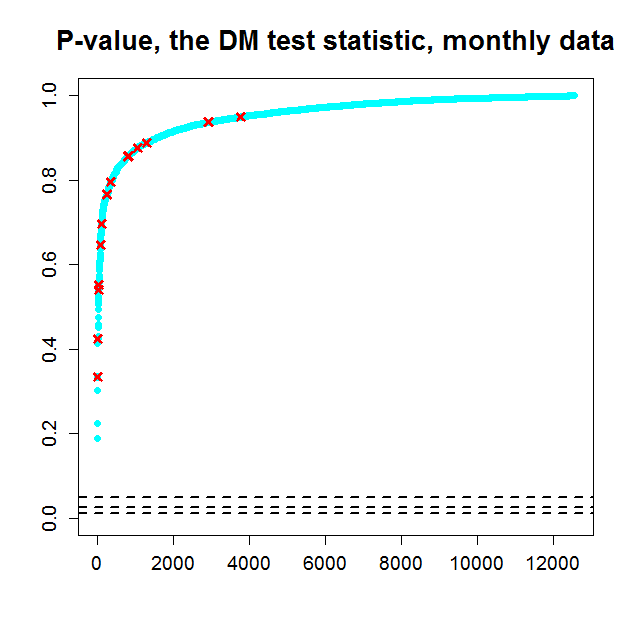}}
		}
	\end{center}
	\caption{The figure shows ordered values (from small to large) of the proposed test statistic for forecasting the conditional expectation, the corresponding empirical p-values, dRMSE scaled by 100 and the p-values of the DM test statistic with the squared error loss function for the multivariate predictive regressions. Left panel shows the cases of quarterly data and right panel shows the cases of monthly data. The red crosses in each plot are values of these quantities for the single-variable predictive regressions shown in Table \ref{table9}. }
	\label{figure9}
\end{figure}

\begin{sidewaysfigure}[ht]
	\centering
		\includegraphics[height= 18cm, width = 25cm]{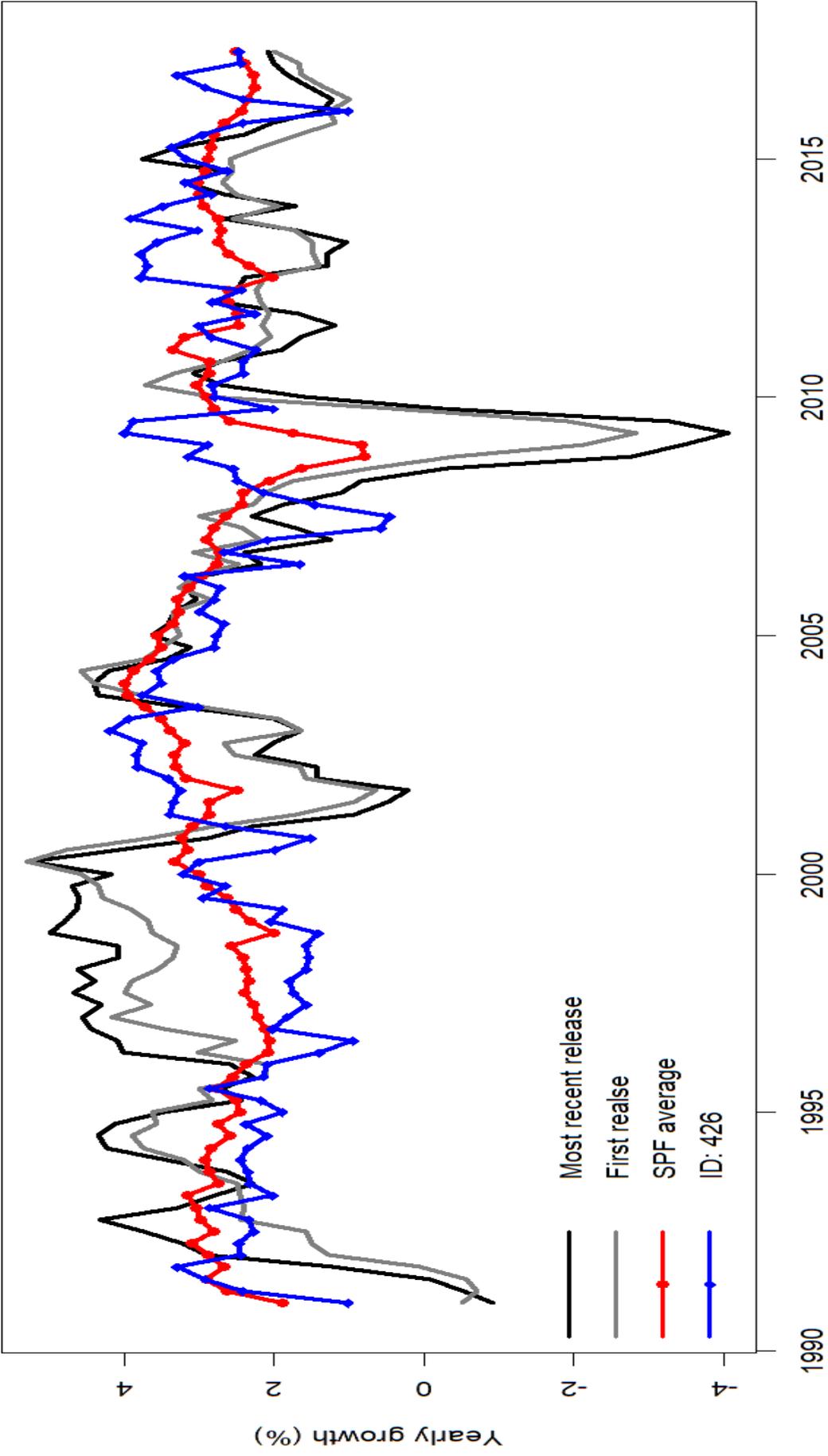}
	\caption{The figure shows time series plots of Q3-2017 vintage and first release for U.S. real gross domestic product (RGDP) annual growth and two corresponding forecasts from Survey of Professional Forecasters conducted by Fed. Philadelphia: mean forecast from all experts (SPF average) and a forecast from an expert with ID. 426 (ID: 426). The data is in quarterly frequency and sample period is from Q1-1991 to Q2-2017 (106 quarters).}
	\label{figure10}
\end{sidewaysfigure}

\begin{figure}[ht]
	\begin{center}
		\mbox{
			\subfigure{\includegraphics[height=7cm,width=8cm]{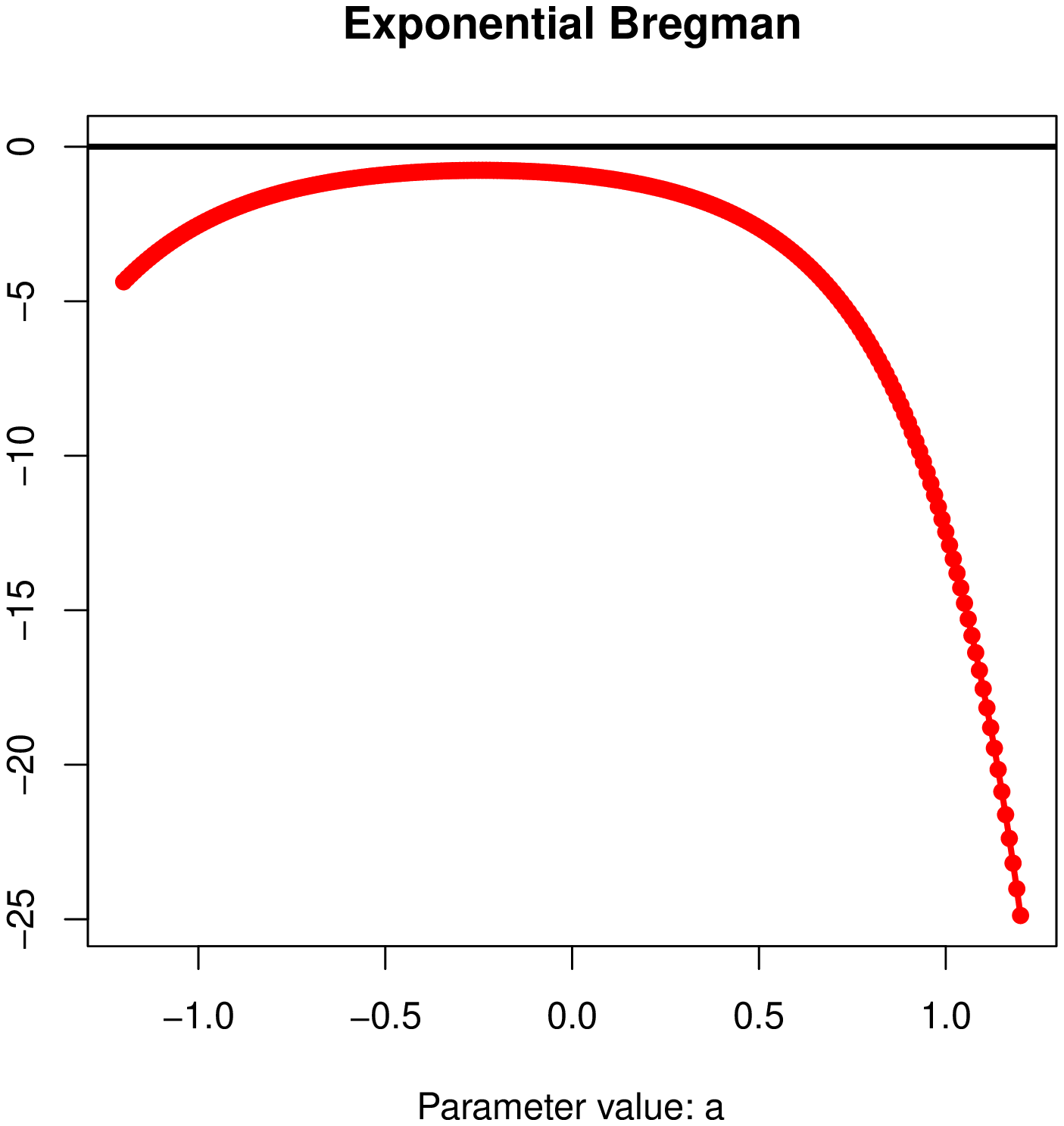}}
			\subfigure{\includegraphics[height=7cm,width=8cm]{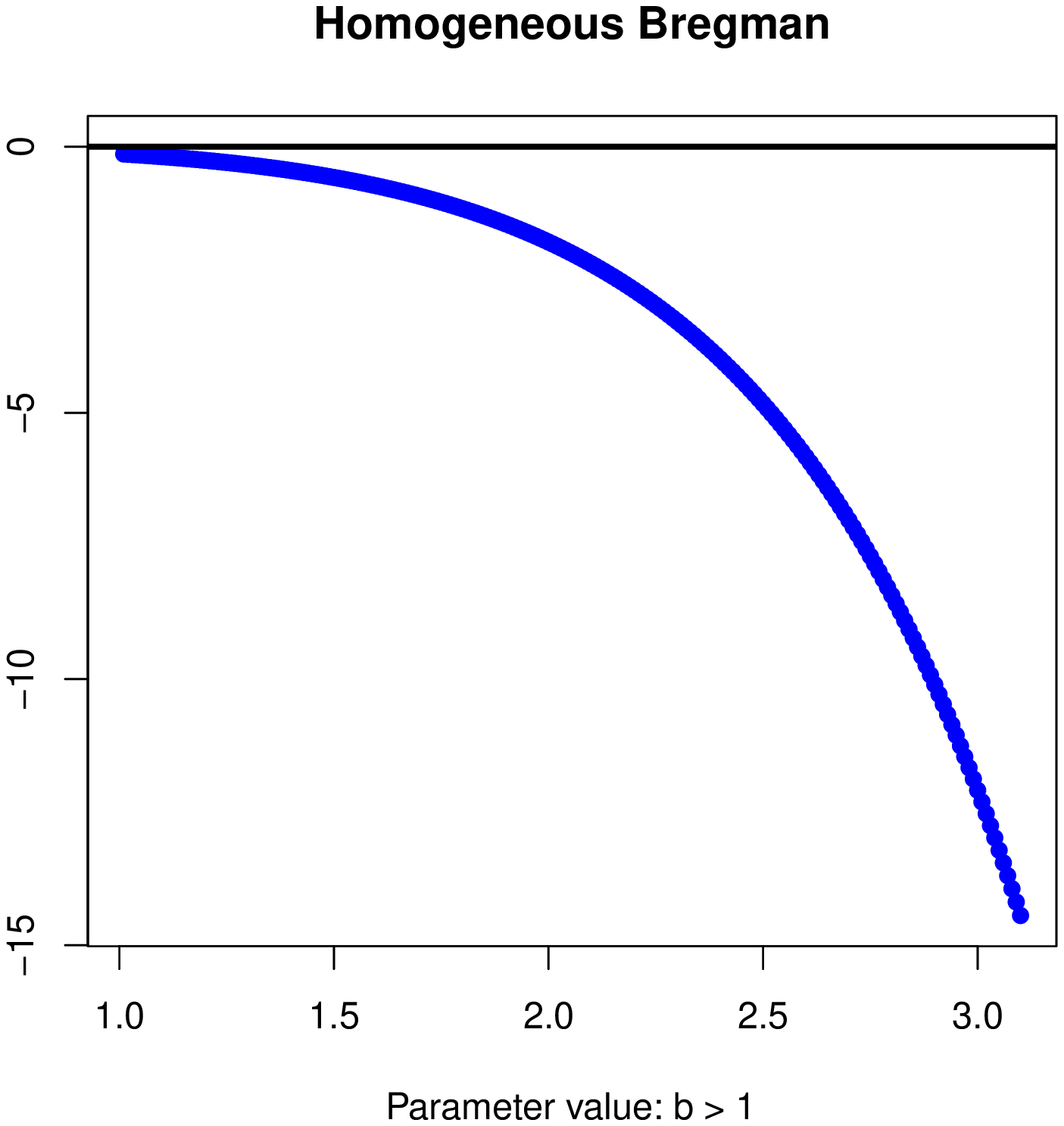}}
		} 
		\mbox{
			\subfigure{\includegraphics[height=7cm,width=8cm]{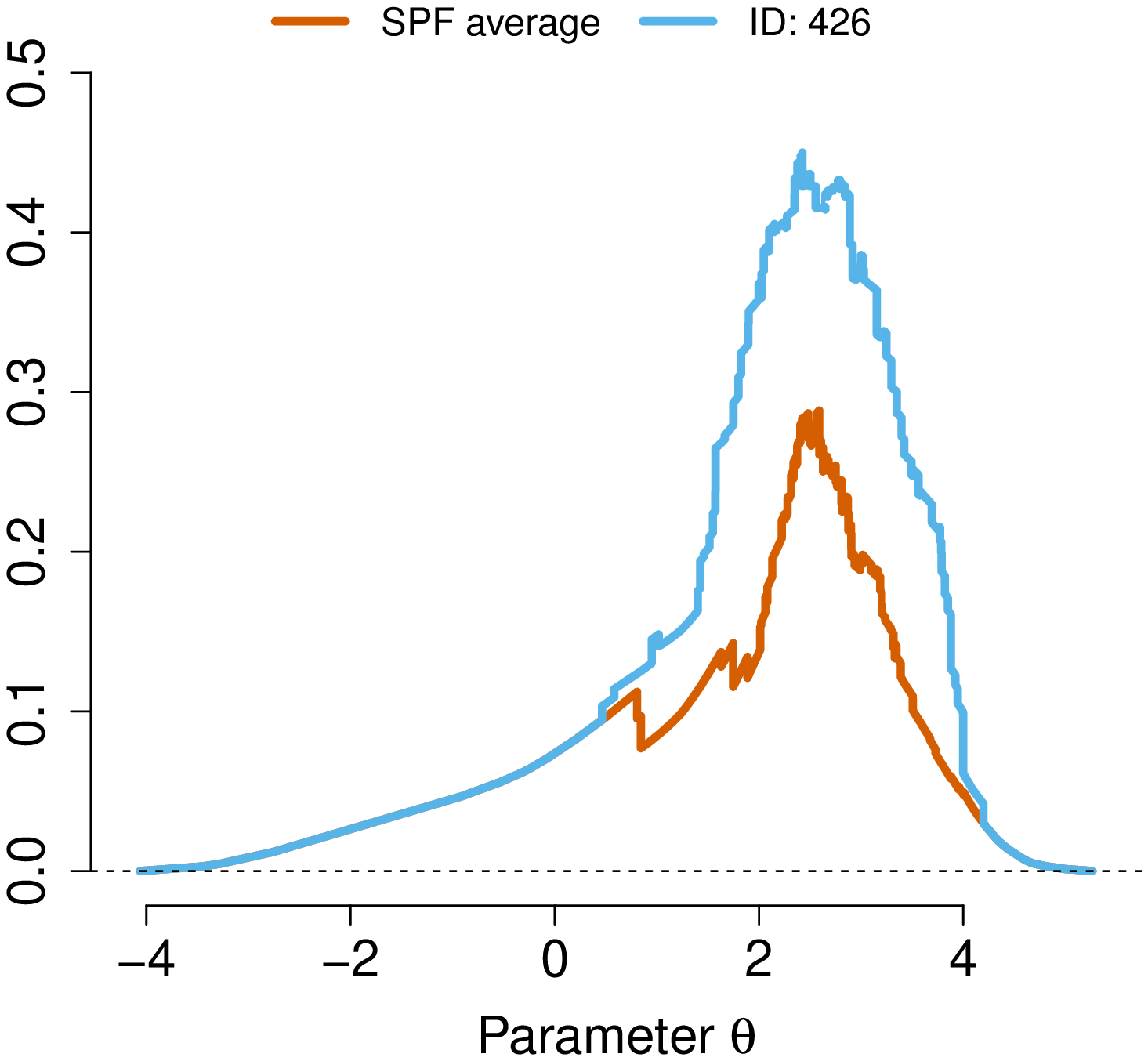}}
			\subfigure{\includegraphics[height=7cm,width=8cm]{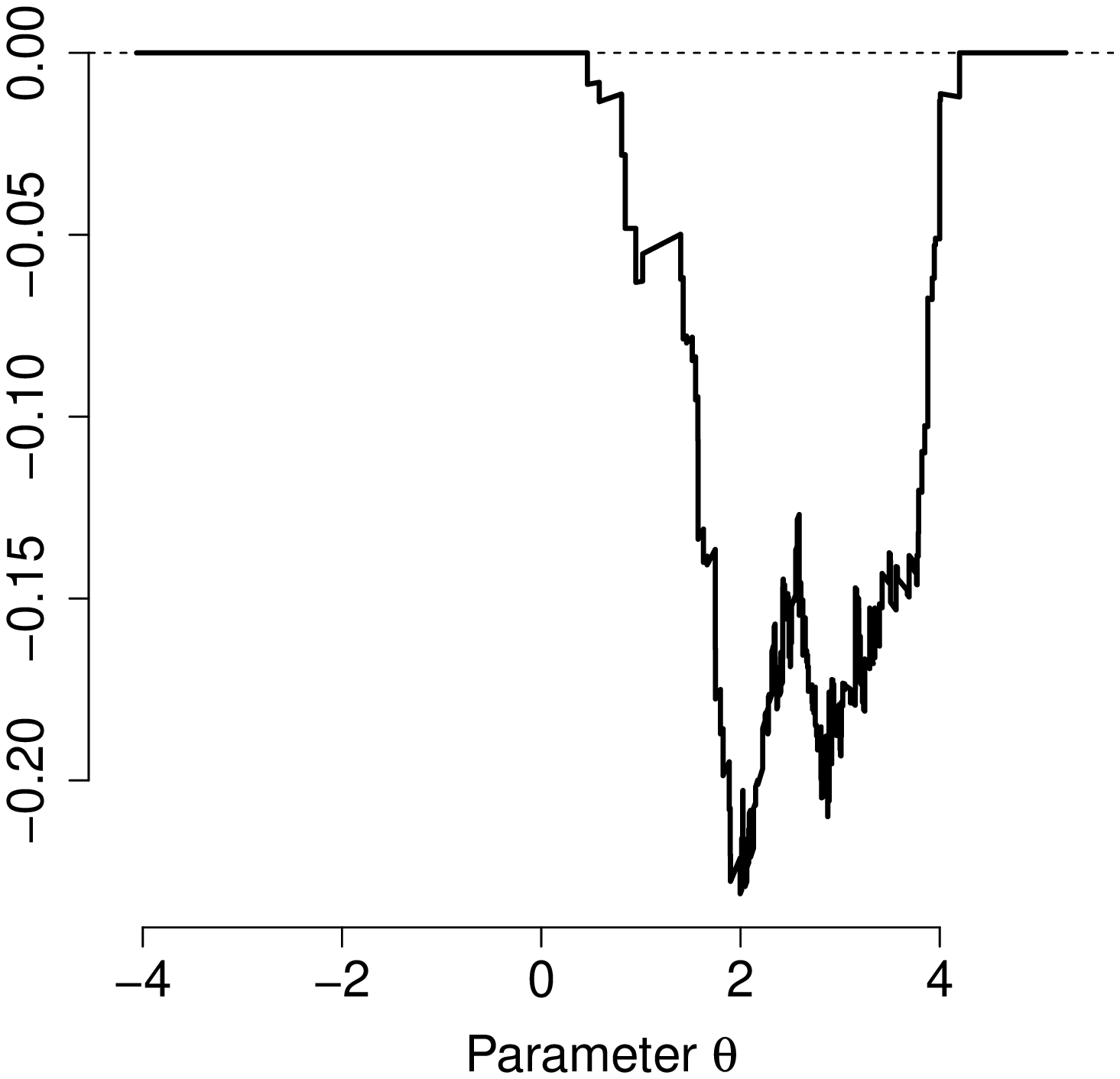}}
		}
	\end{center}
	\caption{The figure shows empirical value of the extremal consistent loss function for the expectile evaluated with two forecasts: SPF average and ID: 426 (bottom left) and empirical differences of the consistent loss functions (SPF average minus ID: 426): exponential Bregman loss (top left), homogeneous Bregman loss (top right) and the extremal consistent loss function for the expectile forecast with $\alpha=0.5$ (bottom right). The realized value of the target random variable is the Q3-2017 vintage for annual growth of U.S. RGDP. The data is in quarterly frequency and sample period is from Q1-1991 to Q2-2017 (106 quarters). The two plots in the bottom 
		are generated with \texttt{R} package \texttt{murphydiagram} \citep{EGJK_2016}.}
	\label{figure11}
\end{figure}

\begin{figure}[ht]
	\begin{center}
		\includegraphics[height=10cm,width=16cm]{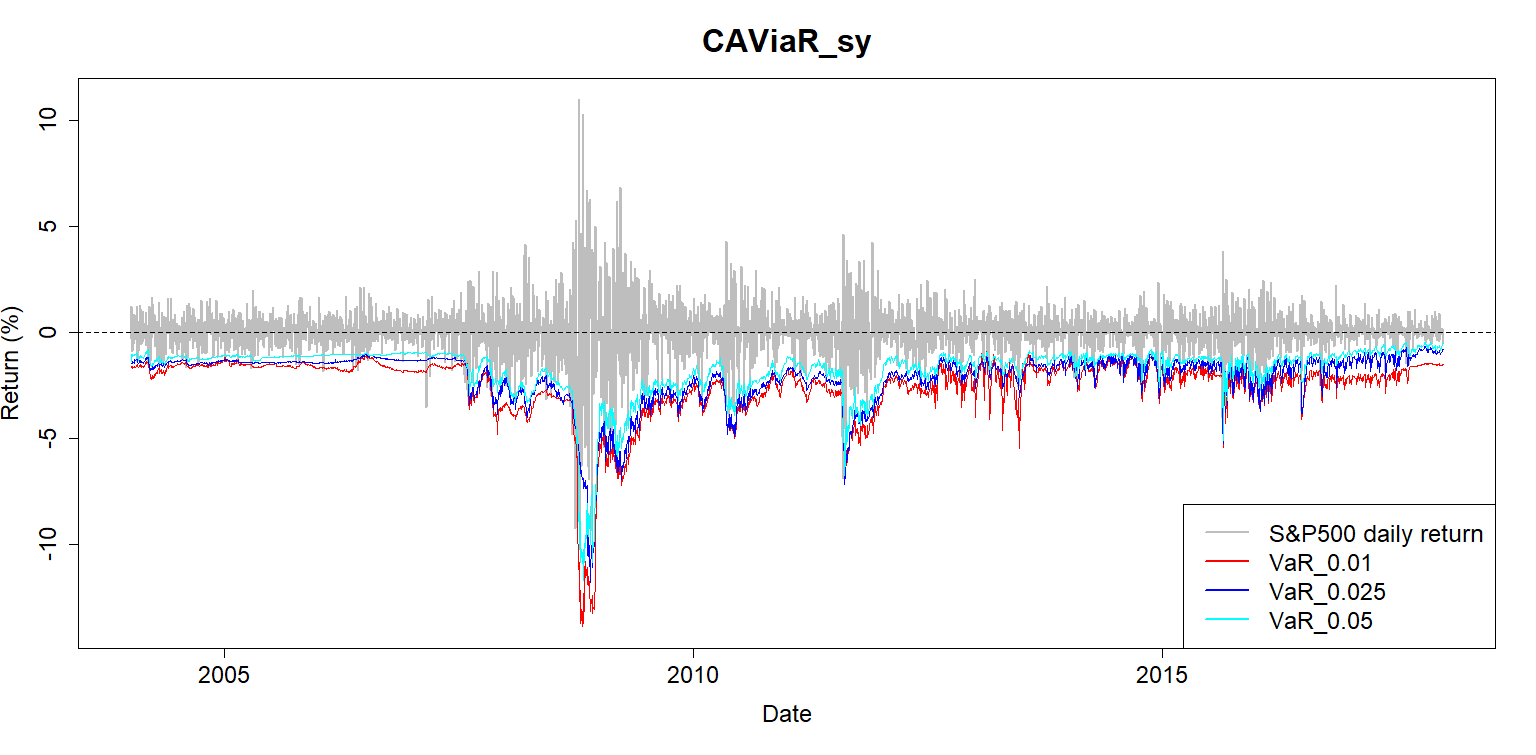}
		\includegraphics[height=10cm,width=16cm]{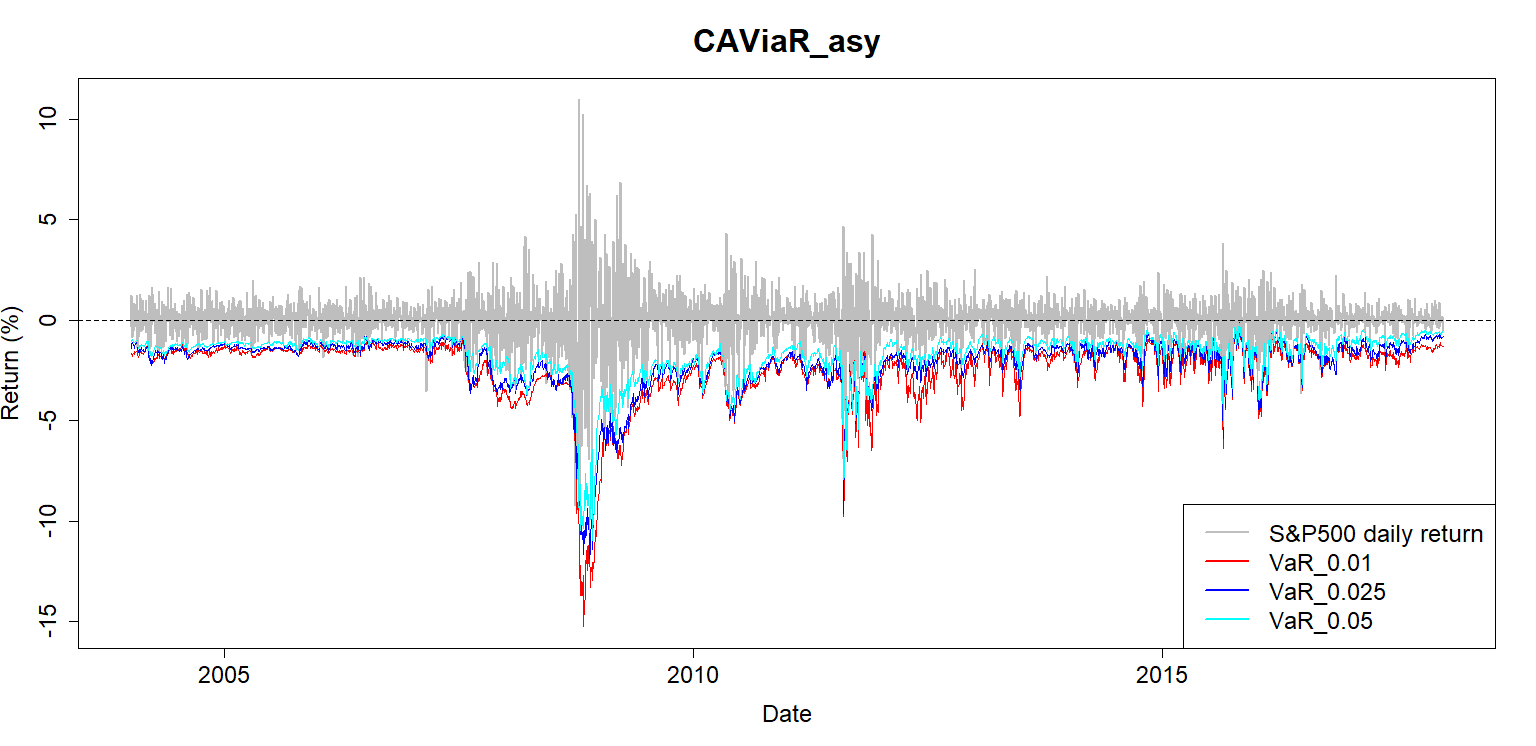}
	\end{center}
	\caption{The figure shows time-series plots of the daily S\&P500 log return and the estimated $VaR_{\alpha,t+1}$ generated with CAViaR-sy and CAViaR-asy. The forecast period is from Jan-02-2004 to Dec-29-2017 (3,524 days).}
	\label{figure12}
\end{figure}

\clearpage

\section{Appendix (For online publication only)}

\subsection{Some lemmas and proofs}
Here we restate some relevant definition and assumptions used in Subsection 3.2. Let $x\vee y=\max(x,y)$ and $x\wedge y=\min(x,y)$ and $\Rightarrow$ denote weak convergence of stochastic processes. 
\begin{definition}[the strong mixing coefficients $\alpha\left(n\right)$]
	Let $\underline{\mathcal{F}}_{k,-\infty}^{\overline{T}}$ denote the
	$\sigma-$field generated by $\left\{ Z_{kt},-\infty<t\leq\overline{T}\right\} $
	and $\overline{\mathcal{F}}_{k,\overline{T}}^{\infty}$ denote the
	$\sigma-$field generated by $\left\{ Z_{kt},\overline{T}\leq t<\infty\right\} $.
	The strong mixing coefficients $\alpha\left(n\right)$ are defined as 
	\[
	\sup_{\overline{T},k}\sup_{A\in\underline{\mathcal{F}}_{k,-\infty}^{\overline{T}},B\in\overline{\mathcal{F}}_{k,\overline{T}+n}^{\infty}}\left|P\left(A\cap B\right)-P\left(A\right)P\left(B\right)\right|=\alpha\left(n\right).
	\]
	The array $Z_{kt}$ satisfies the strong mixing condition if $\alpha\left(n\right)\downarrow0$
	as $n\rightarrow\infty$.
\end{definition}
Define empirical processes 
\begin{eqnarray*}
	v_{k,T_{P}}^{E}\left(\theta\right) & = & \sqrt{T_{P}}\left(\frac{1}{T_{P}}\sum_{t=T_{R}}^{T-h}\left(L_{\alpha,\theta}^{E}\left(X_{kt},Y_{t+h}\right)-E\left[L_{\alpha,\theta}^{E}\left(X_{kt},Y_{t+h}\right)\right]\right)\right),\\
	v_{k,T_{P}}^{Q}\left(\theta\right) & = & \sqrt{T_{P}}\left(\frac{1}{T_{P}}\sum_{t=T_{R}}^{T-h}\left(L_{\alpha,\theta}^{Q}\left(X_{kt},Y_{t+h}\right)-E\left[L_{\alpha,\theta}^{Q}\left(X_{kt},Y_{t+h}\right)\right]\right)\right).
\end{eqnarray*}
for $k=1,\ldots,K$ and for $\theta\in\Theta\subseteq\mathbb{R}$.
For $\left(X_{kt},Y_{t+1}\right)\in\mathbb{R}^{2}$, \cite{EGJK_2016}
show that $L_{\theta,\alpha}^{E}\left(X_{kt},Y_{t+h}\right)$ and
$L_{\theta,\alpha}^{Q}\left(X_{kt},Y_{t+h}\right)$ are right continuous,
non-negative and uniformly bounded with a bounded support function of
$\theta$. Let $\left\Vert X\right\Vert _{r}=\left(E\left[\left|X\right|^{r}\right]\right)^{\frac{1}{r}}$
denote a $L^{r}$-norm of a random variable $X$. Let $\varepsilon_{k,t+h}=Y_{t+h}-X_{kt}$
denote the forecast error, and $f_{Y_{t+h}}\left(y\right)$ and $f_{X_{kt}}\left(x\right)$
denote the marginal density functions of $Y_{t+h}$ and $X_{kt}$. 

\begin{lemma} There exists constants $s$, $q$ and $r\in\left[1,\infty\right]$
	and $1/s+1/q=1/r$ such that if $\left\Vert \varepsilon_{t+1}\right\Vert _{s}<\infty$
	and $f_{Y_{t+h}}\left(y\right)$ and $f_{X_{kt}}\left(x\right)$ are
	bounded density functions, 
	\begin{eqnarray*}
		\left\Vert L_{\alpha,\theta}^{E}\left(X_{kt},Y_{t+h}\right)-L_{\alpha,\theta^{\prime}}^{E}\left(X_{kt},Y_{t+h}\right)\right\Vert _{r} & \leq & C^{E}\left|\theta-\theta^{\prime}\right|^{\frac{1}{q}},\\
		\left\Vert L_{\alpha,\theta}^{Q}\left(X_{kt},Y_{t+h}\right)-L_{\alpha,\theta^{\prime}}^{Q}\left(X_{kt},Y_{t+h}\right)\right\Vert _{r} & \leq & C^{Q}\left|\theta-\theta^{\prime}\right|^{\frac{1}{q}},
	\end{eqnarray*}
	for $\left|\theta-\theta^{\prime}\right|\ll1$, where $C^{E}=2+\left\Vert \varepsilon_{k,t+h}\right\Vert _{p}\left(\max f_{X_{kt}}\left(x\right)\right)^{\frac{1}{q}}$
	and $C^{Q}=\left(\max f_{X_{kt}}\left(x\right)\right)^{\frac{1}{q}}\vee\left(\max f_{Y_{t+h}}\left(x\right)\right)^{\frac{1}{q}}$.
\end{lemma}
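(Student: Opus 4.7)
The plan is to bound each difference pointwise in $(x,y)$ and then pass to $L^r$ norms via Minkowski, handling the one remaining product term by Hölder. I would begin with the expectile case by writing
\[
L_{\alpha,\theta}^{E}(x,y) - L_{\alpha,\theta'}^{E}(x,y) = |1\{y<x\}-\alpha|\Big\{[(y-\theta)_{+}-(y-\theta')_{+}] - [(x-\theta)_{+}-(x-\theta')_{+}] - [1\{\theta<x\}-1\{\theta'<x\}](y-x)\Big\},
\]
so that the weight $|1\{y<x\}-\alpha|\le 1$ drops out. Each of the two positive-part differences is $1$-Lipschitz in $\theta$ and contributes at most $|\theta-\theta'|$ in absolute value; the only nontrivial piece is the product $[1\{\theta<x\}-1\{\theta'<x\}]\cdot\varepsilon_{k,t+h}$, where $\varepsilon_{k,t+h}=y-x$ is unbounded in general.

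Taking $L^r$ norms and invoking Minkowski, the expectile bound reduces to controlling $\bigl\||1\{\theta<X_{kt}\}-1\{\theta'<X_{kt}\}|\cdot|\varepsilon_{k,t+h}|\bigr\|_r$. I would apply Hölder with indices $(q,s)$ satisfying $1/q+1/s=1/r$, splitting this into $\|1\{\theta<X_{kt}\}-1\{\theta'<X_{kt}\}\|_q\cdot\|\varepsilon_{k,t+h}\|_s$. Since the indicator difference is $\{0,1\}$-valued, its $q$th power equals itself, so its $L^q$ norm is $P(\min(\theta,\theta')<X_{kt}\le\max(\theta,\theta'))^{1/q}$, which the bounded-density hypothesis dominates by $(\max_x f_{X_{kt}}(x))^{1/q}|\theta-\theta'|^{1/q}$. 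Collecting terms and using $|\theta-\theta'|\le|\theta-\theta'|^{1/q}$ whenever $|\theta-\theta'|\le 1$ and $q\ge 1$ folds the Lipschitz remainder $2|\theta-\theta'|$ into the $|\theta-\theta'|^{1/q}$ rate and yields the stated constant $C^E$.

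The quantile case is a strict simplification. The difference
\[
L_{\alpha,\theta}^{Q}(x,y) - L_{\alpha,\theta'}^{Q}(x,y) = (1\{y<x\}-\alpha)\bigl[(1\{\theta<x\}-1\{\theta'<x\})-(1\{\theta<y\}-1\{\theta'<y\})\bigr]
\]
carries no forecast-error factor, so one can take $s=\infty$ in the convention $1/s+1/q=1/r$, which forces $q=r$. Applying the same bounded-density estimate in turn to the jump in $X_{kt}$ and to the jump in $Y_{t+h}$ produces the $|\theta-\theta'|^{1/q}$ rate with constants $(\max f_{X_{kt}})^{1/q}$ and $(\max f_{Y_{t+h}})^{1/q}$; combining them by Minkowski (and absorbing a finite numerical factor) yields $C^Q$.

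The main obstacle is the error-weighted indicator jump in the expectile case: because $\varepsilon_{k,t+h}$ is unbounded, a naive Lipschitz-in-$\theta$ argument cannot close in $L^r$, and it is precisely the Hölder decomposition together with the auxiliary moment bound $\|\varepsilon_{k,t+h}\|_s<\infty$ that controls it. This is also the reason the Hölder exponent $1/q$ rather than $1$ appears in the final rate, and why the three indices $r,q,s$ must be tied together by $1/s+1/q=1/r$. Everything else is bookkeeping: restricting to $|\theta-\theta'|\le 1$ so the mixed linear-plus-$|\theta-\theta'|^{1/q}$ bound collapses to a clean $|\theta-\theta'|^{1/q}$, and tracking constants.
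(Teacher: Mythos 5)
Your proposal is correct and follows essentially the same route as the paper's proof: the same Minkowski decomposition into two $1$-Lipschitz positive-part differences plus an error-weighted indicator jump, the same generalized H\"older step with $1/s+1/q=1/r$ combined with the bounded-density bound $\left\Vert 1\left\{ \theta^{\prime}<X_{kt}\leq\theta\right\} \right\Vert _{q}\leq\left(\max f_{X_{kt}}\right)^{1/q}\left|\theta-\theta^{\prime}\right|^{1/q}$, and the same absorption of the linear remainder into the $\left|\theta-\theta^{\prime}\right|^{1/q}$ rate for $\left|\theta-\theta^{\prime}\right|\ll1$. The quantile case is likewise handled as in the paper (the paper keeps a general $s$ with $\left\Vert 1\left\{ Y_{t+h}<X_{kt}\right\} -\alpha\right\Vert _{s}\leq1$ rather than setting $s=\infty$, but this is an immaterial difference).
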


\begin{proof}[Proof of Lemma 1] Without loss of generality, assume $\theta^{\prime}<\theta$.
	For the case of $L_{\alpha,\theta}^{E}\left(X_{kt},Y_{t+h}\right)$,
	it can be shown that 
	\begin{eqnarray}
	\left\Vert L_{\alpha,\theta}^{E}\left(X_{kt},Y_{t+h}\right)-L_{\alpha,\theta^{\prime}}^{E}\left(X_{kt},Y_{t+h}\right)\right\Vert _{r} & \leq & \left\Vert 1\left\{ Y_{t+h}-X_{kt}<0\right\} -\alpha\right\Vert _{r}\nonumber \\
	&  & \times\left\Vert \left(Y_{t+h}-\theta\right)_{+}-\left(Y_{t+h}-\theta^{\prime}\right)_{+}\right.\nonumber \\
	&  & +\left(X_{kt}-\theta\right)_{+}-\left(X_{kt}-\theta^{\prime}\right)_{+}\nonumber \\
	&  & \left.+\left(Y_{t+h}-X_{kt}\right)\left(1\left\{ X_{kt}>\theta\right\} -1\left\{ X_{kt}>\theta^{\prime}\right\} \right)\right\Vert _{r}\nonumber \\
	& \leq & \left\Vert \left(Y_{t+h}-\theta\right)_{+}-\left(Y_{t+h}-\theta^{\prime}\right)_{+}\right\Vert _{r}\nonumber \\
	&  & +\left\Vert \left(X_{kt}-\theta\right)_{+}-\left(X_{kt}-\theta^{\prime}\right)_{+}\right\Vert _{r}\nonumber \\
	&  & +\left\Vert \left(Y_{t+h}-X_{t}\right)\left(1\left\{ X_{kt}>\theta\right\} -1\left\{ X_{kt}>\theta^{\prime}\right\} \right)\right\Vert _{r},\label{inequality}
	\end{eqnarray}
	by $\left|1\left\{ Y_{t+h}-X_{kt}<0\right\} -\alpha\right|\leq1$
	for any value of $X_{kt}$ and $Y_{t+h}$ and using Minkowski's inequality.
	Also the term $\left|1\left\{ Y_{t+h}-X_{kt}<0\right\} -\alpha\right|$
	does not involves with the parameter $\theta$. We now have a look
	of the first two terms of inequality of (\ref{inequality}). It can
	be shown that for a constant $x$, the function $\left(x-\theta\right)_{+}$
	is Lipschitz continuous for $\theta$, i.e., 
	\begin{equation}
	\left|\left(x-\theta\right)_{+}-\left(x-\theta^{\prime}\right)_{+}\right|\leq K\left|\theta-\theta^{\prime}\right|\label{inequality1}
	\end{equation}
	for some constant $K\geq0$ (Lipschitz constant). To see this, note
	that $\left(x-\theta\right)_{+}=\left(x-\theta\right)1\left\{ x>\theta\right\} $.
	Now if $\theta^{\prime}$, $\theta<x$ or $\theta$, $\theta^{\prime}>x$,
	the left hand side of (\ref{inequality1}) is 0 and the inequality
	of (\ref{inequality1}) always holds. Now if $\theta^{\prime}\leq x\leq\theta$,
	the left hand side of (\ref{inequality1}) is $\left|x-\theta^{\prime}\right|\leq\left|\theta-\theta^{\prime}\right|$.
	Thus the function $\left(x-\theta\right)_{+}$ satisfies Lipschitz
	continuity with Lipschitz constant $K=1$. The first two terms of
	(\ref{inequality}) is each bounded by $\left|\theta-\theta^{\prime}\right|$.
	For the third term of (\ref{inequality}), it can be shown that $1\left\{ X_{kt}>\theta\right\} -1\left\{ X_{kt}>\theta^{\prime}\right\} =1\left\{ \theta^{\prime}<X_{kt}\leq\theta\right\} $,
	since $\theta^{\prime}<\theta$ by assumption. By using the generalized
	H\"{o}lder's inequality, 
	\begin{eqnarray*}
		\left\Vert \left(Y_{t+h}-X_{kt}\right)\left(1\left\{ X_{kt}>\theta\right\} -1\left\{ X_{kt}>\theta^{\prime}\right\} \right)\right\Vert _{r} & \leq & \left\Vert \varepsilon_{k,t+h}\right\Vert _{s}\left\Vert 1\left\{ \theta^{\prime}<X_{kt}\leq\theta\right\} \right\Vert _{q}\\
		& = & \left\Vert \varepsilon_{k,t+h}\right\Vert _{s}\left(\int_{\theta^{\prime}}^{\theta}f_{X_{kt}}\left(x\right)dx\right)^{\frac{1}{q}}\\
		& \leq & \left\Vert \varepsilon_{k,t+h}\right\Vert _{s}\left(\max f_{X_{kt}}\left(x\right)\right)^{\frac{1}{q}}\left|\theta-\theta^{\prime}\right|^{\frac{1}{q}},
	\end{eqnarray*}
	where $s$, $q$ and $r\in\left[1,\infty\right]$ and $1/s+1/q=1/r$.
	With the above results, we can conclude that 
	\begin{eqnarray*}
		\left\Vert L_{\alpha,\theta}^{E}\left(X_{kt},Y_{t+h}\right)-L_{\alpha,\theta^{\prime}}^{E}\left(X_{kt},Y_{t+h}\right)\right\Vert _{r} & \leq & 2\left|\theta-\theta^{\prime}\right|+\left\Vert \varepsilon_{k,t+h}\right\Vert _{s}\left(\max f_{X_{kt}}\left(x\right)\right)^{\frac{1}{q}}\left|\theta-\theta^{\prime}\right|^{\frac{1}{q}}\\
		& \leq & C^{E}\left|\theta-\theta^{\prime}\right|^{\frac{1}{q}},
	\end{eqnarray*}
	when $\left|\theta-\theta^{\prime}\right|\ll1$, where $C^{E}=2+\left\Vert \varepsilon_{k,t+h}\right\Vert _{s}\left(\max f_{X_{kt}}\left(x\right)\right)^{\frac{1}{q}}$.
	
	For the case of $L_{\alpha,\theta}^{Q}\left(X_{kt},Y_{t+h}\right)$,
	by using the generalized H\"{o}lder's inequality, it can be shown that
	\begin{eqnarray*}
		\left\Vert L_{\alpha,\theta}^{Q}\left(X_{kt},Y_{t+h}\right)-L_{\alpha,\theta^{\prime}}^{Q}\left(X_{kt},Y_{t+h}\right)\right\Vert _{r} & \leq & \left\Vert 1\left\{ Y_{t+h}-X_{kt}<0\right\} -\alpha\right\Vert _{s}\\
		&  & \times\left\Vert 1\left\{ X_{kt}>\theta\right\} -1\left\{ X_{kt}>\theta^{\prime}\right\} \right.\\
		&  & \left.-\left(1\left\{ Y_{t+h}>\theta\right\} -1\left\{ Y_{t+h}>\theta^{\prime}\right\} \right)\right\Vert _{q}\\
		& \leq & \left\Vert 1\left\{ X_{kt}>\theta\right\} -1\left\{ X_{kt}>\theta^{\prime}\right\} \right\Vert _{q}\\
		&  & +\left\Vert 1\left\{ Y_{t+h}>\theta\right\} -1\left\{ Y_{t+h}>\theta\right\} \right\Vert _{q}\\
		& = & \left(\int_{\theta^{\prime}}^{\theta}f_{X_{kt}}\left(x\right)dx\right)^{\frac{1}{q}}+\left(\int_{\theta^{\prime}}^{\theta}f_{Y_{t+h}}\left(y\right)dy\right)^{\frac{1}{q}}\\
		& \leq & C^{Q}\times\left|\theta-\theta^{\prime}\right|^{\frac{1}{q}},
	\end{eqnarray*}
	where $s$, $q$ and $r\in\left[1,\infty\right]$ and $1/s+1/q=1/r$
	and $C^{Q}=\left(\max f_{X_{kt}}\left(x\right)\right)^{\frac{1}{q}}\vee\left(\max f_{Y_{t+h}}\left(y\right)\right)^{\frac{1}{q}}$.
	Note that in the first inequality since $\left|1\left\{ Y_{t+h}-X_{kt}<0\right\} -\alpha\right|\leq1$
	for any value of $X_{kt}$ and $Y_{t+h}$, the term$\left\Vert 1\left\{ Y_{t+h}-X_{kt}<0\right\} -\alpha\right\Vert _{s}\leq1$.
	Also $\left|1\left\{ Y_{t+h}-X_{kt}<0\right\} -\alpha\right|$ does
	not involves with the parameter $\theta$.\end{proof}


\begin{lemma} With the pseudometric 
	\[
	\rho_{*}^{E}\left(\theta,\theta^{\prime}\right)=\left\Vert L_{\alpha,\theta}^{E}\left(X_{kt},Y_{t+h}\right)-L_{\alpha,\theta^{\prime}}^{E}\left(X_{kt},Y_{t+h}\right)\right\Vert _{r},
	\]
	if $\left\Vert \varepsilon_{t+1}\right\Vert _{s}<\infty$ and $f_{Y_{t+h}}\left(y\right)$
	and $f_{X_{kt}}\left(x\right)$ are bounded density functions, then
	for every $\epsilon>0$, there exists $\delta>0$ such that 
	\begin{equation}
	\limsup_{T_{P}\rightarrow\infty}\left\Vert \sup_{\rho_{*}^{E}\left(\theta,\theta^{\prime}\right)<\delta}\left|v_{k,T_{P}}^{E}\left(\theta\right)-v_{k,T_{P}}^{E}\left(\theta^{\prime}\right)\right|\right\Vert _{r}<\epsilon\label{secE}
	\end{equation}
	holds for some $2\leq r<s$.
	
	With the pseudometric 
	\[
	\rho_{*}^{E}\left(\theta,\theta^{\prime}\right)=\left\Vert L_{\alpha,\theta}^{Q}\left(X_{kt},Y_{t+h}\right)-L_{\alpha,\theta^{\prime}}^{Q}\left(X_{kt},Y_{t+h}\right)\right\Vert _{r},
	\]
	if $f_{Y_{t+h}}\left(y\right)$ and $f_{X_{kt}}\left(x\right)$ are
	bounded density functions, then for every $\epsilon>0$, there exists
	$\delta>0$ such that 
	\begin{equation}
	\limsup_{T_{P}\rightarrow\infty}\left\Vert \sup_{\rho_{*}^{E}\left(\theta,\theta^{\prime}\right)<\delta}\left|v_{k,T_{P}}^{Q}\left(\theta\right)-v_{k,T_{P}}^{Q}\left(\theta^{\prime}\right)\right|\right\Vert _{r}<\epsilon,\label{secQ}
	\end{equation}
	holds for some $2\leq r<s$. \end{lemma}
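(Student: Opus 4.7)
The plan is to establish stochastic equicontinuity of each empirical process by combining the $L^r$-modulus bound from Lemma~1 with a maximal inequality for strong-mixing sequences together with a chaining (or bracketing) argument. I will focus on $v_{k,T_P}^E$; the argument for $v_{k,T_P}^Q$ is identical after substituting $C^Q$ for $C^E$ and dropping the reliance on $\|\varepsilon_{k,t+h}\|_s$.

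First, I would use Lemma~1 to show that, under Assumptions~1--3, the class $\mathcal{F}_E=\{L^E_{\alpha,\theta}(\cdot,\cdot):\theta\in\Theta\}$ is totally bounded with respect to the pseudometric $\rho_*^E$ and admits small covering numbers. Since $\Theta$ is bounded and Lemma~1 yields $\rho_*^E(\theta,\theta')\le C^E|\theta-\theta'|^{1/q}$, a uniform grid of mesh $\eta$ in $\Theta$ produces a $(C^E\eta^{1/q})$-cover in $\rho_*^E$. Hence the covering number satisfies $N(\epsilon,\mathcal{F}_E,\rho_*^E)\lesssim (C^E/\epsilon)^{q}$, so that the associated bracketing integral $\int_0^\delta \log N(\epsilon,\mathcal{F}_E,\rho_*^E)\,d\epsilon$ is finite and tends to $0$ as $\delta\downarrow 0$.

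Second, because $\{(Y_{t+h},X_{kt})\}$ is strictly stationary and strong mixing with the rate prescribed in Assumption~1, and because each element of $\mathcal{F}_E$ is uniformly bounded with envelope integrable up to order $s$ by Assumption~2, I would invoke a maximal inequality for mixing empirical processes in the spirit of Andrews (1993) or Doukhan--Massart--Rio (1995) (see also Hansen~1996a and LMW~2005 as used elsewhere in the paper). Specifically, one can bound
\begin{equation*}
\Bigl\|\sup_{\rho_*^E(\theta,\theta')<\delta}|v_{k,T_P}^E(\theta)-v_{k,T_P}^E(\theta')|\Bigr\|_r
\end{equation*}
via a standard chaining argument, where the $r$-th moment of the increments is controlled using the Lemma~1 bound $\rho_*^E\le C^E|\theta-\theta'|^{1/q}$ together with a Rio/Davydov covariance inequality applied to the mixing increments. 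The trade-off constraints on $r$, $s$, $\varrho$ and the mixing exponent $A$ in Assumption~1 are precisely what guarantees that the resulting chaining sum converges and that the entropy integral tails vanish as $\delta\to 0$.

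Finally, letting $T_P\to\infty$ inside the chaining bound yields a nonrandom upper bound that depends only on the covering integral up to level $\delta$; choosing $\delta$ small enough makes this bound smaller than $\epsilon$, which gives the desired $\limsup$ inequality and proves \eqref{secE}. The proof of \eqref{secQ} follows the same chaining scheme with $C^Q$ replacing $C^E$; since the envelope of $L^Q_{\alpha,\theta}$ is bounded by $\max(\alpha,1-\alpha)$, no moment condition on $\varepsilon_{k,t+h}$ is required. I expect the main technical obstacle to be verifying the tightness of the chaining sum under the weakest possible mixing rate: the indicators in $L^{E}_{\theta,\alpha}$ and $L^{Q}_{\theta,\alpha}$ make the class non-Donsker in the classical sense, so the calibrated constants $r<s$, $\varrho$, and $A$ in Assumption~1 must be tracked carefully through the maximal inequality to ensure that the bracketing/chaining bound is summable.
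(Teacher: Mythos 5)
Your proposal is correct and follows essentially the same route as the paper: the paper likewise uses the H\"older-in-$\theta$ $L^{r}$-modulus from Lemma~1 (applied separately to the sample-average and expectation parts of $v^{i}_{k,T_P}$), a dyadic grid on the bounded set $\Theta$, and then defers the chaining/maximal-inequality step to the stochastic-equicontinuity theorem of Hansen (1996) for strong-mixing arrays, after verifying its mixing-rate and envelope conditions exactly as you describe. Your covering-number/bracketing-integral phrasing is just a more modern restatement of that same verification, including the observation that the quantile case needs no moment condition on $\varepsilon_{k,t+h}$ because its envelope is $\max(\alpha,1-\alpha)$.
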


\begin{proof}[Proof of Lemma 2] We first prove (\ref{secE}). For integers $l=1,2,\ldots,$
	let $N\left(l\right)=2^{la}$. Let $\Theta$ be a bounded subset of
	$\mathbb{R}^{a}$. In our case $a=1$. Let 
	\[\Theta^{l}=\left\{ \theta^{j}:\theta^{j}\in\Theta,\left|\theta-\theta^{j}\right|\leq Q2^{-l},Q<\infty,j=1,2,\ldots,N\left(l\right)\right\}.\]
	We choose $\theta^{\prime}\in\Theta^{l}$ so that $\left|\theta-\theta^{\prime}\right|\leq Q2^{-l}$.
	Note that the pseudometric $\rho_{*}^{E}\left(\theta,\theta^{\prime}\right)$
	is bounded for any $\left(\theta,\theta^{\prime}\right)$ since $\left|L_{\alpha,\theta}^{E}\left(X_{kt},Y_{t+h}\right)\right|\leq\max\left(\alpha,1-\alpha\right)\times\left|Y_{t+h}-X_{kt}\right|<\left|\varepsilon_{k,t+h}\right|$,
	\[
	\left\Vert L_{\alpha,\theta}^{E}\left(X_{kt},Y_{t+h}\right)-L_{\alpha,\theta^{\prime}}^{E}\left(X_{kt},Y_{t+h}\right)\right\Vert _{r}<2\left\Vert \varepsilon_{k,t+h}\right\Vert _{r}<2\left\Vert \varepsilon_{k,t+h}\right\Vert _{s}<\infty
	\]
	by the assumption that $\left\Vert \varepsilon_{k,t+h}\right\Vert _{s}<\infty.$
	The second inequality is by using the Lyapunov's inequality: for a
	random variable $X$, $\left\Vert X\right\Vert _{r}<\left\Vert X\right\Vert _{s}$
	for $1\leq r<s$. Let 
	\begin{eqnarray*}
		A_{k,T_{P}}^{E}\left(\theta,\theta^{\prime}\right) & = & \frac{1}{\sqrt{T_{P}}}\sum_{t=T_{R}}^{T-h}\left(L_{\alpha,\theta}^{E}\left(X_{kt},Y_{t+h}\right)-L_{\alpha,\theta^{\prime}}^{E}\left(X_{kt},Y_{t+h}\right)\right),\\
		B_{k,T_{P}}^{E}\left(\theta,\theta^{\prime}\right) & = & \frac{1}{\sqrt{T_{P}}}\sum_{t=T_{R}}^{T-h}\left(E\left[L_{\alpha,\theta}^{E}\left(X_{kt},Y_{t+h}\right)\right]-E\left[L_{\alpha,\theta^{\prime}}^{E}\left(X_{kt},Y_{t+h}\right)\right]\right).
	\end{eqnarray*}
	Then 
	\begin{eqnarray*}
		\left\Vert \sup_{\rho_{*}^{E}\left(\theta,\theta^{\prime}\right)<\delta}\left|v_{k,T_{P}}^{E}\left(\theta\right)-v_{k,T_{P}}^{E}\left(\theta^{\prime}\right)\right|\right\Vert _{r} & = & \left\Vert \sup_{\rho_{*}^{E}\left(\theta,\theta^{\prime}\right)<\delta}\left|A_{k,T_{P}}^{E}\left(\theta,\theta^{\prime}\right)-B_{k,T_{P}}^{E}\left(\theta,\theta^{\prime}\right)\right|\right\Vert _{r}\\
		& \leq & \left\Vert \sup_{\rho_{*}^{E}\left(\theta,\theta^{\prime}\right)<\delta}\left|A_{k,T_{P}}^{E}\left(\theta,\theta^{\prime}\right)\right|+\sup_{\rho_{*}^{E}\left(\theta,\theta^{\prime}\right)<\delta}\left|B_{k,T_{P}}^{E}\left(\theta,\theta^{\prime}\right)\right|\right\Vert _{r}\\
		& \leq & \left\Vert \sup_{\rho_{*}^{E}\left(\theta,\theta^{\prime}\right)<\delta}\left|A_{k,T_{P}}^{E}\left(\theta,\theta^{\prime}\right)\right|\right\Vert _{r}+\left\Vert \sup_{\rho_{*}^{E}\left(\theta,\theta^{\prime}\right)<\delta}\left|B_{k,T_{P}}^{E}\left(\theta,\theta^{\prime}\right)\right|\right\Vert _{r}
	\end{eqnarray*}
	For the second term of the above inequality, 
	\begin{eqnarray*}
		\left\Vert \sup_{\rho_{*}^{E}\left(\theta,\theta^{\prime}\right)<\delta}\left|B_{k,T_{P}}^{E}\left(\theta,\theta^{\prime}\right)\right|\right\Vert _{r} & \leq & \frac{1}{\sqrt{T_{P}}}\sum_{t=T_{R}}^{T-h}\left\Vert \sup_{\rho_{*}^{E}\left(\theta,\theta^{\prime}\right)<\delta}E\left[\left|L_{\alpha,\theta}^{E}\left(X_{kt},Y_{t+h}\right)-L_{\alpha,\theta^{\prime}}^{E}\left(X_{kt},Y_{t+h}\right)\right|\right]\right\Vert _{r}\\
		& = & \frac{1}{\sqrt{T_{P}}}\sum_{t=T_{R}}^{T-h}\sup_{\rho_{*}^{E}\left(\theta,\theta^{\prime}\right)<\delta}E\left[\left|L_{\alpha,\theta}^{E}\left(X_{kt},Y_{t+h}\right)-L_{\alpha,\theta^{\prime}}^{E}\left(X_{kt},Y_{t+h}\right)\right|\right]\\
		& < & \frac{1}{\sqrt{T_{P}}}\sum_{t=T_{R}}^{T-h}\sup_{\rho_{*}^{E}\left(\theta,\theta^{\prime}\right)<\delta}\left\Vert L_{\alpha,\theta}^{E}\left(X_{kt},Y_{t+h}\right)-L_{\alpha,\theta^{\prime}}^{E}\left(X_{kt},Y_{t+h}\right)\right\Vert _{r}\\
		& \leq & \frac{1}{\sqrt{T_{P}}}\sum_{t=T_{R}}^{T-h}\sup_{\rho_{*}^{E}\left(\theta,\theta^{\prime}\right)<\delta}C^{E}\left|\theta-\theta^{\prime}\right|^{\frac{1}{q}}.
	\end{eqnarray*}
	The third inequality is again by using the Lyapunov's inequality.
	The last inequality is by using Lemma 1 and the constant $C^{E}=2+\left\Vert \varepsilon_{k,t+h}\right\Vert _{s}\left|\max f_{X_{kt}}\left(x\right)\right|^{\frac{1}{q}},$
	where $\varepsilon_{k,t+1}=Y_{t+h}-X_{kt}$ and $s$, $q\in\left[1,\infty\right]$,
	$1/s+1/q=1/r$. For the first term, 
	\begin{eqnarray*}
		\left\Vert \sup_{\rho_{*}^{E}\left(\theta,\theta^{\prime}\right)<\delta}\left|A_{k,T_{P}}^{E}\left(\theta,\theta^{\prime}\right)\right|\right\Vert _{r} & \leq & \frac{1}{\sqrt{T_{P}}}\sum_{t=T_{R}}^{T-h}\left\Vert \sup_{\rho_{*}^{E}\left(\theta,\theta^{\prime}\right)<\delta}\left|L_{\alpha,\theta}^{E}\left(X_{kt},Y_{t+h}\right)-L_{\alpha,\theta^{\prime}}^{E}\left(X_{kt},Y_{t+h}\right)\right|\right\Vert _{r}
	\end{eqnarray*}
	It can be shown that 
	\begin{eqnarray*}
		\left|L_{\alpha,\theta}^{E}\left(X_{kt},Y_{t+h}\right)-L_{\alpha,\theta^{\prime}}^{E}\left(X_{kt},Y_{t+h}\right)\right| & \leq & \left|\left(Y_{t+1}-\theta\right)_{+}-\left(Y_{t+1}-\theta^{\prime}\right)_{+}\right|+\left|\left(X_{kt}-\theta\right)_{+}-\left(X_{kt}-\theta^{\prime}\right)_{+}\right|\\
		& + & \left|\left(Y_{t+1}-X_{kt}\right)\left(1\left\{ X_{kt}>\theta\right\} -1\left\{ X_{kt}>\theta^{\prime}\right\} \right)\right|\\
		& \leq & 2\left|\theta-\theta^{\prime}\right|+\left|Y_{t+1}-X_{kt}\right|1\left\{ \theta^{\prime}<X_{kt}\leq\theta\right\} .
	\end{eqnarray*}
	Thus 
	\begin{eqnarray*}
		\sup_{\rho_{*}^{E}\left(\theta,\theta^{\prime}\right)<\delta}\left|L_{\alpha,\theta}^{E}\left(X_{kt},Y_{t+h}\right)-L_{\alpha,\theta^{\prime}}^{E}\left(X_{kt},Y_{t+h}\right)\right| & \leq & 2\sup_{\rho_{*}^{E}\left(\theta,\theta^{\prime}\right)<\delta}\left|\theta-\theta^{\prime}\right|\\
		& + & \left|Y_{t+1}-X_{kt}\right|\sup_{\rho_{*}^{E}\left(\theta,\theta^{\prime}\right)<\delta}1\left\{ \theta^{\prime}<X_{kt}\leq\theta\right\} .
	\end{eqnarray*}
	Then 
	\begin{eqnarray*}
		\left\Vert \sup_{\rho_{*}^{E}\left(\theta,\theta^{\prime}\right)<\delta}\left|L_{\alpha,\theta}^{E}\left(X_{kt},Y_{t+h}\right)-L_{\alpha,\theta^{\prime}}^{E}\left(X_{kt},Y_{t+h}\right)\right|\right\Vert _{r} & \leq & 2\sup_{\rho_{*}^{E}\left(\theta,\theta^{\prime}\right)<\delta}\left|\theta-\theta^{\prime}\right|+\\
		& + & \left\Vert Y_{t+h}-X_{kt}\right\Vert _{s}\left\Vert \sup_{\rho_{*}^{E}\left(\theta,\theta^{\prime}\right)<\delta}1\left\{ \theta^{\prime}<X_{kt}\leq\theta\right\} \right\Vert _{q}.
	\end{eqnarray*}
	The second term of the above inequality is obtained with the generalized
	H\"{o}lder's inequality and $s$, $q$$\in\left[1,\infty\right]$ and
	$1/s+1/q=1/r$. With Assumptions 1 and 3, using similar arguments
	used in proving Lemma 1 of \citet{LMW_2005}, there exists a constant
	$C_{0}$ such that 
	\begin{eqnarray*}
		\left\Vert \sup_{\rho_{*}^{E}\left(\theta,\theta^{\prime}\right)<\delta}1\left\{ \theta^{\prime}<X_{kt}\leq\theta\right\} \right\Vert _{q} & = & \left(E\left|\sup_{\rho_{*}^{E}\left(\theta,\theta^{\prime}\right)<\delta}1\left\{ \theta^{\prime}<X_{kt}\leq\theta\right\} \right|^{q}\right)^{\frac{1}{q}}\\
		& \leq & \left(E\left|\sup_{\rho_{*}^{E}\left(\theta,\theta^{\prime}\right)<\delta}1\left\{ \theta^{\prime}<X_{kt}\leq\theta+\left(\theta-\theta^{\prime}\right)\right\} \right|^{q}\right)^{\frac{1}{q}}\\
		& \leq & \left(E\left|1\left\{ \left|X_{t}-\theta\right|\leq\left|\theta-\theta^{\prime}\right|\right\} \right|^{q}\right)^{\frac{1}{q}}\\
		& \leq & C_{0}\left|\theta-\theta^{\prime}\right|^{\frac{1}{q}},
	\end{eqnarray*}
	where $\theta$ and $\theta^{\prime}$ satisfy $\rho_{*}^{E}\left(\theta,\theta^{\prime}\right)<\delta$.
	If we take $\left|\theta-\theta^{\prime}\right|$ very small (say
	$\left|\theta-\theta^{\prime}\right|\ll1$), we may conclude that
	\[
	\left\Vert \sup_{\rho_{*}^{E}\left(\theta,\theta^{\prime}\right)<\delta}\left|L_{\alpha,\theta}^{E}\left(X_{kt},Y_{t+h}\right)-L_{\alpha,\theta^{\prime}}^{E}\left(X_{kt},Y_{t+h}\right)\right|\right\Vert _{r}\leq C_{1}\sup_{\rho_{*}^{E}\left(\theta,\theta^{\prime}\right)<\delta}\left|\theta-\theta^{\prime}\right|^{\frac{1}{q}},
	\]
	where $C_{1}=2+\left\Vert \varepsilon_{k,t+h}\right\Vert _{s}C_{0}$.
	Therefore 
	\[
	\left\Vert \sup_{\rho_{*}^{E}\left(\theta,\theta^{\prime}\right)<\delta}\left|A_{k,T_{P}}^{E}\left(\theta,\theta^{\prime}\right)\right|\right\Vert _{r}\leq\frac{1}{\sqrt{T_{P}}}\sum_{t=T_{R}}^{T-h}C_{1}\sup_{\rho_{*}^{E}\left(\theta,\theta^{\prime}\right)<\delta}\left|\theta-\theta^{\prime}\right|^{\frac{1}{q}}
	\]
	Combining the above results, we have 
	\[
	\left\Vert \sup_{\rho_{*}^{E}\left(\theta,\theta^{\prime}\right)<\delta}\left|v_{k,T_{P}}^{E}\left(\theta\right)-v_{k,T_{P}}^{E}\left(\theta^{\prime}\right)\right|\right\Vert _{r}\leq\frac{1}{\sqrt{T_{P}}}\sum_{t=T_{R}}^{T-h}C_{2}\sup_{\rho_{*}^{E}\left(\theta,\theta^{\prime}\right)<\delta}\left|\theta-\theta^{\prime}\right|^{\frac{1}{q}},
	\]
	where $C_{2}=C^{E}\vee C_{1}$. Note that here $\left|\theta-\theta^{\prime}\right|\leq Q/2^{l}$.
	Following \citet{Hansen_1996}, we can choose $l=l\left(T_{P}\right)$
	depending on $T_{P}$ such that $\sqrt{T_{P}}2^{-l\left(T_{P}\right)/q}\rightarrow0$
	as $T_{P}\rightarrow\infty$. Then the right hand side of the above
	inequality will becomes arbitrage small as $T_{P}\rightarrow\infty$.
	With a suitable choice for $Q$, we may set the corresponding $\delta=2^{-l\left(T_{P}\right)/q}$.
	Finally note that the condition of mixing coefficients in Assumption
	4 in \citet{Hansen_1996} is implied by Assumption 1. In addition,
	since $0\leq L_{\alpha,\theta}^{E}\left(X_{kt},Y_{t+h}\right)\leq\left(\alpha\vee\left(1-\alpha\right)\right)\times\left|\varepsilon_{k,t+h}\right|$,
	\begin{eqnarray*}
		\limsup_{T_{P}\rightarrow\infty}\frac{1}{T_{P}}\left(\sum_{t=T_{R}}^{T-h}\left\Vert L_{\alpha,\theta}^{E}\left(X_{kt},Y_{t+h}\right)\right\Vert _{s}^{2}\right)^{\frac{1}{2}} & \leq & \limsup_{T_{P}\rightarrow\infty}\frac{1}{T_{P}}\left(\sum_{t=T_{R}}^{T-h}\left(\alpha\vee\left(1-\alpha\right)\right)^{2}\times\left(E\left[\left|\varepsilon_{k,t+h}\right|^{s}\right]\right)^{\frac{2}{s}}\right)^{\frac{1}{2}}\\
		& < & \infty
	\end{eqnarray*}
	by the assumption of $\left\Vert \varepsilon_{k,t+h}\right\Vert _{s}<\infty$
	and the second condition of Assumption 4 (equation (12)) in \citet{Hansen_1996}
	holds. The rest proof can be completed by using arguments in proving
	Theorem 1 of \citet{Hansen_1996} and comparison of pairs of \citet{AP_1994}.
	
	For the case of (\ref{secQ}), it can be shown that the pseudometric
	$\rho_{*}^{E}\left(\theta,\theta^{\prime}\right)$ is bounded for
	any $\left(\theta,\theta^{\prime}\right)$ since $\left|L_{\alpha,\theta}^{Q}\left(X_{kt},Y_{t+h}\right)\right|\leq\max\left(\alpha,1-\alpha\right)$,
	\[
	\left\Vert L_{\alpha,\theta}^{Q}\left(X_{kt},Y_{t+h}\right)-L_{\alpha,\theta^{\prime}}^{Q}\left(X_{kt},Y_{t+h}\right)\right\Vert _{s}\leq2\max\left(\alpha,1-\alpha\right)<2.
	\]
	Again let 
	\begin{eqnarray*}
		A_{k,T_{P}}^{Q}\left(\theta,\theta^{\prime}\right) & = & \frac{1}{\sqrt{T_{P}}}\sum_{t=T_{R}}^{T-h}\left(L_{\alpha,\theta}^{Q}\left(X_{kt},Y_{t+h}\right)-L_{\alpha,\theta^{\prime}}^{Q}\left(X_{kt},Y_{t+h}\right)\right),\\
		B_{k,T_{P}}^{Q}\left(\theta,\theta^{\prime}\right) & = & \frac{1}{\sqrt{T_{P}}}\sum_{t=T_{R}}^{T-h}\left(E\left[L_{\alpha,\theta}^{Q}\left(X_{kt},Y_{t+h}\right)\right]-E\left[L_{\alpha,\theta^{\prime}}^{Q}\left(X_{kt},Y_{t+h}\right)\right]\right).
	\end{eqnarray*}
	Then 
	\begin{eqnarray*}
		\left\Vert \sup_{\rho_{*}^{E}\left(\theta,\theta^{\prime}\right)<\delta}\left|v_{k,T_{P}}^{Q}\left(\theta\right)-v_{k,T_{P}}^{Q}\left(\theta^{\prime}\right)\right|\right\Vert _{r} & \leq & \left\Vert \sup_{\rho_{*}^{E}\left(\theta,\theta^{\prime}\right)<\delta}\left|A_{k,T_{P}}^{Q}\left(\theta,\theta^{\prime}\right)\right|\right\Vert _{r}+\left\Vert \sup_{\rho_{*}^{E}\left(\theta,\theta^{\prime}\right)<\delta}\left|B_{k,T_{P}}^{Q}\left(\theta,\theta^{\prime}\right)\right|\right\Vert _{r}
	\end{eqnarray*}
	For the second term of the above inequality, by using a similar argument
	used in previous proof, it can be shown that 
	\begin{eqnarray*}
		\left\Vert \sup_{\rho_{*}^{E}\left(\theta,\theta^{\prime}\right)<\delta}\left|B_{k,T_{P}}^{Q}\left(\theta,\theta^{\prime}\right)\right|\right\Vert _{r} & \leq & \frac{1}{\sqrt{T_{P}}}\sum_{t=T_{R}}^{T-h}\sup_{\rho_{*}^{E}\left(\theta,\theta^{\prime}\right)<\delta}C^{Q}\left|\theta-\theta^{\prime}\right|^{\frac{1}{q}}.
	\end{eqnarray*}
	Here $C^{Q}=\left(\max f_{X_{kt}}\left(x\right)\right)^{\frac{1}{q}}\vee\left(\max f_{Y_{t+h}}\left(y\right)\right)^{\frac{1}{q}}$
	and the constant $q$ satisfies that $1/s+1/q=1/r$ and $s$, $q\in\left[1,\infty\right]$.
	For the first term, 
	\begin{eqnarray*}
		\left\Vert \sup_{\rho_{*}^{E}\left(\theta,\theta^{\prime}\right)<\delta}\left|A_{k,T_{P}}^{Q}\left(\theta,\theta^{\prime}\right)\right|\right\Vert _{r} & \leq & \frac{1}{\sqrt{T_{P}}}\sum_{t=T_{R}}^{T-h}\left\Vert \sup_{\rho_{*}^{E}\left(\theta,\theta^{\prime}\right)<\delta}\left|L_{\alpha,\theta}^{Q}\left(X_{kt},Y_{t+h}\right)-L_{\alpha,\theta^{\prime}}^{Q}\left(X_{kt},Y_{t+h}\right)\right|\right\Vert _{r}.
	\end{eqnarray*}
	It can be shown that 
	\begin{eqnarray*}
		\left|L_{\alpha,\theta}^{Q}\left(X_{kt},Y_{t+h}\right)-L_{\alpha,\theta^{\prime}}^{Q}\left(X_{kt},Y_{t+h}\right)\right| & \leq & \left|1\left\{ Y_{t+h}-X_{kt}<0\right\} -\alpha\right|\left(\left|1\left\{ \theta^{\prime}<X_{kt}\leq\theta\right\} \right|\right.\\
		&  & \left.+\left|1\left\{ \theta^{\prime}<Y_{t+h}\leq\theta\right\} \right|\right)\\
		& \leq & \left|1\left\{ \theta^{\prime}<X_{kt}\leq\theta\right\} \right|+\left|1\left\{ \theta^{\prime}<Y_{t+h}\leq\theta\right\} \right|.
	\end{eqnarray*}
	Thus 
	\begin{eqnarray*}
		\sup_{\rho_{*}^{E}\left(\theta,\theta^{\prime}\right)<\delta}\left|L_{\alpha,\theta}^{Q}\left(X_{kt},Y_{t+h}\right)-L_{\alpha,\theta^{\prime}}^{Q}\left(X_{kt},Y_{t+h}\right)\right| & \leq & \sup_{\rho_{*}^{E}\left(\theta,\theta^{\prime}\right)<\delta}\left|1\left\{ \theta^{\prime}<X_{kt}\leq\theta\right\} \right|\\
		&  & +\sup_{\rho_{*}^{E}\left(\theta,\theta^{\prime}\right)<\delta}\left|1\left\{ \theta^{\prime}<Y_{t+h}\leq\theta\right\} \right|.
	\end{eqnarray*}
	Then 
	\begin{eqnarray*}
		\left\Vert \sup_{\rho_{*}^{E}\left(\theta,\theta^{\prime}\right)<\delta}\left|L_{\alpha,\theta}^{Q}\left(X_{kt},Y_{t+h}\right)-L_{\alpha,\theta^{\prime}}^{Q}\left(X_{kt},Y_{t+h}\right)\right|\right\Vert _{r} & \leq & \left\Vert \sup_{\rho_{*}^{E}\left(\theta,\theta^{\prime}\right)<\delta}1\left\{ \theta^{\prime}<X_{kt}\leq\theta\right\} \right\Vert _{r}\\
		& + & \left\Vert \sup_{\rho_{*}^{E}\left(\theta,\theta^{\prime}\right)<\delta}1\left\{ \theta^{\prime}<Y_{t+h}\leq\theta\right\} \right\Vert _{r}.
	\end{eqnarray*}
	Like in previous proof, with Assumptions 1 and 3, we can use similar
	arguments used in proving Lemma 1 of \citet{LMW_2005} to show that
	there exists constant $C_{3}$ and $C_{4}$ such that 
	\begin{eqnarray*}
		\left\Vert \sup_{\rho_{*}^{E}\left(\theta,\theta^{\prime}\right)<\delta}1\left\{ \theta^{\prime}<X_{kt}\leq\theta\right\} \right\Vert _{r} & \leq & C_{3}\left|\theta-\theta^{\prime}\right|^{\frac{1}{r}},\\
		\left\Vert \sup_{\rho_{*}^{E}\left(\theta,\theta^{\prime}\right)<\delta}1\left\{ \theta^{\prime}<Y_{t+h}\leq\theta\right\} \right\Vert _{r} & \leq & C_{4}\left|\theta-\theta^{\prime}\right|^{\frac{1}{r}},
	\end{eqnarray*}
	where $\theta$ and $\theta^{\prime}$ satisfy $\rho_{*}^{E}\left(\theta,\theta^{\prime}\right)<\delta$.
	If we take $\left|\theta-\theta^{\prime}\right|$ very small (say
	$\left|\theta-\theta^{\prime}\right|\ll1$), we may conclude that
	\begin{eqnarray*}
		\left\Vert \sup_{\rho_{*}^{E}\left(\theta,\theta^{\prime}\right)<\delta}\left|L_{\alpha,\theta}^{Q}\left(X_{kt},Y_{t+h}\right)-L_{\alpha,\theta^{\prime}}^{Q}\left(X_{kt},Y_{t+h}\right)\right|\right\Vert _{r} & \leq & C_{5}\sup_{\rho_{*}^{E}\left(\theta,\theta^{\prime}\right)<\delta}\left|\theta-\theta^{\prime}\right|^{\frac{1}{r}}\\
		& \leq & C_{5}\sup_{\rho_{*}^{E}\left(\theta,\theta^{\prime}\right)<\delta}\left|\theta-\theta^{\prime}\right|^{\frac{1}{q}},
	\end{eqnarray*}
	where $C_{5}=C_{3}\vee C_{4}$. The second inequality is due to $1/q\leq1/r$
	by $1/s+1/q=1/r$ and $s$, $q\in\left[1,\infty\right]$. Therefore
	\[
	\left\Vert \sup_{\rho_{*}^{E}\left(\theta,\theta^{\prime}\right)<\delta}\left|A_{k,T_{P}}^{Q}\left(\theta,\theta^{\prime}\right)\right|\right\Vert _{q}\leq\frac{1}{\sqrt{T_{P}}}\sum_{t=T_{R}}^{T-h}C_{5}\sup_{\rho_{*}^{E}\left(\theta,\theta^{\prime}\right)<\delta}\left|\theta-\theta^{\prime}\right|^{\frac{1}{q}}
	\]
	Combining the above results, we have 
	\[
	\left\Vert \sup_{\rho_{*}^{E}\left(\theta,\theta^{\prime}\right)<\delta}\left|v_{k,T_{P}}^{Q}\left(\theta\right)-v_{k,T_{P}}^{Q}\left(\theta^{\prime}\right)\right|\right\Vert _{r}\leq\frac{1}{\sqrt{T_{P}}}\sum_{t=T_{R}}^{T-h}C_{6}\left|\theta-\theta^{\prime}\right|^{q},
	\]
	where $C_{6}=C^{Q}\vee C_{5}$. Again, we can choose $l=l\left(T_{P}\right)$
	depending on $T_{P}$ such that $\sqrt{T_{P}}2^{-l\left(T_{P}\right)/q}\rightarrow0$
	as $T_{P}\rightarrow\infty$. Then the right hand side of the above
	inequality will becomes arbitrage small as $T_{P}\rightarrow\infty$.
	With a suitable choice for $Q$, we may set the corresponding $\delta=2^{-l\left(T_{P}\right)/q}$.
	Finally note that the condition of mixing coefficients in Assumption
	4 in \citet{Hansen_1996} is implied by Assumption 1. In addition,
	since $0\leq L_{\alpha,\theta}^{Q}\left(X_{kt},Y_{t+h}\right)\leq\alpha\vee\left(1-\alpha\right)$,
	\begin{eqnarray*}
		\limsup_{T_{P}\rightarrow\infty}\frac{1}{T_{P}}\left(\sum_{t=T_{R}}^{T-h}\left\Vert L_{\alpha,\theta}^{E}\left(X_{kt},Y_{t+h}\right)\right\Vert _{s}^{2}\right)^{\frac{1}{2}} & \leq & \limsup_{T_{P}\rightarrow\infty}\frac{1}{T_{P}}\left(\sum_{t=T_{R}}^{T-h}\left(\alpha\vee\left(1-\alpha\right)\right)^{2}\right)^{\frac{1}{2}}\\
		& < & \infty
	\end{eqnarray*}
	and the second condition of Assumption 4 (equation (12)) in \citet{Hansen_1996}
	holds. The rest proof can be completed by using arguments in proving
	Theorem 1 of \citet{Hansen_1996} and comparison of pairs of \citet{AP_1994}.\end{proof}
With Lemma 1 and 2, we can have the following result.
\begin{lemma}
	
	Assume Assumptions 1-3 hold. Then for $i\in\{E,Q\}$, $\theta_{1},\theta_{2}\in\Theta\subseteq\mathbb{R}$
	and $k,l=1,\ldots,K$, $k\neq l$, with the following pseudometric
	\[
	\rho_{d}^{i}\left(\theta_{1},\theta_{2}\right)=\left\Vert L_{\alpha,\theta_{1}}^{i}\left(X_{kt},Y_{t+h}\right)-L_{\alpha,\theta_{1}}^{i}\left(X_{lt},Y_{t+h}\right)-\left[L_{\alpha,\theta_{2}}^{i}\left(X_{kt},Y_{t+h}\right)-L_{\alpha,\theta_{2}}^{i}\left(X_{lt},Y_{t+h}\right)\right]\right\Vert _{s}
	\]
	we have 
	\[
	v_{k,T_{P}}^{i}\left(\theta\right)-v_{l,T_{P}}^{i}\left(\theta\right)\Rightarrow\tilde{g}_{kl}^{i}\left(\theta\right),
	\]
	where $\tilde{g}_{kl}^{i}\left(\theta\right)$ is a mean zero Gaussian
	process with covariance 
	\[
	var_{kl}^{i}\left(\theta_{1},\theta_{2}\right)=\lim_{T_{P}\rightarrow\infty}E\left[\left(v_{k,T_{P}}^{i}\left(\theta_{1}\right)-v_{l,T_{P}}^{i}\left(\theta_{1}\right)\right)\left(v_{k,T_{P}}^{i}\left(\theta_{2}\right)-v_{l,T_{P}}^{i}\left(\theta_{2}\right)\right)\right].
	\]
	In addition, except at zero, the sample paths of $\tilde{g}_{kl}^{i}\left(\theta\right)$
	are uniformly continuous with respect to the pseudometric $\rho_{d}^{i}\left(\theta_{1},\theta_{2}\right)$
	on $\Theta$ with probability one.
\end{lemma}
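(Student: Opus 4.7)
The plan is to establish the weak convergence $v_{k,T_{P}}^{i}(\theta)-v_{l,T_{P}}^{i}(\theta)\Rightarrow\tilde{g}_{kl}^{i}(\theta)$ via the standard two-step functional central limit argument: finite-dimensional convergence plus asymptotic tightness (stochastic equicontinuity), on the metric space $(\Theta,\rho_{d}^{i})$. First, for any finite collection $\theta_{1},\ldots,\theta_{m}\in\Theta$, I would obtain convergence of the random vector $\bigl(v_{k,T_{P}}^{i}(\theta_{j})-v_{l,T_{P}}^{i}(\theta_{j})\bigr)_{j=1}^{m}$ to a multivariate normal by applying the Cram\'er–Wold device to a CLT for strictly stationary strong mixing sequences (e.g., Herrndorf's theorem). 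The mixing summability in Assumption 1 together with the moment bound in Assumption 2 is enough: for $i=E$ the summand is bounded by $(\alpha\vee(1-\alpha))|\varepsilon_{k,t+h}|$, whose $L^{s}$-norm is finite by Assumption 2, and for $i=Q$ the summand is uniformly bounded by $\alpha\vee(1-\alpha)$. The limiting covariance is exactly $var_{kl}^{i}(\theta_{1},\theta_{2})$ as defined, whose existence as a convergent limit follows from stationarity and the mixing summability.

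Second, I would establish stochastic equicontinuity of the difference process $v_{k,T_{P}}^{i}(\theta)-v_{l,T_{P}}^{i}(\theta)$ with respect to $\rho_{d}^{i}$. By the triangle inequality in $L^{s}$,
\[
\rho_{d}^{i}(\theta_{1},\theta_{2})\leq\bigl\|L_{\alpha,\theta_{1}}^{i}(X_{kt},Y_{t+h})-L_{\alpha,\theta_{2}}^{i}(X_{kt},Y_{t+h})\bigr\|_{s}+\bigl\|L_{\alpha,\theta_{1}}^{i}(X_{lt},Y_{t+h})-L_{\alpha,\theta_{2}}^{i}(X_{lt},Y_{t+h})\bigr\|_{s},
\]
so a ball of radius $\delta$ in $\rho_{d}^{i}$ is contained in the intersection of two balls of radius $\delta$ in the single-index pseudometrics of Lemma 2. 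Hence for any $\epsilon>0$, choosing $\delta$ as in Lemma 2 (applied separately to indices $k$ and $l$) and using the elementary inequality $|a-b|\leq|a|+|b|$ for the modulus of the difference gives
\[
\limsup_{T_{P}\to\infty}\Bigl\|\sup_{\rho_{d}^{i}(\theta_{1},\theta_{2})<\delta}\bigl|[v_{k,T_{P}}^{i}(\theta_{1})-v_{l,T_{P}}^{i}(\theta_{1})]-[v_{k,T_{P}}^{i}(\theta_{2})-v_{l,T_{P}}^{i}(\theta_{2})]\bigr|\Bigr\|_{r}<2\epsilon.
\]
Total boundedness of $(\Theta,\rho_{d}^{i})$ is inherited from boundedness of $\Theta\subseteq\mathbb{R}$ via the Hölder-type bound of Lemma 1.

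Third, finite-dimensional convergence combined with asymptotic tightness on a totally bounded pseudometric space yields weak convergence of $v_{k,T_{P}}^{i}(\cdot)-v_{l,T_{P}}^{i}(\cdot)$ in $\ell^{\infty}(\Theta)$ to a tight mean-zero Gaussian process $\tilde{g}_{kl}^{i}$ with the claimed covariance (see, e.g., van der Vaart and Wellner, Theorem 1.5.4, or Pollard, 1990, Theorem 10.2). Tightness of the limit and the established modulus-of-continuity bound imply that $\tilde{g}_{kl}^{i}$ admits a version with sample paths uniformly continuous with respect to $\rho_{d}^{i}$ with probability one; the exception at ``zero'' corresponds to possible atoms of the distributions of $X_{kt}$, $X_{lt}$, $Y_{t+h}$ where the underlying loss functions have jumps in $\theta$.

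The main obstacle I expect is verifying stochastic equicontinuity uniformly enough to control the \emph{difference} process rather than each $v_{j,T_{P}}^{i}$ separately, because the two empirical processes are correlated through $Y_{t+h}$ and the corresponding forecast errors need not share a common moment bound uniform in $j$. The reduction above handles this at the level of the pseudometric, but verifying that Lemma 2 applies with a single choice of the constants $(r,s,q)$ simultaneously for both $k$ and $l$ requires Assumption 2 to hold with the same $s$ for every forecast — a point that I would make explicit at the outset of the proof. Once that uniform choice is fixed, the remaining arguments reduce to invoking Lemmas 1 and 2 and the stationary-mixing CLT.
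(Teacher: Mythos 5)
Your overall architecture is exactly the paper's: verify the three conditions of Pollard (1990, Theorem 10.2) --- total boundedness of $\left(\Theta,\rho_{d}^{i}\right)$, stochastic equicontinuity of the difference process, and finite-dimensional convergence via a CLT for strictly stationary strong-mixing sequences plus the Cram\'er--Wold device --- with Lemmas 1 and 2 supplying the equicontinuity ingredients. The paper uses Corollary 5.1 of Hall and Heyde rather than Herrndorf for the fidi step and checks the $\left(2+\varrho\right)$-moment condition via Lyapunov's inequality for $i=E$ and boundedness for $i=Q$, but these are interchangeable tools, not a different route.

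There is, however, one concrete step in your write-up that does not go through as stated. From the triangle inequality $\rho_{d}^{i}\left(\theta_{1},\theta_{2}\right)\leq\rho_{*,k}\left(\theta_{1},\theta_{2}\right)+\rho_{*,l}\left(\theta_{1},\theta_{2}\right)$ you conclude that a $\delta$-ball in $\rho_{d}^{i}$ is \emph{contained in} the intersection of the two single-index $\delta$-balls. The inclusion runs the other way: that inequality only shows the intersection of small single-index balls sits inside a small $\rho_{d}^{i}$-ball. Since $\rho_{d}^{i}$ can be small through cancellation while each single-index pseudometric stays large, the supremum over $\left\{ \rho_{d}^{i}<\delta\right\} $ is potentially over a \emph{larger} set than the one Lemma 2 controls, so the reduction fails as written. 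The standard repair --- and what the paper's Lemma 2 actually does under the hood --- is to run the chaining argument against the Euclidean distance $\left|\theta_{1}-\theta_{2}\right|$ (under which $\Theta$ bounded is totally bounded and, by Lemma 1, both single-index pseudometrics are H\"older-dominated), apply the triangle inequality to the process increments $\left|v_{k}\left(\theta_{1}\right)-v_{k}\left(\theta_{2}\right)\right|+\left|v_{l}\left(\theta_{1}\right)-v_{l}\left(\theta_{2}\right)\right|$ rather than to the index sets, and then invoke the fact that asymptotic equicontinuity with respect to \emph{any} totally bounding pseudometric suffices for weak convergence in $\ell^{\infty}\left(\Theta\right)$. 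Your closing point about needing Assumption 2 with a common $s$ across forecasts is well taken and is indeed implicit in the paper.
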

\begin{proof}[Proof of Lemma 3]
	The proof is similar as the one in proving Lemma 4 of \citet{LMW_2005}.
	We need to verify the following three conditions (Theorem 10.2 of
	\citet{Pollard_1990}): 
		\begin{condition} Total boundedness of pseudometric spaces $\left(\Theta,\rho_{d}^{i}\right)$,
		$i\in\left\{ E,Q\right\} $.\end{condition} 
		\begin{condition} Stochastic equicontinuity of $\left\{ v_{k,T_{P}}^{i}\left(\theta\right)-v_{l,T_{P}}^{i}\left(\theta\right):T_{P}\text{\ensuremath{\ge}}1,i\in\left\{ E,Q\right\} \right\} $ \end{condition}
		\begin{condition}
		Finite dimensional (fidi) convergence.
		\end{condition}  
	It can be shown that Conditions 1 and 2 are satisfied by using
	Lemma 1. For Condition 3, we need to show that \[\left(v_{k,T_{P}}^{i}\left(\theta_{1}\right)-v_{l,T_{P}}^{i}\left(\theta_{1}\right),v_{k,T_{P}}^{i}\left(\theta_{2}\right)-v_{l,T_{P}}^{i}\left(\theta_{2}\right),\ldots,v_{k,T_{P}}^{i}\left(\theta_{J}\right)-v_{l,T_{P}}^{i}\left(\theta_{J}\right)\right)\]converge in distribution to $\left(\tilde{d}_{kl}^{i}\left(\theta_{1}\right),\tilde{d}_{kl}^{i}\left(\theta_{2}\right),\ldots,\tilde{d}_{kl}^{i}\left(\theta_{J}\right)\right)$
	for all $\theta_{j}\in\Theta$ and $J\geq1$. For the case of $i=E$,
	this can be first established by using convergence results of sum
	of strong-mixing stationary sequences, such as Corollary 5.1 of \citet{HH_1980}.
	Let $\Delta_{kt}^{E}\left(\theta_{j}\right)=L_{\alpha,\theta_{j}}^{E}\left(X_{kt},Y_{t+h}\right)-E\left[L_{\alpha,\theta_{j}}^{E}\left(X_{kt},Y_{t+h}\right)\right]$,
	$t=T_{R},\ldots,T-h$ and $j=1,\ldots,J$. Then $v_{k,T_{P}}^{i}\left(\theta_{j}\right)=T_{P}^{-1/2}\sum_{t=T_{R}}^{T-h}\Delta_{kt}^{E}\left(\theta_{j}\right)$.
	By Assumption 1, it can be seen that $E\left[\Delta_{kt}^{E}\left(\theta_{1}\right)\right]=0$
	and the mixing coefficients $\alpha\left(n\right)$ satisfy $\sum_{n=1}^{\infty}\left[\alpha\left(n\right)\right]^{\delta/\left(2+\delta\right)}\leq\sum_{n=1}^{\infty}\left[\alpha\left(n\right)\right]^{A}<\infty$.
	Also $E\left[\left|\Delta_{kt}^{E}\left(\theta_{1}\right)\right|^{2+\delta}\right]<2^{2+\delta}\left\Vert \varepsilon_{k,t+h}\right\Vert _{2+\delta}^{2+\delta}\leq\left\Vert \varepsilon_{k,t+h}\right\Vert _{s}^{2+\delta}<\infty$
	by the Lyapunov's inequality and Assumption 2. Thus the conditions
	in Corollary 5.1 of \citet{HH_1980} are satisfied. For $v_{l,T_{P}}^{i}(\theta_{j})$,
	the same conditions also hold. Then by using the Cramer-Wold theorem,
	the result of fidi can be constructed. For the case of $i=Q$, note
	that $\Delta_{kt}^{Q}\left(\theta_{j}\right)=L_{\alpha,\theta_{j}}^{Q}\left(X_{kt},Y_{t+h}\right)-E\left[L_{\alpha,\theta_{j}}^{Q}\left(X_{kt},Y_{t+h}\right)\right]\leq\max\left(\alpha,1-\alpha\right)<\infty$
	is bounded. Thus the results of fidi for this case can be established
	by using similar arguments for the case of $i=E$. \end{proof}
 
\begin{proof}[Proof of Theorem 1] Under the null, if $S_{T_{P},\alpha}^{i}=0$, then
	at least there exists a pair $\left(k,l\right)$ such that $\sup_{\theta\in\Theta}D_{kl,\alpha}^{i}\left(\theta\right)=0$.
	This implies that for the pair $\left(k,l\right)$, $D_{kl,\alpha}^{i}\left(\theta\right)\leq0$
	for all $\theta\in\Theta$ and $D_{kl,\alpha}^{i}\left(\theta\right)=0$
	for some $\theta\in\mathcal{A}_{kl}^{i}$, where $\mathcal{A}_{kl}^{i}=\left\{ \theta\in\Theta,D_{kl,\alpha}^{i}\left(\theta\right)=0\right\} .$
	We need to show that $\sup_{\theta\in\Theta}\sqrt{T_{P}}\hat{D}_{kl,\alpha}^{i}\left(\theta\right)\Rightarrow\sup_{\theta\in\mathcal{A}_{kl}^{i}}\tilde{g}_{kl}^{i}\left(\theta\right)$.
	For $\hat{D}_{kl,\alpha}^{i}\left(\theta\right)$, we can have 
	\begin{eqnarray*}
		\sqrt{T_{P}}\hat{D}_{kl,\alpha}^{i}\left(\theta\right) & = & B_{1,kl}^{i}\left(\theta\right)+B_{2,kl}^{i}\left(\theta\right),\\
		B_{1,kl}^{i}\left(\theta\right) & = & v_{k,T_{P}}^{i}\left(\theta\right)-v_{l,T_{P}}^{i}\left(\theta\right),\\
		B_{2,kl}^{i}\left(\theta\right) & = & \sqrt{T_{P}}\left(E\left[L_{\theta,\alpha}^{i}\left(X_{lt},Y_{t+h}\right)\right]-E\left[L_{\theta,\alpha}^{i}\left(X_{kt},Y_{t+h}\right)\right]\right).
	\end{eqnarray*}
	If Assumptions 1-3 hold, by using Lemma 3 and the continuous mapping
	theorem, it can be shown that $\sup_{\theta\in\mathcal{A}_{kl}^{i}}B_{1,kl}^{i}\left(\theta\right)\Rightarrow\sup_{\theta\in\mathcal{A}_{kl}^{i}}\tilde{g}_{kl}^{i}\left(\theta\right)$.
	By definition of $\mathcal{A}_{kl}^{i}$, $\sup_{\theta\in\mathcal{A}_{kl}^{i}}\sqrt{T_{P}}\hat{D}_{kl,\alpha}^{i}\left(\theta\right)=\sup_{\theta\in\mathcal{A}_{kl}^{i}}\sqrt{T_{P}}B_{1,kl}^{i}\left(\theta\right)$
	and thus $\sup_{\theta\in\mathcal{A}_{kl}^{i}}\sqrt{T_{P}}\hat{D}_{kl,\alpha}^{i}\left(\theta\right)\Rightarrow\sup_{\theta\in\mathcal{A}_{kl}^{i}}\tilde{g}_{kl}^{i}\left(\theta\right)$.
	Now we verify that $\sup_{\theta\in\Theta}\sqrt{T_{P}}\hat{D}_{kl,\alpha}^{i}\left(\theta\right)\Rightarrow\sup_{\theta\in\mathcal{A}_{kl}^{i}}\sqrt{T_{P}}\hat{D}_{kl,\alpha}^{i}\left(\theta\right)$.
	To see this, note that 
	\[
	\sup_{\theta\in\Theta}\sqrt{T_{P}}\hat{D}_{kl,\alpha}^{i}\left(\theta\right)=\sup_{\theta\in\Theta}\left[B_{1,kl}^{i}\left(\theta\right)+B_{2,kl}^{i}\left(\theta\right)\right].
	\]
	If $\mathcal{A}_{kl}^{i}$ is non-empty and the supremum occurs when
	$\theta\in\mathcal{A}_{kl}^{i}\subseteq\Theta$, it is trivial to
	see that 
	\[
	\sup_{\theta\in\Theta}\sqrt{T_{P}}\hat{D}_{kl,\alpha}^{i}\left(\theta\right)=\sup_{\theta\in\mathcal{A}_{kl}^{i}}\sqrt{T_{P}}\hat{D}_{kl,\alpha}^{i}\left(\theta\right)=\sup_{\theta\in\mathcal{A}_{kl}^{i}}B_{1,kl}^{i}\left(\theta\right)\Rightarrow\sup_{\theta\in\mathcal{A}_{kl}^{i}}\tilde{g}_{kl}^{i}\left(\theta\right).
	\]
	If $\mathcal{A}_{kl}^{i}$ is non-empty but the supremum occurs when
	$\theta\in\Theta/\mathcal{A}_{kl}^{i}$, $E\left[L_{\theta,\alpha}^{i}\left(X_{lt},Y_{t+h}\right)\right]-E\left[L_{\theta,\alpha}^{i}\left(X_{kt},Y_{t+h}\right)\right]\neq0$
	and the term $B_{2,kl}^{i}\left(\theta\right)$ will diverge as $T_{P}\rightarrow\infty$
	and $\sup_{\theta\in\Theta}\sqrt{T_{P}}\hat{D}_{kl,\alpha}^{i}\left(\theta\right)$
	will also diverge. By continuous mapping theorem, in this case the
	asymptotic distribution of the test statistic $\hat{S}_{T_{P},\alpha}^{i}$
	will not be affected. Now if $S_{T_{P},\alpha}^{i}<0$, $\mathcal{A}_{kl}^{i}$
	is empty. It implies that for some pairs $\left(k,l\right)$, $D_{kl,\alpha}^{i}\left(\theta\right)<0$
	for all $\theta\in\Theta$ and $B_{2,kl}^{i}\left(\theta\right)\rightarrow-\infty$
	as $T_{P}\rightarrow\infty$. Then $\sup_{\theta\in\Theta}\sqrt{T_{P}}\hat{D}_{kl,\alpha}^{i}\left(\theta\right)\rightarrow-\infty$.
\end{proof}

\begin{proof}[Proof of Theorem 2]
	
	To prove the first part of the theorem, we can use Theorem 2 of \citet{PR_1994}.
	To see this, note that $E\left[\left|\hat{d}_{t,kl}^{i*}\left(\theta\right)\right|^{2+\varrho}\right]<\infty$
	for some $\varrho>0$ holds by Assumptions 2. The condition for mixing
	coefficients holds by Assumption 1. Furthermore, $var\left(\hat{d}_{t,kl}^{i*}\left(\theta\right)\right)+\sum_{m=1}^{\infty}m\left|Cov\left(\hat{d}_{t,kl}^{i*}\left(\theta\right),\hat{d}_{t+m,kl}^{i*}\left(\theta\right)\right)\right|<\infty$
	for all $\theta\in\Theta$. Thus by using Theorem 2 of \citet{PR_1994},
	\begin{eqnarray*}
		\sup_{\omega\in\mathbb{R}}\left|P\left(\sqrt{T_{P}}\left(\hat{D}_{kl,\alpha}^{i*}\left(\theta\right)-\hat{D}_{kl,\alpha}^{i}\left(\theta\right)\right)\leq\omega|W_{T_{R}},\ldots,W_{T-h}\right)\right.\\
		\left.-P\left(\sqrt{T_{P}}\left(\hat{D}_{kl,\alpha}^{i}\left(\theta\right)-D_{kl,\alpha}^{i}\left(\theta\right)\right)\leq\omega\right)\right| & \overset{p.}{\rightarrow} & 0
	\end{eqnarray*}
	for all $\theta\in\Theta$. Then by using continuous mapping theorem,
	it follows that 
	\begin{eqnarray*}
		\sup_{\omega\in\mathbb{R}}\left|P\left(\sqrt{T_{P}}\max_{k\neq l,k,l=1,\ldots,K}\sup_{\theta\in\Theta}\left(\hat{D}_{kl,\alpha}^{i*}\left(\theta\right)-\hat{D}_{kl,\alpha}^{i}\left(\theta\right)\right)\leq\omega|W_{T_{R}},\ldots,W_{T-h}\right)\right.\\
		\left.-P\left(\sqrt{T_{P}}\max_{k\neq l,k,l=1,\ldots,K}\sup_{\theta\in\Theta}\left(\hat{D}_{kl,\alpha}^{i}\left(\theta\right)-D_{kl,\alpha}^{i}\left(\theta\right)\right)\leq\omega\right)\right| & \overset{p.}{\rightarrow} & 0.
	\end{eqnarray*}
	For the second part of Theorem 2, let the asymptotic distribution of the test $\hat{S}_{T_{P},\alpha}^{i}$
	be
	\[ H^{i}\left(\omega\right)=P\left(\max_{\left(k,l\right)\in\mathcal{K}}\sup_{\theta\in\mathcal{A}_{kl}^{i}}\tilde{g}_{kl}^{i}\left(\theta\right)\leq\omega\right)
	\]
	for $i\in\left\{ E,Q\right\} $ and $\omega\in\mathbb{R}$. Since
	the Gaussian process $\tilde{g}_{kl}^{i}\left(\theta\right)$ has
	nonsingular covariance function and is finite, the distribution is
	absolutely continuous in $\omega\in\mathbb{R}$. We would like to
	show that the bootstrap distribution $\hat{H}_{M}^{i}\left(\omega\right)\stackrel{p.}{\rightarrow}H^{i}\left(\omega\right)$
	for all $\omega\in\mathbb{R}$ if (\ref{implicit_constraint}) holds.
	Let $H_{T_{P}}^{i}\left(\omega\right)=P\left(\hat{S}_{T_{P},\alpha}^{i}\leq\omega\right)$
	for $i\in\left\{ E,Q\right\} $. When (\ref{implicit_constraint})
	holds, it implies that $D_{kl,\alpha}^{i}=0$ for $k\neq l$, $k,l=1,\ldots,K$.
	Thus in the special situation, we have 
	\[
	\sup_{\omega\in\mathbb{R}}\left|P\left(\hat{S}_{c,T_{p},\alpha}\leq\omega|W_{T_{R}},\ldots,W_{T-h}\right)-P\left(\hat{S}_{T_{p},\alpha}\leq\omega\right)\right|\overset{p.}{\rightarrow}0
	\]
	as $T_{P}\rightarrow\infty$. Also $\hat{H}_{M}^{i}\left(\omega\right)\overset{p.}{\rightarrow}P\left(\hat{S}_{c,T_{p},\alpha}\leq\omega|W_{T_{R}},\ldots,W_{T-h}\right)$
	as $M\rightarrow\infty$. Therefore 
	\[
	\hat{H}_{M}^{i}\left(\omega\right)\overset{p.}{\rightarrow}P\left(\hat{S}_{T_{p},\alpha}\leq\omega\right)=H_{T_{P}}^{i}\left(\omega\right).
	\]
	for all $\omega\in\mathbb{R}$ as $M\rightarrow\infty$. Finally by
	Theorem 1, $H_{T_{P}}^{i}\left(\omega\right)\overset{p.}{\rightarrow}H^{i}\left(\omega\right)$
	as $T_{P}\rightarrow\infty$. Thus $\hat{H}_{M}^{i}\left(\omega\right)\overset{p.}{\rightarrow}H^{i}\left(\omega\right)$
	as $T_{P}$ and $M\rightarrow\infty$ and it follows that $\hat{h}_{M}^{i}\left(1-\gamma\right)\stackrel{p.}{\rightarrow}h^{i}\left(1-\gamma\right)$.
	Also
	\begin{eqnarray*}
		P\left(\hat{S}_{T_{p},\alpha}\geq\hat{h}_{M}^{i}\left(1-\gamma\right)\right) & = & P\left(\hat{S}_{T_{p},\alpha}\geq h^{i}\left(1-\gamma\right)+o_{p}\left(1\right)\right)\\
		& \rightarrow & P\left(\max_{\left(k,l\right)\in\mathcal{K}}\sup_{\theta\in\mathcal{A}_{kl}^{i}}\tilde{g}_{kl}^{i}\left(\theta\right)\geq h^{i}\left(1-\gamma\right)\right)\\
		& = & \gamma
	\end{eqnarray*}
	as $T_{P}$ and $M\rightarrow\infty$. Finally, if $S_{\alpha}^{i}>0$,
	$\hat{S}_{T_{p},\alpha}\rightarrow\infty$ as $T_{P}\rightarrow\infty$.
	By $\hat{h}_{M}^{i}\left(1-\gamma\right)=O_{p}\left(1\right)$ as
	$M\rightarrow\infty$, $P\left(\hat{S}_{T_{p},\alpha}\geq\hat{h}_{M}^{i}\left(1-\gamma\right)\right)\rightarrow1$
	as $T_{P}$ and $M\rightarrow\infty$.\end{proof}

\subsection{Implementing the stationary bootstrap of \citet{PR_1994}}
Let $W_{t}=\left(X_{1t},X_{2t},Y_{t+h}\right)$. By Assumption 1,
$W_{t}$ is a strictly stationary time series. To ease notations,
with loss of generality, here we will set $t=1,\ldots,T_{P}$ rather
than $t=T_{R},\ldots,T-h$ used in the main context. Let 
\[
B_{t,b}=\left(W_{t},W_{t+1},\ldots,W_{t+b-1}\right)
\]
be a block of $b$ observations from period $t$ to $t+b-1$. Let
$p\in\left[0,1\right]$ be a constant. Let $L_{1},L_{2},\ldots,$
be a sequence of i.i.d. random variables drawn from the geometric
distribution with density function $\left(1-p\right)^{m-1}p$ for
$m=1,2,\ldots$. Let $I_{1},I_{2},\ldots,$ be a sequence of i.i.d.
random variables drawn from the discrete uniform distribution on $\left\{ 1,\ldots,T_{R}\right\} $.
Note that here we require $L_{1},L_{2},\ldots,$ $I_{1},I_{2},\ldots,$
and $W_{t}$, $t=1,\ldots,T_{P}$ should be mutually independent.
Let $W_{1}^{*},W_{2}^{*},\ldots,W_{T_{P}}^{*}$ be a pseudo time series
generated by the stationary bootstrap of \citet{PR_1994}.
The procedures for implementing the stationary bootstrap are as follows. 
\begin{step} Sample a sequence of blocks with random lengths $B_{I_{1},L_{1}},B_{I_{2},L_{2}},\ldots.$ 
\end{step}
\begin{step} Combine the observations in $B_{I_{1},L_{1}},B_{I_{2},L_{2}},\ldots$
together as the pseudo time series $W_{1}^{*},W_{2}^{*},\ldots,W_{T_{P}}^{*}$
. So in the pseudo time series, the first $L_{1}$ observations are
$W_{I_{1}},W_{I_{1+1}},\ldots,W_{I_{1}+L_{1}-1}$, and the subsequent
$L_{2}$ observations (from the $(L_{1}+1)$th observation to the
$(L_{1}+L_{2})$th) are $W_{I_{2}},W_{I_{2+1}},\ldots,W_{I_{2}+L_{2}-1}$
and so on.
\end{step}
\begin{step} If length of the pseudo time series is greater than $T_{P}$, we
eliminate the extra observations to make length of the pseudo time series equal to $T_{P}$.\end{step}
\begin{step} Use the pseudo time series $W_{1}^{*},W_{2}^{*},\ldots,W_{T_{P}}^{*}$ to calculate the test statistic.\end{step}
\begin{step}
Repeat steps 1 to 4 independently $M$ times.
\end{step}
Note that if in a certain block, say $B_{I_{3},L_{3}}$, we have $I_{3}=T_{P}$
and $L_{3}=3$, then we will set $B_{I_{3},L_{3}}=\left(W_{T_{P}},W_{1},W_{2}\right)$.
That is, if in a certain block the last observation $W_{T_{P}}$ is
used, we will have the first observation $W_{1}$ to follow it. 

In the procedures, both the starting point and length of each block
are randomly determined (by $I_{1},I_{2},\ldots$ and $L_{1},L_{2},\ldots$).
The expected length of each block is $1/p$. For the choice of parameter
$p$, \citet{PR_1994} suggest that $p=p_{T_{P}}=\hat{C}_{T_{P}}T_{P}^{-1/3}$,
where $\hat{C}_{T_{P}}$ depends on the spectral density and might
be estimated consistently. Finally, our simulations are conducted
with \texttt{R} and the function we use to implement the stationary bootstrap is \texttt{tsboot}
in package \texttt{boot}.

\subsection{The size-power curves for the simulations}

To compare powers of a test statistic under different alternatives, it is ideal that the test statistic has a correct size, however, this is sometimes not easily achievable. For fairly demonstrating properties of power of the test statistic, we thus need to take the size effect into account. One of the statistical tools for this purpose is the size-power curve \citep{DM_1998}. 

A size-power curve is generated as follows. Let $\hat{p}_{0}$
and $\hat{p}_{1}$ denote the empirical p-values under the least favorable configuration and an alternative. We first calculate the empirical $\gamma-$quantile of $\hat{p}_{0}$: $\hat{q}_{\hat{p}_{0}}(\gamma):=\inf\left\{ x:\#\left\{ \hat{p}_{0}\leq x\right\} /N\geq\gamma\right\} $, where $N$ (here equals to 1000) is the number of simulations. In a simulation study, we say that the test statistic has a good size if $\hat{q}_{\hat{p}_{0}}(\gamma)$ is very similar to $\gamma$ for every $\gamma$. We then calculate the corresponding adjusted empirical power $\#\left\{ \hat{p}_{1}\leq\hat{q}_{\hat{p}_{0}}(\gamma)\right\} /N$.
The size-power curve is a set of points $\left(\gamma,\#\left\{ \hat{p}_{1}\leq\hat{q}_{\hat{p}_{0}}(\gamma)\right\} /N\right)$.
Ideally, in the least favorable configuration, the size-power curve
should be a 45 degree line. For two alternatives, say
$H_{1}$ and $H_{1}^{\prime}$, if $H_{1}$ deviates the null more than $H_{1}^{\prime}$ does, the test statistic should have more power under $H_{1}$ and the size-power curve for $H_{1}$ should lie above the size-power curve for $H_{1}^{\prime}$. For any alternative deviating from the null, ideally its size-power curve should lie above the 45 degree line. On contrary, if the hypothesis is deep in the null, its size-power curve should lie below the 45 degree line.

In Figures \ref{figure4} to \ref{figure6}, we plot size-power curves for models E1, E2 and E3 under different settings and lengths of generated forecasts (left: $T_{P}=100$, middle: $T_{P}=300$ and right: $T_{P}=1000$). 
In each plot, the x-axis is the empirical size and the y-axis is the corresponding adjusted empirical power. For model E1, Figure \ref{figure4} show
that the size-power curves for the two better competing forecasts $\mu_{t+1|t}+e^{Z}(\alpha)$
and $\mu_{t+1|t}+e^{Z}(\alpha)+\varsigma(\alpha)Z_{2t}$ consistently lie above the 45 degree line over different empirical sizes. As the length of generated forecast $T_{P}$ increases,
the size-power curves also shrink toward to the upper-left corner of the
plot, which suggests that power of the test statistic increases with $T_{P}$ after adjusted for the size effect. For the three worse competing forecasts, their size-power curves consistently lie below the 45 degree line. 

For model E2, as can be seen from Figure \ref{figure5}, in all settings, the size-power curves all lie above the 45 degree line. It also can be seen that the size-power curves for low $\beta_{2}$ (0.1) and low correlation between $W_{1t}$ and $W_{2t}$ (0.3) obviously lie below those for the other settings, which suggests that the proposed test statistic has a lower power under the two situations. As $T_{P}$ increases, power of the proposed test statistic for all settings becomes obviously better. For model E3, as can
be seen from Figure \ref{figure6}, all the size-power curves lie above the 45 degree line and shrink toward to the upper-left corner of the plot as $T_{p}$ increases, which suggest that power of the proposed test statistic gets improved as $T_{P}$ increases.

The size-power curve plots for models Q1 and Q2 are shown in Figures \ref{figure7} and \ref{figure8}. For model Q1, the size-power curves for the two better competing forecasts $\mu_{t+1|t}+\varPhi^{-1}\left(\alpha\right)$
and $\mu_{t+1|t}+\varPhi^{-1}\left(\alpha\right)+Z_{2t}$ consistently lie above the 45 degree line over
different empirical sizes. As the forecast length $T_{P}$ increases,
the size-power curves also shrink toward to the upper-left corner of the
plot, suggesting that power of the proposed test statistic increases with $T_{P}$ after adjusted for the size effect. For the three worse competing forecasts, their size-power curves consistently lie below the 45 degree line. 

For model Q2, the size-power curves for settings (2) to (5) are all lie above the 45 degree line and shrink toward to the upper-left corner of the plot, suggesting that power of the proposed test statistic increases with $T_{P}$. It also can be seen that the size-power curve for setting (2) obviously lies below those for the other three settings, which suggests that the proposed test statistic has a lower power for this case.

\subsection{The consistent loss function associated with the logistic regression estimation}
A interesting case of $\phi(x)$ of the consistent loss function for the $\alpha-$expectile
forecast in (3) is  
\begin{equation}
\phi\left(x\right)=\phi_{1}\left(x\right):=x\log x+\left(1-x\right)\log\left(1-x\right).\label{logistic}
\end{equation}
for $x\in\left[0,1\right]$. It is easy to see that $\lim_{x\rightarrow0}\phi_{1}\left(x\right)=\lim_{x\rightarrow1}\phi_{1}\left(x\right)=0$
and $\phi_{1}^{\prime\prime}\left(x\right)>0$ for $x\in\left[0,1\right]$. Let $L_{\alpha}^{E,1}\left(x,y\right)$ denote the consistent loss function associated with $\phi_{1}\left(x\right)$. 
Assume $Y\in\left\{ 0,1\right\} $. It can be shown that when $\alpha=1/2$,
the consistent loss function $L_{1/2}^{E,1}\left(x,Y\right)$ is proportional
to $-\log x$ if $Y=1$ and to $-\log\left(1-x\right)$ if $Y=0$.
To see this, note that by using the result in p.511 of \citet{EGJK_2016}, we
can have 
\[
L_{1/2}^{E,1}\left(x,Y\right)=\begin{cases}
\frac{1}{2}\left(\phi_{1}\left(1\right)+x\phi_{1}^{\prime}\left(x\right)-\phi_{1}\left(x\right)-\phi_{1}^{\prime}\left(x\right)\right) & \text{ if }Y=1,\\
\frac{1}{2}\left(x\phi_{1}^{\prime}\left(x\right)-\phi_{1}\left(x\right)\right) & \text{ if }Y=0.
\end{cases}
\]
If we let
\begin{eqnarray*}
	\phi_{1}\left(1\right)+x\phi_{1}^{\prime}\left(x\right)-\phi_{1}\left(x\right)-\phi_{1}^{\prime}\left(x\right) & = & -\log\left(x\right),\\
	x\phi_{1}^{\prime}\left(x\right)-\phi_{1}\left(x\right) & = & -\log\left(1-x\right),
\end{eqnarray*}
it yields $\phi_{1}^{\prime}\left(x\right)=\log\left(x/\left(1-x\right)\right)$ if $\lim_{x\rightarrow 1}\phi_{1}\left(x\right)=0$. It can be verified that $\phi_{1}\left(x\right)=x\log x+\left(1-x\right)\log\left(1-x\right)$.
The expectation of $L_{1/2}^{E,1}\left(x,Y\right)$ is a convex function
of $x$ and is related to the negative log likelihood in the logistic
regression estimation. Minimizing the expectation of $L_{1/2}^{E,1}\left(x,Y\right)$
yields the success probability (expectation of $Y$).

\subsection{Some mathematical derivations for Section 4.1.1}
The subsection provides some mathematic derivations for results used in Section 4.1.1. Suppose the data generating process for $Y_{t+1}$ is (\ref{sim_mse_expb}). The benchmark forecast $X_{1t}=c_{1}+b_{1}W_{1t}$ and the competing forecast $X_{2t}=c_{2}+b_{2}W_{2t}$. It can be shown that 
\begin{eqnarray*}
	E\left[\left(Y_{t+1}-X_{1t}\right)^{2}\right] & = & E\left[Y_{t+1}^{2}\right]+c_{1}^{2}+\left(b_{1}^{2}-2b_{1}\beta_{1}\right)E\left[W_{1t}^{2}\right]-2c_{1}\gamma,\\
	E\left[\left(Y_{t+1}-X_{2t}\right)^{2}\right] & = & E\left[Y_{t+1}^{2}\right]+c_{2}^{2}+\left(b_{2}^{2}-2b_{2}\beta_{2}\right)E\left[W_{2t}^{2}\right]-2c_{2}\gamma.
\end{eqnarray*}
Thus $E\left[\left(Y_{t+1}-X_{1t}\right)^{2}\right]=E\left[\left(Y_{t+1}-X_{2t}\right)^{2}\right]$ implies that  
\begin{equation}
c_{1}^{2}+\left(b_{1}^{2}-2b_{1}\beta_{1}\right)\sigma_{W_{1}}^{2}-2c_{1}\gamma=c_{2}^{2}+\left(b_{2}^{2}-2b_{2}\beta_{2}\right)\sigma_{W_{2}}^{2}-2c_{2}\gamma.
\label{eq_mse}
\end{equation}
It is not difficult to see that if we set $c_{1}=c_{2}=2\gamma$, $b_{1}=2\beta_{1}$
and $b_{2}=2\beta_{2}$, equality of (\ref{eq_mse}) will hold. 

Now consider
the exponential Bregman loss function 
\[
\frac{1}{a^{2}}\left[\exp\left(ay\right)-\exp\left(ax\right)\right]-\frac{1}{a}\exp\left(ax\right)\left(y-x\right).
\]
The difference between expectations of the exponential Bregman loss
function for $X_{1t}$ and $X_{2t}$ is 
\[
\frac{1}{a^{2}}E\left[\exp\left(aX_{2t}\right)-\exp\left(aX_{1t}\right)\right]-\frac{1}{a}\left(E\left[\exp\left(aX_{1t}\right)\left(Y-X_{1t}\right)\right]-E\left[\exp\left(aX_{2t}\right)\left(Y-X_{2t}\right)\right]\right),
\]
where 
\begin{eqnarray*}
	E\left[\exp\left(aX_{2t}\right)\right] & = & \exp\left(ac_{1}+\frac{a^{2}b_{1}^{2}\sigma_{W_{1}}^{2}}{2}\right),\\
	E\left[\exp\left(aX_{1t}\right)\right] & = & \exp\left(ac_{2}+\frac{a^{2}b_{2}^{2}\sigma_{W_{2}}^{2}}{2}\right),\\
	E\left[\exp\left(aX_{1t}\right)Y\right] & = & \exp\left(ac_{1}+\frac{a^{2}b_{1}^{2}\sigma_{W_{1}}^{2}}{2}\right)\left(\gamma+a\beta_{1}b_{1}\sigma_{W_{1}}^{2}\right),\\
	E\left[\exp\left(aX_{2t}\right)Y\right] & = & \exp\left(ac_{2}+\frac{a^{2}b_{2}^{2}\sigma_{W_{2}}^{2}}{2}\right)\left(\gamma+a\beta_{2}b_{2}\sigma_{W_{2}}^{2}\right),\\
	E\left[\exp\left(aX_{1t}\right)X_{1t}\right] & = & \exp\left(ac_{1}+\frac{a^{2}\beta_{1}^{2}\sigma_{W_{1}}^{2}}{2}\right)\left(c_{1}+ab_{1}^{2}\sigma_{W_{1}}^{2}\right),\\
	E\left[\exp\left(aX_{2t}\right)X_{2t}\right] & = & \exp\left(ac_{2}+\frac{a^{2}b_{2}^{2}\sigma_{W_{2}}^{2}}{2}\right)\left(c_{2}+ab_{2}^{2}\sigma_{W_{2}}^{2}\right).
\end{eqnarray*}
Now consider the extremal consistent loss function for the $\alpha-$expectile,
\[
L_{\theta,\alpha}^{E}\left(x,y\right)=\left|1\left\{ y<x\right\} -\alpha\right|\left[\left(y-\theta\right)_{+}-\left(x-\theta\right)_{+}-1\left\{ \theta<x\right\} \left(y-x\right)\right].
\]
Here we fix $\alpha=0.5$ for the conditional expectation forecast.
Then
\begin{small} 
\begin{eqnarray*}
	E\left[L_{\theta,0.5}^{E}\left(X_{1t},Y_{t+1}\right)\right]-E\left[L_{\theta,0.5}^{E}\left(X_{2t},Y_{t+1}\right)\right] & = & 0.5\left(E\left[1\left\{ \theta<X_{2t}\right\} \left(Y_{t+1}-\theta\right)\right]-E\left[1\left\{ \theta<X_{1t}\right\} \left(Y_{t+1}-\theta\right)\right]\right),
\end{eqnarray*}
\end{small}
where 
\begin{eqnarray*}
	E\left[1\left\{ \theta<X_{2t}\right\} \left(Y_{t+1}-\theta\right)\right] & = & \left(\gamma-\theta\right)\left(1-\varPhi\left(\frac{\theta-c_{2}}{b_{2}\sigma_{W_{2}}}\right)\right)+\beta_{2}\frac{1}{\sqrt{2\pi}\sigma_{W_{2}}}\int_{\frac{\theta-c_{2}}{b_{2}}}^{\infty}w\exp\left(-\frac{w^{2}}{2\sigma_{W_{2}}^{2}}\right)dw,\\
	E\left[1\left\{ \theta<X_{1t}\right\} \left(Y_{t+1}-\theta\right)\right] & = & \left(\gamma-\theta\right)\left(1-\varPhi\left(\frac{\theta-c_{1}}{b_{1}\sigma_{W_{1}}}\right)\right)+\beta_{1}\frac{1}{\sqrt{2\pi}\sigma_{W_{1}}}\int_{\frac{\theta-c_{1}}{b_{1}}}^{\infty}w\exp\left(-\frac{w^{2}}{2\sigma_{W_{1}}^{2}}\right)dw,
\end{eqnarray*}
and $\varPhi(.)$ is the cumulative distribution function of a standard normal random variable.  
\clearpage

\begin{figure}[ht]
	\begin{center}
		\mbox{
			\subfigure{\includegraphics[height=5.5cm,width=5.5cm]{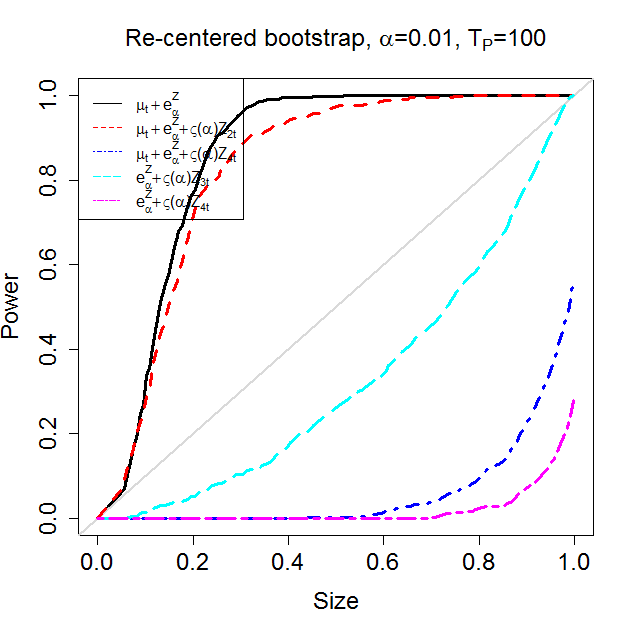}}
			\subfigure{\includegraphics[height=5.5cm,width=5.5cm]{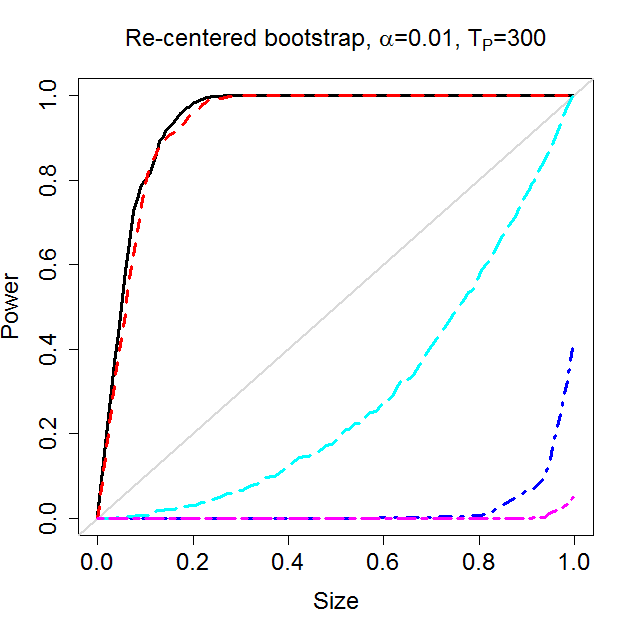}}
			\subfigure{\includegraphics[height=5.5cm,width=5.5cm]{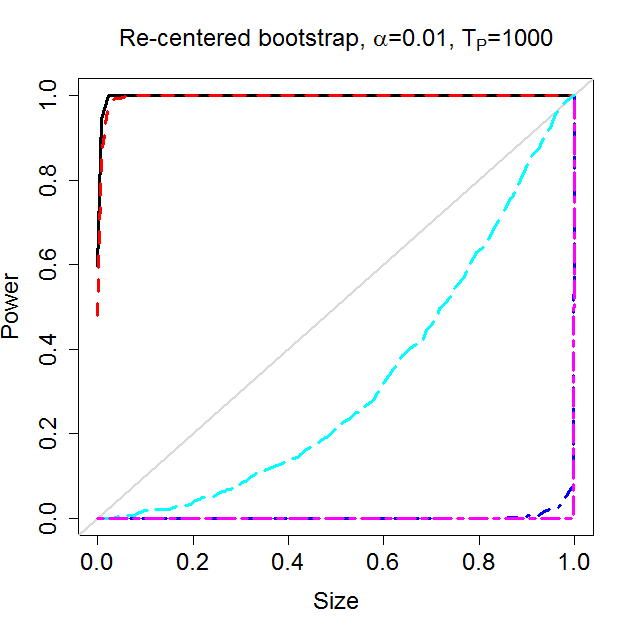}}
		} 
		\mbox{
			\subfigure{\includegraphics[height=5.5cm,width=5.5cm]{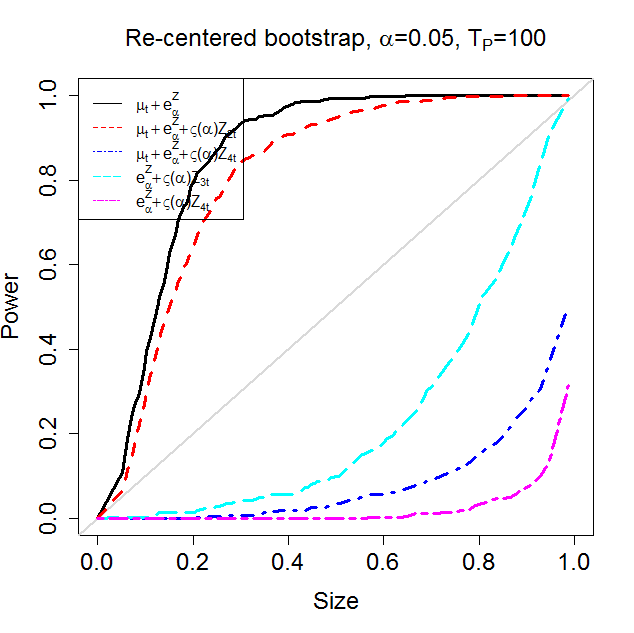}}
			\subfigure{\includegraphics[height=5.5cm,width=5.5cm]{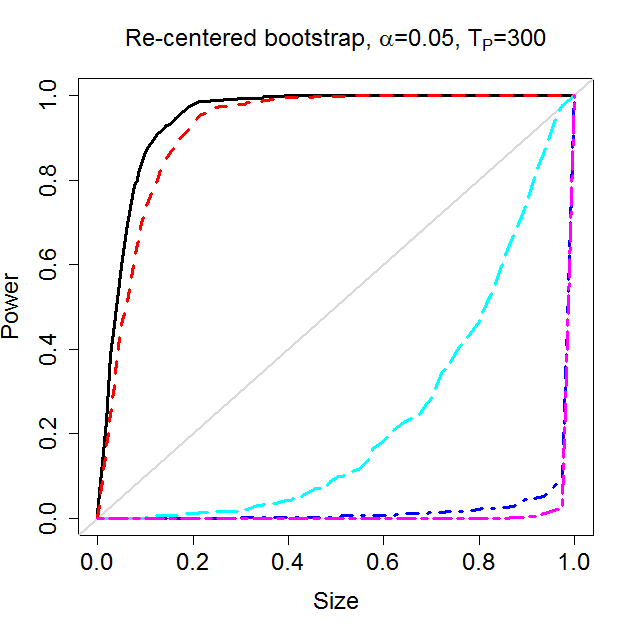}}
			\subfigure{\includegraphics[height=5.5cm,width=5.5cm]{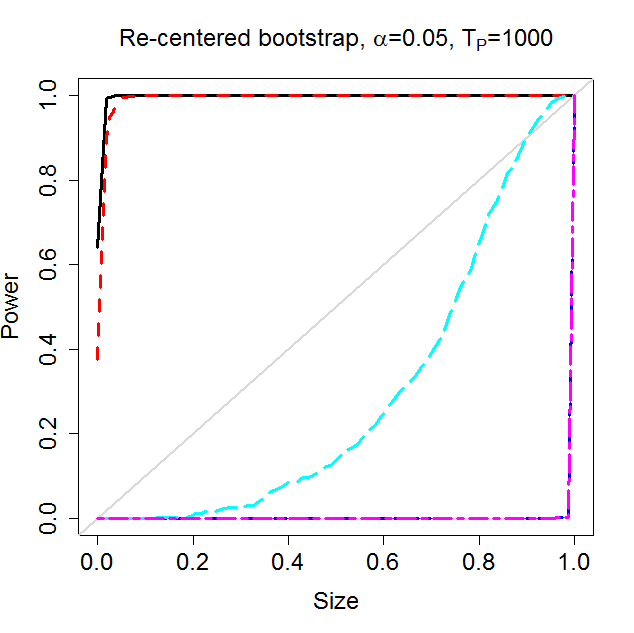}}
		} 
		\mbox{
			\subfigure{\includegraphics[height=5.5cm,width=5.5cm]{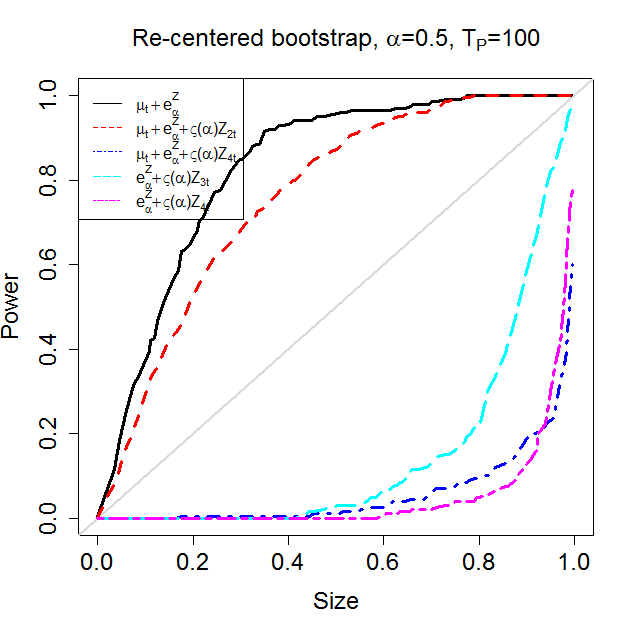}}
			\subfigure{\includegraphics[height=5.5cm,width=5.5cm]{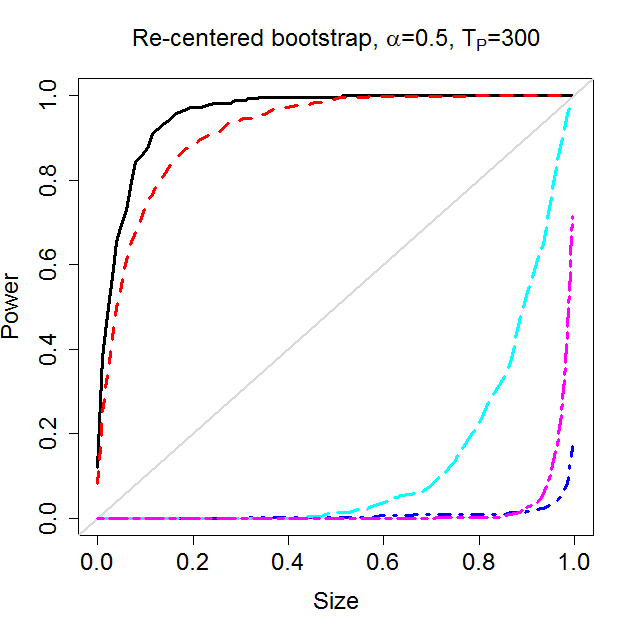}}
			\subfigure{\includegraphics[height=5.5cm,width=5.5cm]{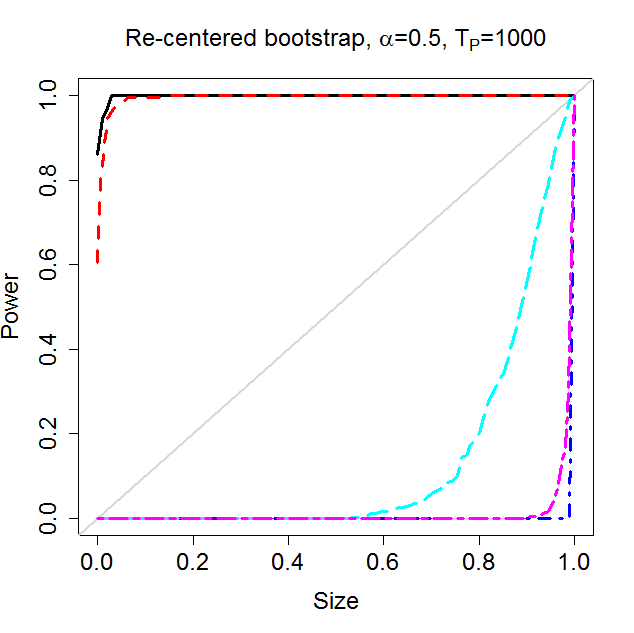}}
		} 
	\end{center}
	\caption{The figure shows the size-power curve (Davidson and MacKinnon, 1998) for simulation of model E1 under different settings. Upper panel: $\alpha=0.01$; middle panel: $\alpha=0.05$ and bottom panel: $\alpha=0.5$. Left: $T_{P}=100$; middle: $T_{P}=300$ and right: $T_{P}=1000$. In each plot, the x-axis is the empirical size and the y-axis is the corresponding adjusted empirical power.}
	\label{figure4}
\end{figure}

\begin{figure}[ht]
	\begin{center}
		\mbox{
			\subfigure{\includegraphics[height=5.5cm,width=5.5cm]{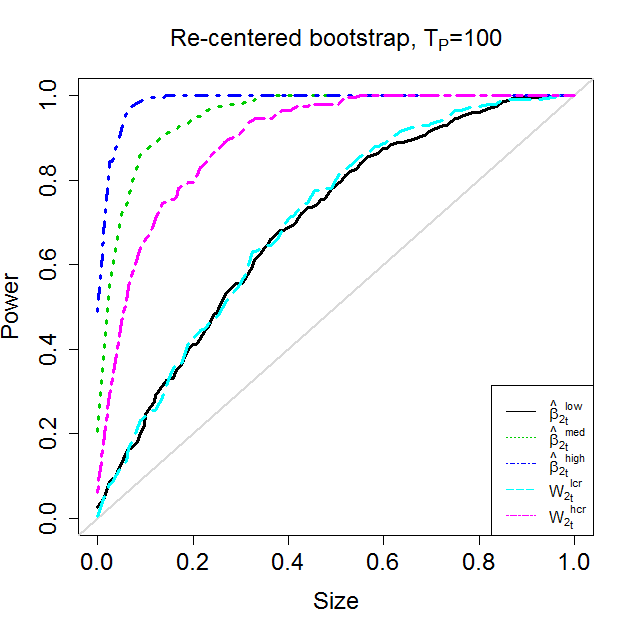}}
			\subfigure{\includegraphics[height=5.5cm,width=5.5cm]{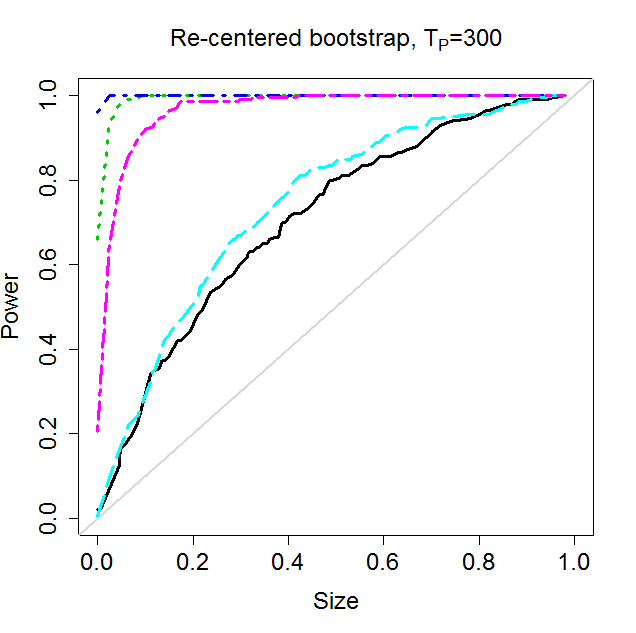}}
			\subfigure{\includegraphics[height=5.5cm,width=5.5cm]{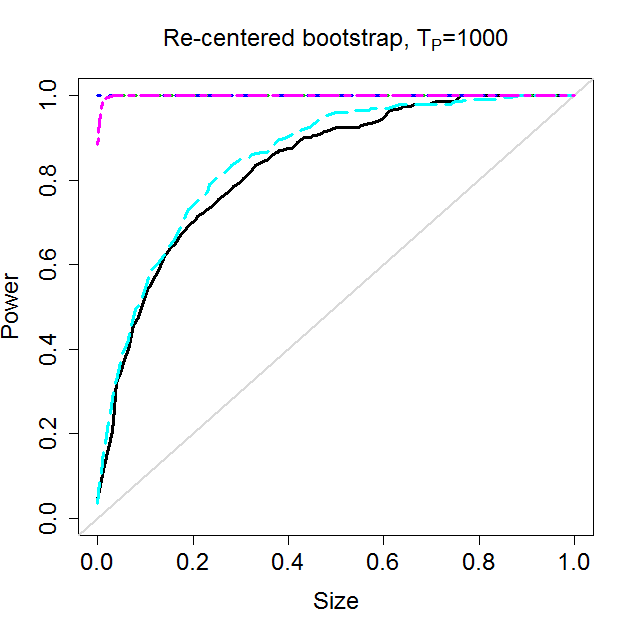}}
		} 
	\end{center}
	\caption{The figure shows the size-power curve (Davidson and MacKinnon, 1998) for simulation of model E2 under different settings. Left: $T_{P}=100$; middle: $T_{P}=300$ and right: $T_{P}=1000$. In each plot, the x-axis is the empirical size and the y-axis is the corresponding adjusted empirical power.}
	\label{figure5}
\end{figure}

\begin{figure}[ht]
	\begin{center}
		\mbox{
			\subfigure{\includegraphics[height=5.5cm,width=5.5cm]{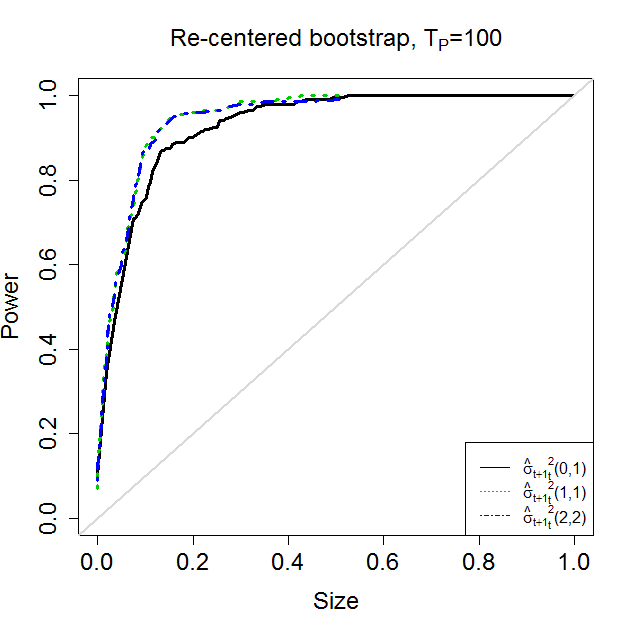}}
			\subfigure{\includegraphics[height=5.5cm,width=5.5cm]{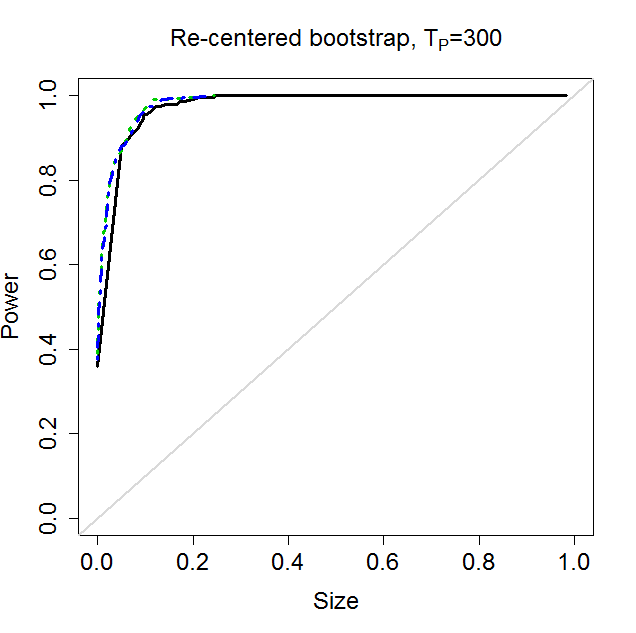}}
			\subfigure{\includegraphics[height=5.5cm,width=5.5cm]{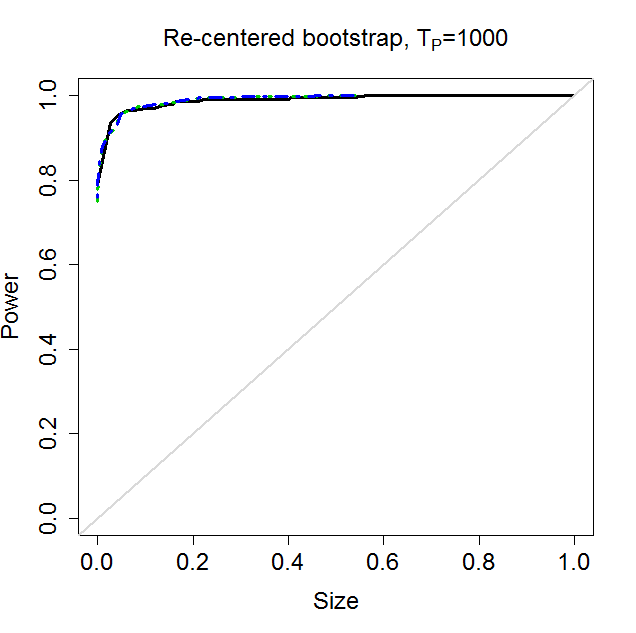}}
		} 
	\end{center}
	\caption{The figure shows the size-power curve (Davidson and MacKinnon, 1998) for simulation of model E3 under different settings. Left: $T_{P}=100$; middle: $T_{P}=300$ and right: $T_{P}=1000$. In each plot, the x-axis is the empirical size and the y-axis is the corresponding adjusted empirical power.}
	\label{figure6}
\end{figure}

\begin{figure}[ht]
	\begin{center}
		\mbox{
			\subfigure{\includegraphics[height=5.5cm,width=5.5cm]{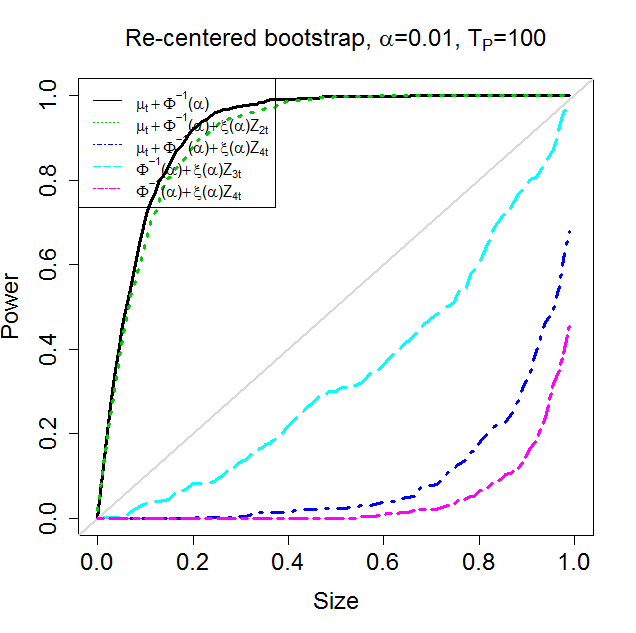}}
			\subfigure{\includegraphics[height=5.5cm,width=5.5cm]{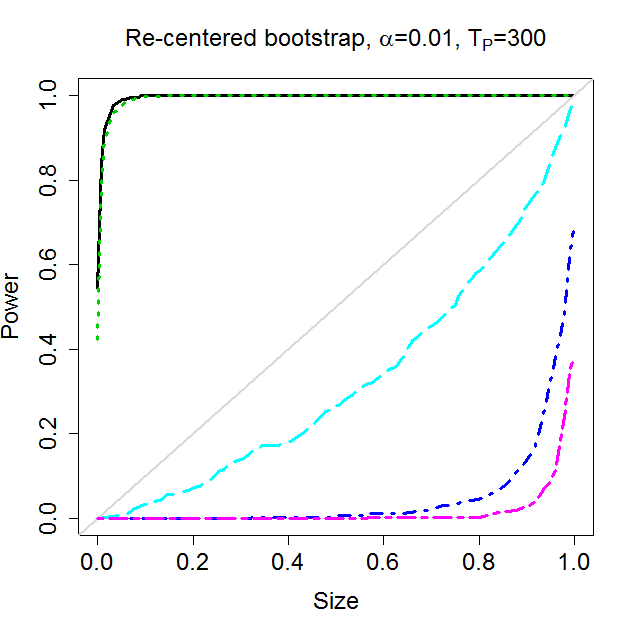}}
			\subfigure{\includegraphics[height=5.5cm,width=5.5cm]{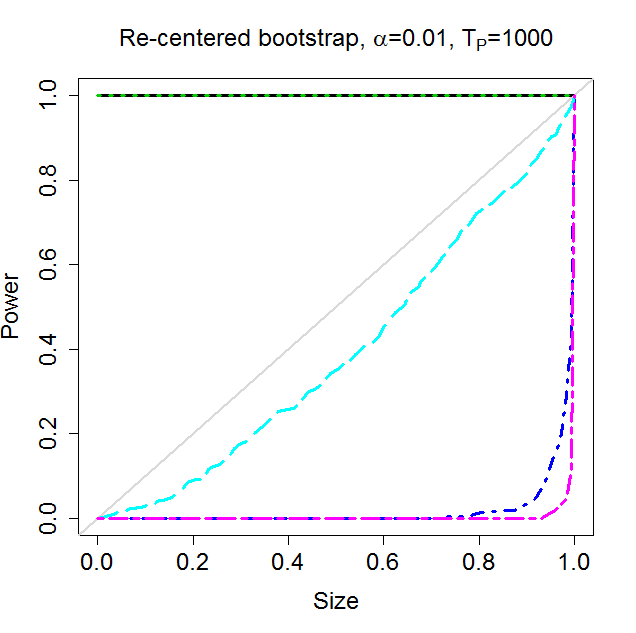}}
		} 
		\mbox{
			\subfigure{\includegraphics[height=5.5cm,width=5.5cm]{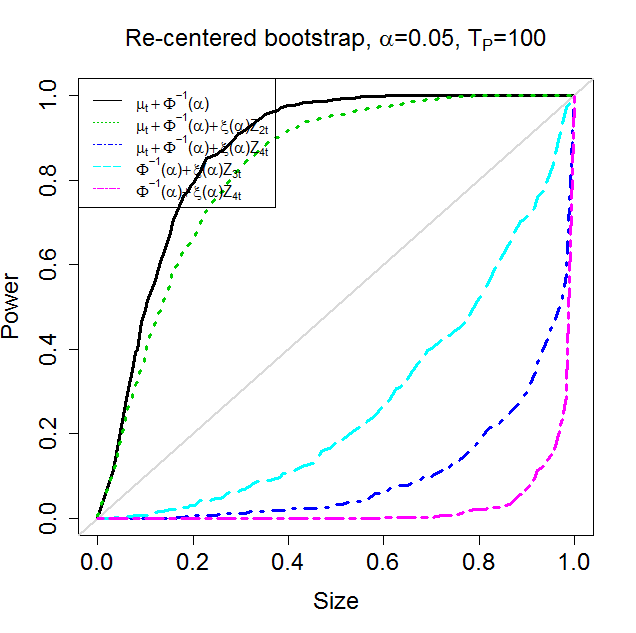}}
			\subfigure{\includegraphics[height=5.5cm,width=5.5cm]{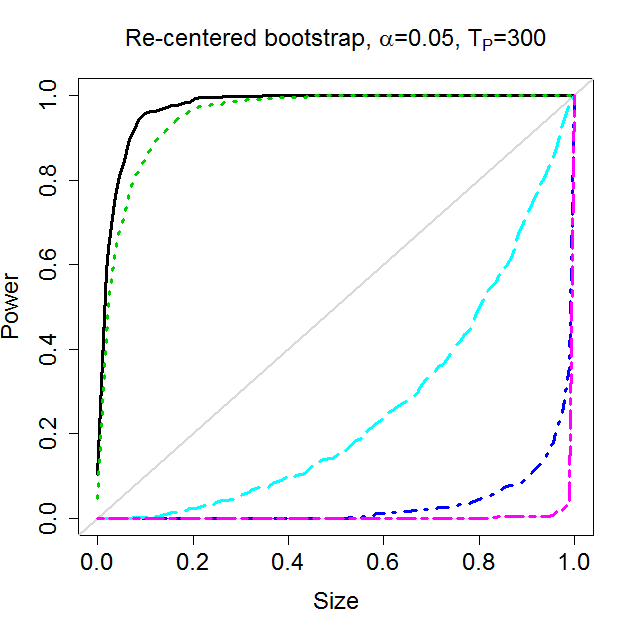}}
			\subfigure{\includegraphics[height=5.5cm,width=5.5cm]{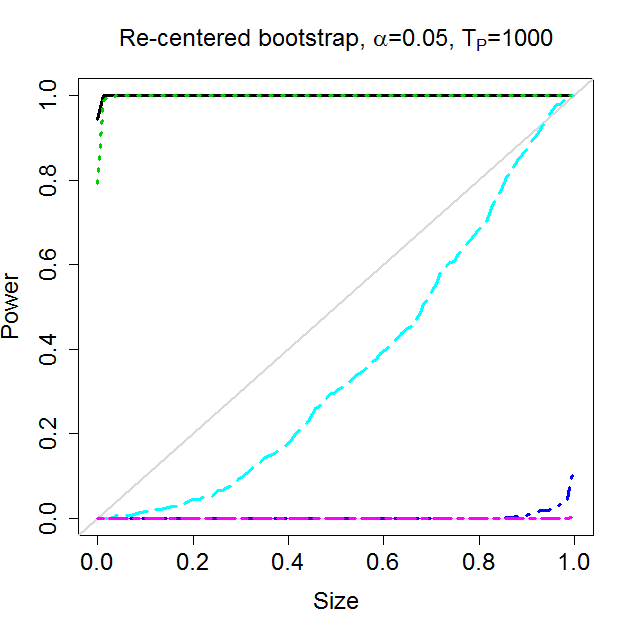}}
		} 
		\mbox{
			\subfigure{\includegraphics[height=5.5cm,width=5.5cm]{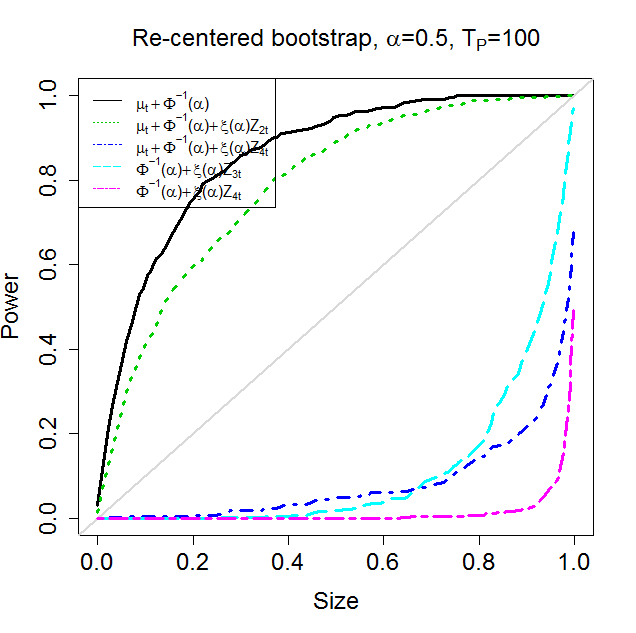}}
			\subfigure{\includegraphics[height=5.5cm,width=5.5cm]{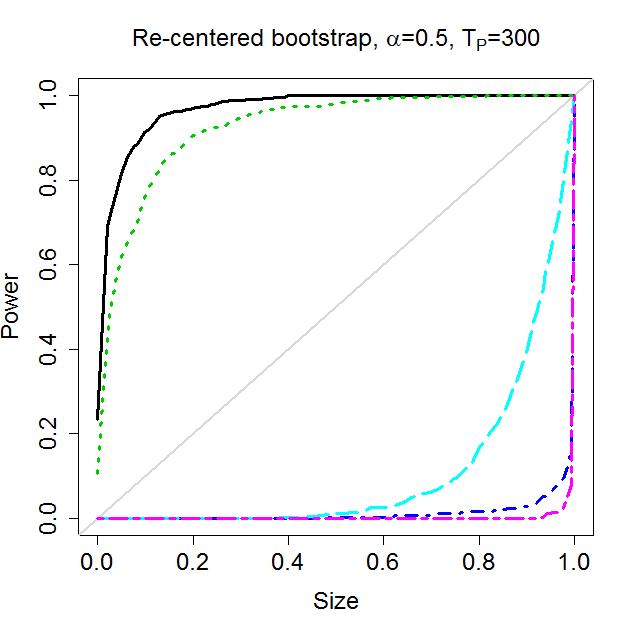}}
			\subfigure{\includegraphics[height=5.5cm,width=5.5cm]{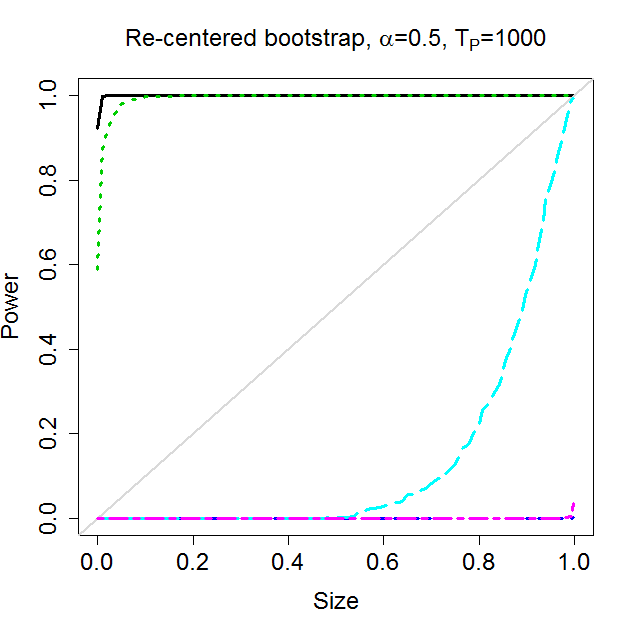}}
		}
	\end{center}
	\caption{The figure shows the size-power curve (Davidson and MacKinnon, 1998) for simulation of model Q1 under different settings. Upper panel: $\alpha=0.01$; middle panel: $\alpha=0.05$ and bottom panel: $\alpha=0.5$. Left: $T_{P}=100$; middle: $T_{P}=300$ and right: $T_{P}=1000$. In each plot, the x-axis is the empirical size and the y-axis is the corresponding adjusted empirical power.}
	\label{figure7}
\end{figure}

\begin{figure}[ht]
	\begin{center}
		\mbox{
			\subfigure{\includegraphics[height=5.5cm,width=5.5cm]{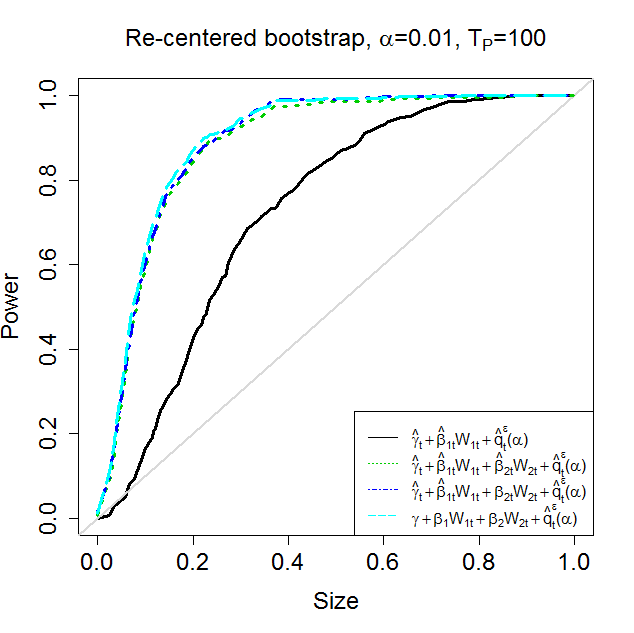}}
			\subfigure{\includegraphics[height=5.5cm,width=5.5cm]{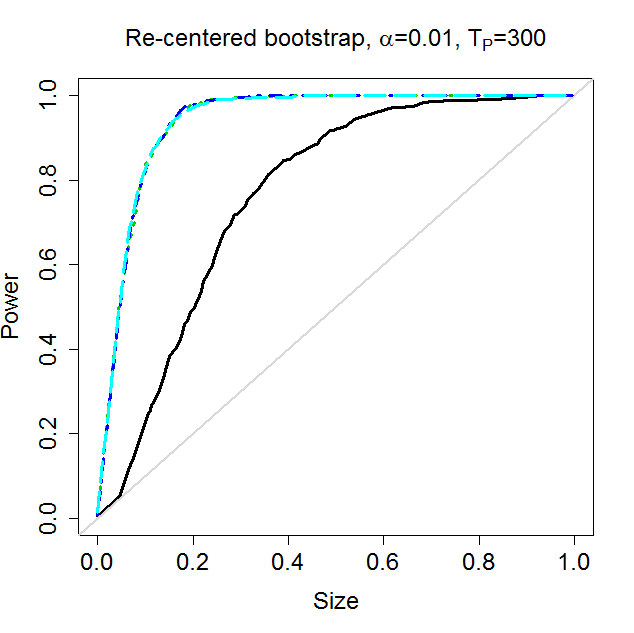}}
			\subfigure{\includegraphics[height=5.5cm,width=5.5cm]{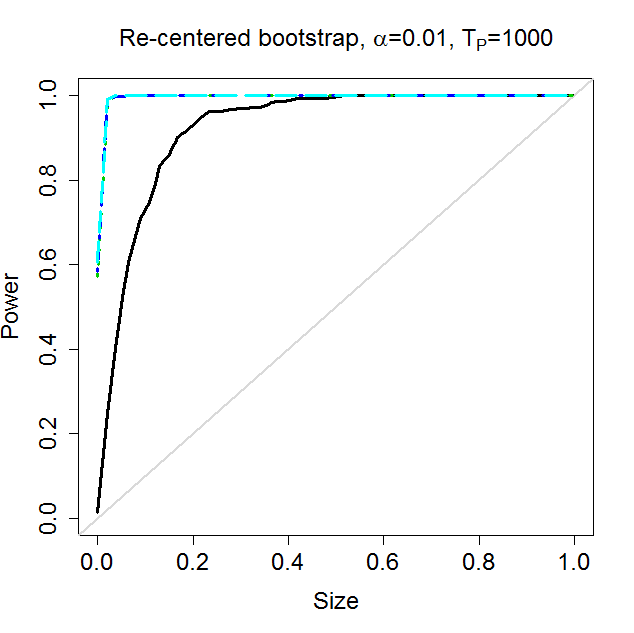}}
		} 
		
		\mbox{
			\subfigure{\includegraphics[height=5.5cm,width=5.5cm]{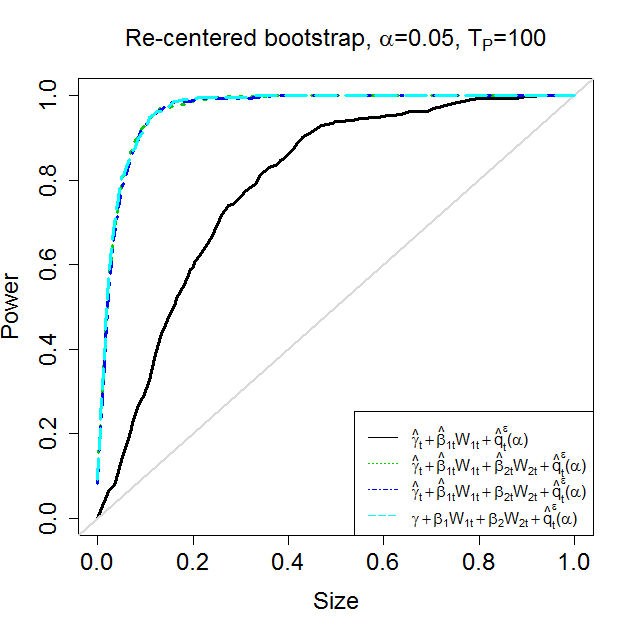}}
			\subfigure{\includegraphics[height=5.5cm,width=5.5cm]{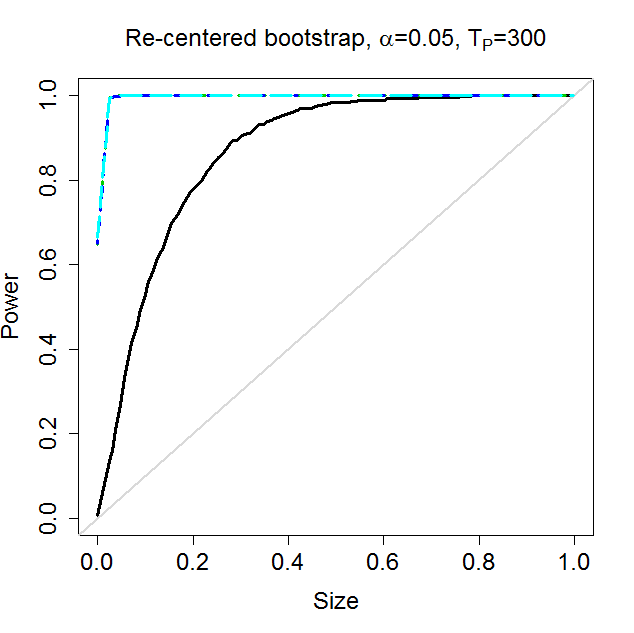}}
			\subfigure{\includegraphics[height=5.5cm,width=5.5cm]{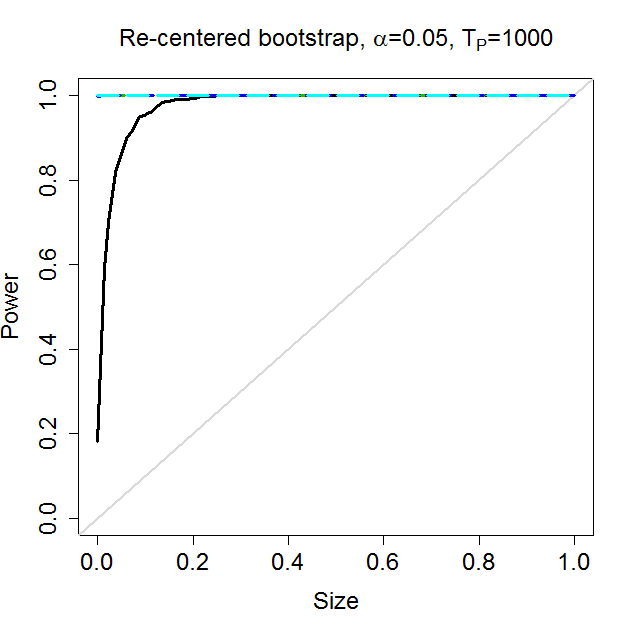}}
		} 
		
		\mbox{
			\subfigure{\includegraphics[height=5.5cm,width=5.5cm]{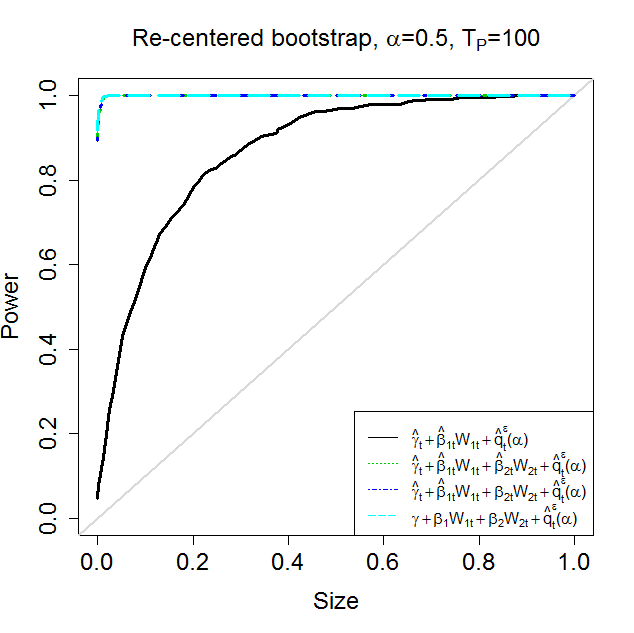}}
			\subfigure{\includegraphics[height=5.5cm,width=5.5cm]{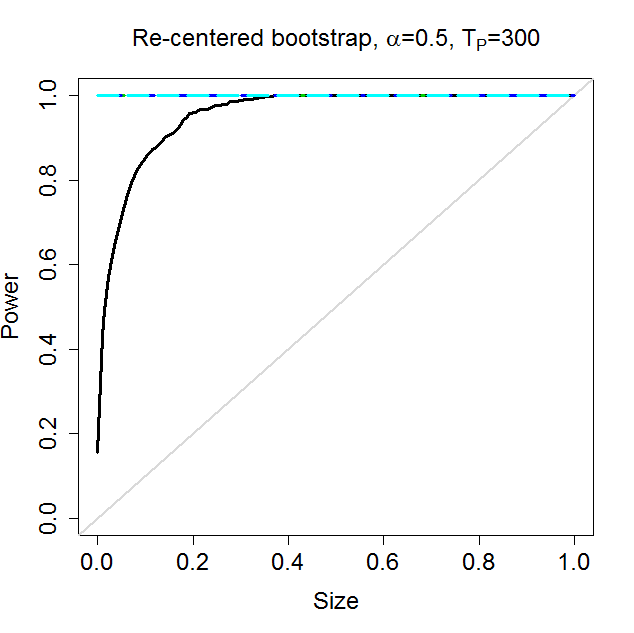}}
			\subfigure{\includegraphics[height=5.5cm,width=5.5cm]{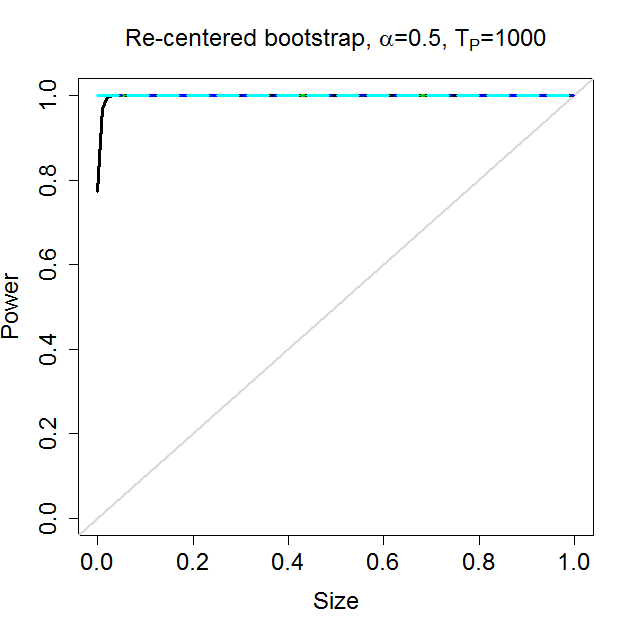}}
		}
		
	\end{center}
	\caption{The figure shows the size-power curve (Davidson and MacKinnon, 1998) for simulation of model Q2 under different settings. Upper panel: $\alpha=0.01$; middle panel: $\alpha=0.05$ and bottom panel: $\alpha=0.5$. Left: $T_{P}=100$; middle: $T_{P}=300$ and right: $T_{P}=1000$. In each plot, the x-axis is the empirical size and the y-axis is the corresponding adjusted empirical power.}
	\label{figure8}
\end{figure}
\end{document}